\definecolor{darkred}{RGB}{144,0,0}
\definecolor{darkblue}{RGB}{0,0,144}
\definecolor{lightyellow}{RGB}{255,255,224}
\newcolumntype{L}[1]{>{\raggedright\let\newline\\\arraybackslash\hspace{0pt}}m{#1}}
\newcolumntype{C}[1]{>{\centering\let\newline\\\arraybackslash\hspace{0pt}}m{#1}}
\newcolumntype{R}[1]{>{\raggedleft\let\newline\\\arraybackslash\hspace{0pt}}m{#1}}
\theoremstyle{definition} 
\theoremstyle{remark} \newtheorem{remark}{Remark}
\theoremstyle{plain} \newtheorem{theorem}{Theorem}
\theoremstyle{plain} 
\theoremstyle{plain} \newtheorem{lemma}{Lemma}
\theoremstyle{plain} 
\theoremstyle{plain} 
\theoremstyle{plain} \newtheorem{assumption}{Assumption}
\theoremstyle{plain} \newtheorem*{assumption*}{\assumptionletter}
\providecommand{\assumptionletter}{}
\newcounter{bean}
\title{\bf{Robust Two-Sample Inference\\
under Serial Dependence}}
\author{Ulrich Hounyo\footnote{khounyo@albany.edu}
\\ University at Albany, SUNY
\\
\\ Min Seong Kim\footnote{min\_seong.kim@uconn.edu} 
\\ University of Connecticut }
\begin{document}
\newgeometry{margin=1in}
\maketitle



\begin{abstract}
\onehalfspacing
We propose robust inference for two-sample comparison with time-series data under serial dependence and heterogeneous long-run variances. Standardizing with orthonormal-basis (series HAR) projections, we develop two-sample $t$-tests and, for joint hypotheses on a vector of means, a series HAR Wald statistic. Because the increasing-$K$ chi-square limit tends to over-reject, we propose Welch-type fixed-$K$ $t$- and $F$-approximations with adjusted degrees of freedom. We further develop a series HAR wild bootstrap that reproduces serial dependence without resampling blocks. The framework nests difference-in-differences and Diebold--Mariano testing. Simulations and two empirical applications show accurate size control and competitive power, a tuning-free alternative to cluster-based inference.
\noindent 
 \\
\medskip \\
\noindent \textbf{Keywords}: series HAR variance estimation; series HAR wild bootstrap; Student's \(t\)- and \(F\)-approximations
\medskip \\
\noindent \textbf{JEL codes}: C12, C14, C22.
\end{abstract}

\newpage

\onehalfspacing

\section{Introduction}
This paper proposes robust inference procedures for two-sample comparison with time-series data. Comparing two population means is a fundamental problem in data analysis. The classical two-sample $t$-test is perhaps the best-known example, typically introduced in introductory statistics courses. 
The classical two-sample $t$-test and its widely used extension due to \citet{Welch1947} both assume that the underlying observations are independent. This assumption is frequently violated in economic and financial applications, where data often exhibit serial dependence. Because serial dependence is ubiquitous in time series, two-sample $t$-tests that ignore it are generally invalid and can suffer from substantial size distortions.

We propose a general framework for conducting accurate and easy-to-use inference on the difference in population means---whether a scalar mean or a vector of means---when the data may display heterogeneity and complex, unknown forms of dependence. The framework builds on series-based heteroskedasticity and autocorrelation robust (HAR) long-run variance (LRV) estimators. Specifically, we form analogues of the classical two-sample $t$-test and Welch's test---together with their multivariate counterparts for joint hypotheses---by replacing the sample variances with the series HAR estimators. Implementation is straightforward. Each series is projected onto a set of orthonormal basis functions, and the sample variances of the resulting projection coefficients serve as LRV estimators.

Robust inference that accounts for heteroskedasticity and autocorrelation is standard practice in time-series econometrics and applied statistics. The most common approach estimates the LRV using kernel methods \citep{Newey_West1987, Andrews1991, Kiefer_Vogelsang2002,Kiefer_Vogelsang2005,Sun_etc2008}. The series HAR approach considered in this paper is recently introduced in the econometrics literature by \citet{Phillips2005}, \citet{Mueller2007}, and \citet{Sun2011,Sun2013} as an alternative to kernel-based methods, and further studied in various settings by \citet{Chen_Liao_Sun2014}, \citet{Lazarus_etc2018}, and \citet{Rho_Vogelsang2021}. Extending existing HAR methods to the two-sample setting is nontrivial, because the groups often exhibit heterogeneous LRVs, in which case the standard HAR inference is not directly applicable. 

As in classical two-sample inference, we consider two cases. First, we study the case in which the two time series have equal LRVs. In this setting, we adopt the fixed-$K$ asymptotics, under which the numbers of basis functions, $K_1$ and $K_2$, are held fixed as $T_1$ and $T_2$ grow. We show that our $t$- and scaled Wald statistics, which exploit the equality of LRVs, converge to the Student's $t$- and $F$-distributions, respectively, under $H_0$. 

We next consider the empirically more relevant case in which the two time series have unequal LRVs. In this setting, we introduce series HAR $t$- and Wald statistics that accommodate heterogeneous LRVs. Under fixed-$K$ asymptotics, however, these statistics are no longer asymptotically pivotal, so asymptotically valid Student's $t$- and $F$-tests are unavailable. 
To address this, we pursue two approaches.  

First, in the spirit of Welch's $t$-test, we propose a Student’s $t$-approximation---and its multivariate extension, an $F$-approximation---with degrees of freedom adjusted to account for unequal LRVs. This approach is simple to implement---critical values are obtained from the standard Student's $t$- and $F$-distributions---and it substantially improves finite-sample accuracy relative to the normal and chi-square approximations. It is also well grounded in the literature. See, for example, \citet{Bell_McCaffrey2002}, \citet{Imbens_Kolesar2016}, \citet{Donald_Lang2007}, and \citet{Sun2014}.

Second, we develop a series HAR wild bootstrap (SHAR-WB) and propose SHAR-WB based two-sample inference---and its multivariate extension---that is robust to both serial dependence and unequal LRVs. SHAR-WB builds on the wild bootstrap of \citet{Wu1986} and the series HAR framework. Specifically, we introduce serial dependence into the external random variables used to perturb the residuals, thereby allowing the bootstrap samples to replicate the dependence structure of the original data. Unlike block bootstrap methods, SHAR-WB does not require resampling blocks of observations. Instead, it generates dependent external random variables using orthogonal bootstrap basis functions so that the LRV of bootstrap samples mirrors the formulation of the series HAR estimator. We establish the asymptotic validity by showing that the bootstrap distributions of the bootstrap $t$- and Wald statistics are consistent for those of the original series HAR counterparts under increasing-$K$ asymptotics. SHAR-WB aligns well with our series HAR-based two-sample tests and delivers superior finite-sample performance. It is of independent interest for other testing problems involving dependent data.

The proposed SHAR-WB is closely related to the dependent wild bootstrap (DWB) of \citet{Shao2010}, which generates external random variables whose covariances are kernel weighted across time lags. The resulting bootstrap LRV yields a kernel-based LRV estimator. \citet{Conley_etc2023} extend this idea to the spatial setting. See also related work by \citet{Leucht_Neumann2013}, \citet{Djogbenou_etc2015}, and \citet{Hounyo2023}. Further discussion of the relationship between our SHAR-WB and DWB is provided in Sections \ref{Section: SHAR-WB} and \ref{Sec:appendix}.

Beyond the scalar mean, we develop the framework for \emph{joint} inference on a vector of means, so that equality can be tested across several outcomes---or several contrasts---simultaneously. This multivariate step is what many applications require: inference on the difference-in-differences (DID) treatment effect is itself a special case of our two-sample framework, since the DID estimand is a contrast of group means in suitably differenced series, and testing a vector of such effects---across multiple outcomes or post-treatment horizons---is inherently multivariate. For a $p$-dimensional null we form a series HAR Wald statistic, $W_{1,\mathrm{HAR}}$, which converges to a $\chi^2_p$ distribution under increasing-$K$ asymptotics (the HAR-$\chi^2$ test). Because this increasing-$K$ approximation over-rejects under persistent dependence, we propose a fixed-$K$ $F$-approximation (HAR-$F$)---the multivariate counterpart of our $t$-approximation, with degrees of freedom adjusted for long-run-variance heterogeneity across coordinates---together with a multivariate version of the series HAR bootstrap. In the scalar case the fixed-$K$ $F$-approximation reduces exactly to our $t$-approximation.


Our paper is closely related to \citet{Ibragimov_Mueller2016}, who develop a $t$-test for the comparison of a scalar parameter across two groups. Their approach partitions the observations in each group into a small number of clusters and estimates the parameter separately within each cluster. The $t$-statistic is then constructed from the resulting cluster-level estimates. They employ adjusted degrees of freedom following \citet{Bakirov1998} and show its asymptotic validity under broad forms of heterogeneity and dependence. Their approach is applicable to a broad range of settings including time-series, spatial, and panel data and is particularly attractive when observations naturally exhibit a grouped structure, whereas our paper focuses on time-series inference. By contrast, our framework accommodates multivariate parameters, whereas their procedure is restricted to scalar parameters. Furthermore, our SHAR-WB procedure can provide improved finite-sample performance, particularly in the presence of strong serial dependence or relatively small sample sizes.  A comparison of the finite-sample performance is presented in Section \ref{Section:Simulation}.

We illustrate the relevance of our proposed methodology through two empirical applications. The first revisits a study previously analyzed in the literature that examines the impact of working from home (WFH) on employee performance, testing both the individual performance outcomes and, jointly, the full vector of outcomes. The second considers two key U.S. macroeconomic time series---the civilian unemployment rate and CPI inflation---which share a well-documented structural break around the 2008 financial crisis, and tests whether their bivariate mean vector shifts across the break.

Using the proposed fixed-$K$ HAR-$F$ test and the SHAR-WB in both scalar and multivariate forms, we find no statistically meaningful difference in average performance between the WFH and office-based groups---whether the outcomes are examined one at a time or jointly through the multivariate SHAR-WB test---once sampling variation, heteroskedasticity, and autocorrelation are properly accounted for. Likewise, at the $5\%$ significance level, we find no statistically meaningful difference in the means across the pre- and post-break periods for the two U.S. macroeconomic series---the Unemployment Rate and CPI Inflation---either component by component or in the joint bivariate test. In both applications, several of these differences appear strongly significant under the classical and Welch tests that ignore serial dependence, underscoring the practical value of the proposed robust procedures.

The remainder of the paper is organized as follows. Section \ref{Section: Setup} outlines the setup and reviews the conventional two-sample $t$-tests, including Welch's version. Section \ref{Section: HAR-t-tests} introduces the series-based HAR two-sample tests and studies their asymptotic properties. Section \ref{Section: SHAR-WB} develops the series HAR wild bootstrap. Section \ref{Section:Simulation} evaluates the finite-sample performance of the proposed tests relative to existing methods. Section \ref{Section:Empirical} presents two empirical applications, and Section \ref{Section:Conclusion} concludes. Additional discussion, technical derivations, and additional simulation and empirical results are provided in Section \ref{Sec:appendix}.

\section{Setup and Classical \texorpdfstring{$t$}{t}-Tests} \label{Section: Setup}
Suppose that $\{Y_{1t}\}$ and $\{Y_{2t}\}$ are $p \times 1$ time-series processes with 
\begin{equation*}
Y_{jt} \sim  \left( \mu _{j},\Sigma
_{j}\right) ,\quad  t=1,...,T_{j},\ j=1,2,
\end{equation*}%
where the two processes are independent of each other. Let $\Delta=\mu_1-\mu_2$. We are interested in testing the difference of population means.
\begin{equation}
H_{0}:\Delta=\Delta_{0} .  \label{H0: Two Sample}
\end{equation}
$\{Y_{1t} \}$ and $\{Y_{2t}\} $ can be (i) pre- and
post-structural break data with known break date, and (ii) time series of
two different units (e.g., treatment unit and control unit). We can also consider the case of two groups, each consisting of multiple units over time (panel or repeated cross-section structure). In this case, $\{Y_{1t}\}$ and $\{Y_{2t}\}$ represent the time series of cross-sectional averages for
the two groups. 


Many important econometric procedures can also be formulated as two-sample inference problems. Consider the difference-in-differences (DID) regression:
    \[
        \tilde{Y}_{at} = \beta_0 + \beta_1 Treat_{a} + \beta_2 Post_t + \beta_3 (Treat_{a}\times Post_t) + \tilde{u}_{at},   \quad a=0,1,     
    \]
where $Treat_{a}$ and $Post_t$ are binary variables for the treatment group and the post-treatment period, respectively. Let $Y_{1t}$ denote the post-treatment difference between the treatment and control groups and $Y_{2t}$ denote the corresponding pre-treatment difference. Then, 
\[
\beta_3 = E[Y_{1t}]-E[Y_{2t}] = \Delta, 
\]
which implies that inference on the DID treatment effect is a special case of our two-sample inference framework. 

Our framework is also closely related to the Diebold--Mariano (DM) test for predictive accuracy \citep{diebold1995comparing}.\footnote{The Diebold--Mariano test has become one of the most widely used procedures for predictive accuracy testing in econometrics and statistics. As of July 15, 2026, the original paper has received more than 13,700 citations according to Google Scholar.} The DM procedure tests the null hypothesis of equal predictive ability by constructing a loss differential process,
\[
d_t=\ell(e_{1t})-\ell(e_{2t}),
\]
where $\ell(\cdot)$ is a loss function, and conducting inference on the mean of \(\{d_t\}_{t=1}^T\) under serial dependence using a kernel-based heteroskedasticity and autocorrelation consistent (HAC) estimator of the long-run variance. From our perspective, the DM test can be viewed as a special case of two-sample mean inference under serial dependence in which the two samples are paired and collapsed into a single loss-differential process with equal sample sizes. In contrast, our framework directly addresses inference on 
\[
\Delta = E[Y_{1t}]-E[Y_{2t}],
\]
allowing for heterogeneous dependence structures across samples and unequal sample sizes (\(T_1 \neq T_2\)). Moreover, our approach employs orthonormal-basis projection standardization and fixed-$K$ asymptotic and bootstrap methods to provide more reliable inference in finite samples. Conversely, the DM test has an advantage when the two time series are paired, as it can accommodate cross-sectional dependence across units, whereas our approach requires independence across units.

When ${Y_{1t}}$ and ${Y_{2t}}$ are scalar ($p=1$), the classical two-sample $t$-test is the standard procedure for comparing the two population means in (\ref{H0: Two Sample}). Assuming that $\{Y_{1t}\}$ and $\{Y_{2t}\}$ are serially independent and normally
distributed with equal variances, i.e., $\Sigma _{1}=\Sigma _{2},$
the $t$-statistic is defined as
\begin{equation*}
t_{0}=\frac{\bar{Y}_{1}-\bar{Y}_{2} - \Delta_0}{s_{p}\sqrt{\frac{1}{T_{1}}+\frac{1}{%
T_{2}}}},
\end{equation*}%
where $\bar{Y}_{j}=\frac{1}{T_{j}}\sum_{t=1}^{T_{j}}Y_{jt} $ and $s_{p}=\sqrt{\frac{\left( T_{1}-1\right) s_{1}^{2}+\left(
T_{2}-1\right) s_{2}^{2}}{T_{1}+T_{2}-2}}$ with $s_{j}^{2}=\frac{1}{T_{j}-1}%
\sum_{t=1}^{T_{j}}\left( Y_{jt}-\bar{Y}_{j}\right) ^{2}.$ Under $H_{0},$ $%
t_{0}$ follows an exact Student's $t$-distribution with $%
T_{1}+T_{2}-2$ degrees of freedom.

If the variances are heterogeneous, i.e., $\Sigma _{1}\neq
\Sigma _{2}$, the original two-sample $t$-test described above is no longer valid. In such cases, Welch's $t$-test \citep{Welch1947} is commonly recommended, as it accounts for unequal variances by adjusting the degrees of freedom. The test is based on  
\begin{equation*}
t_{1}=\frac{\bar{Y}_{1}-\bar{Y}_{2} - \Delta_0}{\sqrt{s_{1}^2/T_{1}+s_{2}^2/
T_{2}}},
\end{equation*}%
and the critical values are obtained from a Student's $t$-distribution
with the degrees of freedom computed using the Welch-Satterthwaite equation:  
\begin{equation*}
\frac{\left( s_{1}^{2}/T_{1}+s_{2}^{2}/T_{2}\right) ^{2}}{%
\left( s_{1}^{2}/T_{1}\right) ^{2}/\left( T_{1}-1\right) +\left(
s_{2}^{2}/T_{2}\right) ^{2}/\left( T_{2}-1\right) }.
\end{equation*}

Both the classical two-sample $t$-test and Welch's two-sample $t$-test assume independent, normally distributed observations. While the impact of violating normality becomes negligible in large samples, the independence assumption remains crucial, and these tests may severely over-reject the null in the presence of serial dependence. This is a serious limitation of the $t$-tests with time-series data, since serial dependence is the norm rather than the exception.

\section{Series HAR Two-Sample Tests} \label{Section: HAR-t-tests}
In this section, we develop series HAR inference methods for comparing two-sample means with time-series data. Under some regularity conditions and weak dependence, we have
\[
    \sqrt{T_j} \left( \bar{Y}_j - \mu_j \right) \rightarrow^d N(0,\Omega_j), \ \ j=1,2,
\]
as $T_j \rightarrow \infty$, where $\Omega_j = \lim_{T_j\rightarrow \infty} Var\left(\sqrt{T_j} \left( \bar{Y}_j - \mu_j \right)\right)$ denotes the long-run variance (LRV) of $\{Y_{jt}\}$. This asymptotic normality provides the  basis for testing (\ref{H0: Two Sample}), which requires the estimation of $\Omega_1$ and $\Omega_2$. 

We adopt the series HAR approach. Let $\{\phi_\ell(\cdot)\}_{\ell=1}^K$ denote a set of orthonormal basis functions on $[0,1]$. For each $\ell = 1,...,K_j$, project the residuals and form the outer product of the projection coefficient:
\[
\hat{z}_{j,\ell} := \frac{1}{\sqrt{T_j}}\sum_{t=1}^{T_j} \phi_\ell\left(\frac{t}{T_j}\right) \hat{u}_{jt}, \ \ 
\hat{\Omega}_{j,\ell} := \hat{z}_{j,\ell} \hat{z}_{j,\ell}^\prime,
\]
where $\hat{u}_{jt} ={Y}_{jt}-\bar{Y}_j $.
Under mild conditions, each $\hat{\Omega}_{j,\ell}$ is asymptotically unbiased but inconsistent for $\Omega_j$. The series HAR variance estimator is defined as the simple average: 
\[
\hat{\Omega}_j = \frac{1}{K_j} \sum_{\ell=1}^{K_j} \hat{\Omega}_{j,\ell}.
\]
Because $\{\hat{\Omega}_{j,\ell} \}_{\ell=1}^{K_j}$ are asymptotically unbiased, depending on how $K_j$ behaves as $T_j \rightarrow \infty$, two asymptotic frameworks, fixed-$K$ asymptotics and increasing-$K$ asymptotics, are available to characterize the asymptotics of $\hat{\Omega}_j$ and the associated test statistics. Under increasing-$K$ asymptotics, where $K_j \rightarrow \infty$ as $T_j \rightarrow \infty$ with $K_j/T_j \rightarrow 0$, $\hat{\Omega}_j$ is consistent for $\Omega_j$. Under fixed-$K$ asymptotics, where $K_j$ is held fixed as $T_j \rightarrow \infty$, $\hat{\Omega}_j$ converges in distribution to a scaled Wishart random matrix proportional to $\Omega_j$. \citet{Sun2013} shows that fixed-$K$ asymptotic tests provide a high-order refinement of the null rejection probability in Wald test settings, comparable to kernel-based fixed-$b$ tests.

We make the following assumptions on the basis functions and data to develop the series HAR two-sample tests under fixed-$K$ asymptotics. 

\begin{assumption}
\label{Assumption 1} $\{\phi_{\ell }\left( \cdot \right)\}_{\ell=1}^K$ is a sequence of piecewise monotonic, continuously
differentiable and orthonormal basis functions on $L_{2}\left[ 0,1\right]$ satisfying $\int_0^1 \phi_\ell(x) dx =0$. 
\end{assumption}
Assumption \ref{Assumption 1} is standard, and sine and cosine
trigonometric polynomials are commonly used in the literature. The mean zero condition, $\int_0^1 \phi_\ell(x) dx =0$, yields
\[
\frac{1}{\sqrt{T_j}}\sum_{t=1}^{T_j} \phi_\ell\left(\frac{t}{T_j}\right) \hat{u}_{jt} = \frac{1}{\sqrt{T_j}}\sum_{t=1}^{T_j} \phi_\ell\left(\frac{t}{T_j}\right) u_{jt} + o_P(1),
\]
where the effect of the estimation error in $\bar{Y}_j$ is asymptotically negligible. In this paper, we employ the following set of basis functions for the asymptotic analysis and for simulations and empirical studies.
\begin{equation}
\label{Basis function}
\phi_{\ell}\left( x\right) =\left\{ 
\begin{array}{c}
\sqrt{2}\cos \left( 2\pi \ell x\right) , \\ 
\sqrt{2}\sin \left( 2\pi \ell x\right) ,%
\end{array}%
\begin{array}{l}
\text{if } \ell \text{ is odd} \\ 
\text{if } \ell \text{ is even}%
\end{array}%
,\ \ell=1,...,K.\right.
\end{equation}
\begin{assumption}\label{Assumption 2}Let $u_t = \left(u_{1t}^\prime,u_{2t}^\prime \right)^\prime$. The functional CLT holds:%
\[
\frac{1}{\sqrt{T}}\sum_{t=1}^{\lfloor rT\rfloor } u_{t} \rightarrow ^{d} 
\begin{pmatrix}
\Omega_1^{1/2} & 0\\
0 & \Omega_2^{1/2}
\end{pmatrix} W_2(r), 
\]
where $W_2(r)=\left(W_{21}^\prime(r),W_{22}^\prime(r) \right)^\prime$ is the $2p$-dimensional standard Brownian motion.
\end{assumption}
The functional CLT in Assumption \ref{Assumption 2} is a high level condition that is commonly used to facilitate the HAR inference asymptotic theory  \citep[e.g.,][]{Kiefer_Vogelsang2002,Kiefer_Vogelsang2005}. Let $\hat{u}_t = \left(\hat{u}_{1t}^\prime,\hat{u}_{2t}^\prime \right)^\prime$ and $\phi_0(x) = 1$ for $x \in [0,1]$. For each $\ell=0,1,...,K$,
\begin{eqnarray*}
\frac{1}{\sqrt{T}}\sum_{t=1}^{T} \phi_{\ell}\!\left(\frac{t}{T}\right)\hat{u}_{t} &=&\frac{1}{\sqrt{T}}\sum_{t=1}^{T} \phi_{\ell}\!\left(\frac{t}{T}\right)u_{t} + o_P(1) \\
&\rightarrow^d& \begin{pmatrix}
\Omega_1^{1/2} & 0\\
0 & \Omega_2^{1/2}
\end{pmatrix}  \int_0^1 \phi_\ell(r) dW(r)
\end{eqnarray*}
under Assumptions \ref{Assumption 1} and \ref{Assumption 2}, which, by the continuous mapping theorem, implies
\begin{eqnarray} \label{eqn: chi2_1}
    \hat{\Omega}_{j,\ell} &=& \left( \frac{1}{\sqrt{T_j}}\sum_{t=1}^{T_j} \phi_{\ell}\!\left(\frac{t}{T_j}\right)\hat{u}_{jt} \right) \left( \frac{1}{\sqrt{T_j}}\sum_{t=1}^{T_j} \phi_{\ell}\!\left(\frac{t}{T_j}\right)\hat{u}_{jt} \right)^\prime \notag \\
    &\rightarrow^d& \Omega_j^{1/2} \mathcal{W}(I_{p},1) \Omega_j^{1/2},
\end{eqnarray}
where $\mathcal{W}(I_{p},1)$ denotes a Wishart distribution with scale matrix $I_{p}$ and one degree of freedom. In the scalar case ($p=1$), $W_{p}(I_{p},1)$ reduces to the chi-square distribution with one degree of freedom, so 
\[
    \hat{\Omega}_{j,\ell} \rightarrow^d \Omega_j \chi^2(1).
\]
In addition, by the orthonormality and mean zero properties of the basis functions, 
\begin{eqnarray} \label{eqn: independence}
Cov\left(\int_0^1 \phi_k(r) dW_{2j}(r),\int_0^1 \phi_\ell(r) dW_{2j}(r)\right) = 0    
\end{eqnarray}
for $k \neq \ell$ and $k,\ell = 0,1,...,K$. 

In what follows, we consider two cases---depending on whether the LRVs are equal---and develop corresponding two-sample tests that are robust to serial dependence. 

\subsection{Case 1: Equal LRVs} \label{Subsection: Equal LRVs}

We first consider the simple case where $\{Y_{1t}\}$ and $\{Y_{2t}\}$ have equal LRVs. Specifically, we set $\Omega:=\Omega_1 = \Omega_2$. In this setting, we use the pooled LRV estimator to define the two-sample Wald statistic 
\[
    W_{0,\mathrm{HAR}} =  \left(\bar{Y}_{1}-\bar{Y}_{2} - \Delta_0 \right)^\prime \left( \hat{\Omega}_{\text{pool}} \left(\frac{1}{T_{1}}+\frac{1}{T_{2}}\right)\right)^{-1} \left(\bar{Y}_{1}-\bar{Y}_{2} - \Delta_0 \right),
 \]
where $\hat{\Omega}_{\text{pool}} = \frac{K_1 \hat{\Omega}_1 + K_2 \hat{\Omega}_2}{K_1 + K_2}$. By (\ref{eqn: chi2_1}) and (\ref{eqn: independence}),  the equality of the LRVs implies
\[
 K_1 \hat{\Omega}_1 + K_2 \hat{\Omega}_2 \rightarrow^d \mathcal{W}(\Omega,K_1 + K_2) = \Omega^{1/2} \mathcal{W}(I_{p},K_1 + K_2) \Omega^{1/2} 
\]
and the numerator and denominator of $W_{0,\mathrm{HAR}}$ are asymptotically independent. 

Therefore, if the null hypothesis in (\ref{H0: Two Sample}) is true,
\[
    W_{0,\mathrm{HAR}} \rightarrow^d T^2(p,K_1+ K_2),
\]
where $T^2(p,K_1+K_2)$ denotes Hotelling's $T^2$ distribution. Furthermore, for $K_1 + K_2 \geq p$,
\[
     m_{K,p} T^2(p,K_1+ K_2) \sim F(p,K_1+ K_2 -p+1), \quad m_{K,p} = \frac{K_1+K_2 - p+ 1}{p(K_1+K_2)},
\]
where  $m_{K,p}$ is the corresponding scaling factor, and $F(p,K_1+ K_2 -p+1)$ denotes the $F$-distribution with $p$ and $K_1+K_2-p+1$ degrees of freedom. Thus, we obtain 
\[
   m_{K,p} W_{0,\mathrm{HAR}} \rightarrow^{d} F(p,K_1 + K_2-p+1).
\]
 When ${Y_{1t}}$ and ${Y_{2t}}$ are scalar ($p=1$), the proposed framework reduces to the two-sample $t$-test, which implies that, under $H_0$,
\begin{equation*}
t_{0,\mathrm{HAR}}=\frac{\bar{Y}_{1}-\bar{Y}_{2} - \Delta_0}{\hat{\Omega}_{\text{pool}}^{1/2}\sqrt{\frac{1}{T_{1}}+\frac{1}{%
T_{2}}}}  \rightarrow^d t(K_1+K_2).
\end{equation*}%

The result is summarized below.
\begin{theorem}\label{Theorem 1} Suppose that Assumptions \ref{Assumption 1} and \ref{Assumption 2} hold, and that $\Omega_1 = \Omega_2$. Then, under $H_0$,  
\[
     m_{K,p} W_{0,\mathrm{HAR}} \rightarrow^d F(p,K_1 + K_2-p+1) \quad \text{and} \quad t_{0,\mathrm{HAR}} \rightarrow^d t(K_1 + K_2),
\] 
as $T_1,T_2 \rightarrow \infty$, for fixed $K_1$ and $K_2$.
\end{theorem}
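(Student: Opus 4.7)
The plan is to write
\[
t_{0,\text{HAR}} = \frac{Z_T}{\sqrt{V_T/(K_1+K_2)}}, \qquad Z_T := \frac{\bar{Y}_1 - \bar{Y}_2}{\Omega^{1/2}\sqrt{1/T_1 + 1/T_2}}, \qquad V_T := \frac{K_1\hat{\Omega}_1 + K_2 \hat{\Omega}_2}{\Omega},
\]
and to show that jointly $(Z_T, V_T) \rightarrow^d (Z, V)$ with $Z \sim N(0,1)$, $V \sim \chi^2(K_1+K_2)$, and $Z$ independent of $V$. The continuous mapping theorem together with the defining representation of Student's $t$ distribution then yields $t_{0,\text{HAR}} \rightarrow^d t(K_1+K_2)$.

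The key step is the joint weak convergence of the projection coefficients. Augment the basis with $\phi_0 \equiv 1$ on $[0,1]$. For each $j \in \{1,2\}$, Assumption \ref{Assumption 2} combined with the continuous mapping theorem applied to the functionals $B \mapsto \int_0^1 \phi_\ell(r)\, dB(r)$ gives
\[
\left\{ \frac{1}{\sqrt{T_j}} \sum_{t=1}^{T_j} \phi_\ell(t/T_j)\, u_{jt} \right\}_{\ell=0}^{K_j} \rightarrow^d \left\{ \Omega^{1/2}\, \xi_{j,\ell} \right\}_{\ell=0}^{K_j}, \qquad \xi_{j,\ell} := \int_0^1 \phi_\ell(r)\, dW_{2j}(r).
\]
By the orthonormality of $\{\phi_\ell\}_{\ell=0}^{K_j}$ in $L_2[0,1]$ (Assumption \ref{Assumption 1}) together with the It\^{o} isometry, $\{\xi_{j,\ell}\}_{\ell}$ are i.i.d.\ $N(0,1)$ for each $j$; cross-series independence of $u_1$ and $u_2$ (equivalently, of $W_{21}$ and $W_{22}$) extends this to joint independence across all $(j,\ell)$. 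The mean-zero property for $\ell \geq 1$ permits replacing $u_{jt}$ by $\hat{u}_{jt}$ at cost $o_P(1)$, as in the display following Assumption \ref{Assumption 1}, while for $\ell = 0$ the coefficient is exactly $\sqrt{T_j}(\bar{Y}_j - \mu_j)$, which under $H_0$ furnishes the numerator of $t_{0,\text{HAR}}$.

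Assembling the pieces, under $\Omega_1 = \Omega_2 = \Omega$ and $H_0$ one obtains
\[
Z_T = \sqrt{\frac{T_2}{T_1+T_2}}\,\xi^{(T)}_{1,0} \;-\; \sqrt{\frac{T_1}{T_1+T_2}}\,\xi^{(T)}_{2,0} \;\rightarrow^d\; N(0,1),
\]
where $\xi^{(T)}_{j,0} := \Omega^{-1/2}T_j^{-1/2}\sum_{t=1}^{T_j} u_{jt}$ and the limit has unit variance since the squared coefficients sum to one and the two limits are independent. By the continuous mapping theorem,
\[
V_T = \sum_{j=1}^2 \sum_{\ell=1}^{K_j} \bigl(\xi^{(T)}_{j,\ell}\bigr)^2 \;\rightarrow^d\; \sum_{j=1}^2 \sum_{\ell=1}^{K_j} \xi_{j,\ell}^2 \;\sim\; \chi^2(K_1+K_2).
\]
Because $Z_T$ involves only the $\ell = 0$ coefficients while $V_T$ involves only the $\ell \geq 1$ coefficients, the independence structure established above yields $Z \perp V$ in the limit, completing the identification of the joint limit.

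The main obstacle is essentially bookkeeping: one must verify that the joint weak convergence is simultaneous in $T_1, T_2 \to \infty$ (which reduces to the two marginal statements via cross-series independence of $u_1$ and $u_2$), and confirm that augmenting the basis with $\phi_0$ is legitimate even though it violates the mean-zero condition of Assumption \ref{Assumption 1}. The latter is harmless because $\phi_0$ is used only to represent the numerator $\bar{Y}_1 - \bar{Y}_2$, where no centering of $u_{jt}$ by $\bar{Y}_j$ is required; the mean-zero condition is invoked only to control the residual-centering error within the $K_j$ variance components, i.e.\ for $\ell \geq 1$.
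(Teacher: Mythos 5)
Your proposal is correct and follows essentially the same route as the paper's proof: both augment the basis with $\phi_0\equiv 1$, use the functional CLT plus continuous mapping to obtain jointly Gaussian limits for the projection coefficients, invoke orthonormality (including orthogonality to $\phi_0$) to separate the numerator from the variance components, and conclude via the $N(0,1)/\sqrt{\chi^2(K_1+K_2)/(K_1+K_2)}$ representation of the $t$-distribution. Your explicit remark that the mean-zero condition is needed only to control the residual-centering error for $\ell\geq 1$ is a useful clarification of a point the paper leaves implicit, but it does not change the argument.
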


The proof is provided in Section \ref{Sec:appendix}. It is worth noting that one can employ a kernel-based fixed-$b$ approach in this setting. A practical advantage of our series-based HAR test over the kernel fixed-$b$ method is that the critical values are taken from the standard Student's $t$- and \(F\)-distributions, whereas the limiting distribution of the latter is nonstandard and its critical values should be simulated.

\subsection{Case 2: Unequal LRVs}
We now consider the empirically more relevant case in which the LRVs are heterogeneous, i.e., $\Omega_1 \neq \Omega_2$.  Because the pooled LRV estimator $\hat{\Omega}_{\text{pool}}$ is no longer appropriate, we instead consider the Wald statistic
\[
W_{1,\mathrm{HAR}} =  \left(\bar{Y}_{1}-\bar{Y}_{2} - \Delta_0 \right)^\prime \left( \frac{\hat{\Omega}_1}{T_1} + \frac{\hat{\Omega}_2}{T_2} \right)^{-1} \left(\bar{Y}_{1}-\bar{Y}_{2} - \Delta_0 \right),
\]
which accommodates heterogeneous LRVs. The corresponding $t$-statistic is
\[
t_{1,\mathrm{HAR}} = \frac{\bar{Y}_{1}-\bar{Y}_{2} - \Delta_0}{\sqrt{\frac{\hat{\Omega}_1}{T_1} + \frac{\hat{\Omega}_2}{T_2}}}.   
\]
If $\Omega_1 \neq \Omega_2$, neither $W_{1,\mathrm{HAR}}$ nor $t_{1,\mathrm{HAR}}$ is asymptotically pivotal under fixed-$K$ asymptotics, so asymptotically valid $F$- and $t$-tests are unavailable in this setting. Therefore, we first establish their asymptotics under increasing-$K$ asymptotics.

We assume $\{ u_{jt} \}$ is fourth order stationary and introduce the following assumption that controls the degree of serial dependence of $\{u_{jt}\}$. Let  $\Gamma_j(h) = E\left(u_{jt} u_{jt+h}^\prime \right)$ and define the fourth order cumulant of $\left(u_{jt}^{(a)},u_{jt+h_1}^{(b)},u_{jt+h_2}^{(c)},u_{jt+h_3}^{(d)}\right)$ by
\begin{eqnarray*}
    &&\kappa_{j,abcd}(0,h_1,h_2,h_3) \\
    &=& E\left( u_{jt}^{(a)} - Eu_{jt}^{(a)} \right)\left( u_{jt+h_1}^{(b)} - Eu_{jt+h_1}^{(b)}\right)\left( u_{jt+h_2}^{(c)} - Eu_{jt+h_2}^{(c)}\right)\left( u_{jt+h_3}^{(d)} - Eu_{jt+h_3}^{(d)} \right) \\
    &&-E\left( \ddot{u}_{jt}^{(a)} - E\ddot{u}_{jt}^{(a)} \right)\left( \ddot{u}_{jt+h_1}^{(b)} - E\ddot{u}_{jt+h_1}^{(b)}\right)\left( \ddot{u}_{jt+h_2}^{(c)} - E\ddot{u}_{jt+h_2}^{(c)}\right)\left( \ddot{u}_{jt+h_3}^{(d)} - E\ddot{u}_{jt+h_3}^{(d)} \right),
\end{eqnarray*}
where $\{\ddot{u}_{jt}\}$ is a Gaussian sequence with the same mean and covariance structure as $\{u_{jt}\}.$
\begin{assumption}\label{Assumption3: cumulant}
    Let $\{ u_{jt} \}$ for $j=1,2,$ be mean zero, fourth order stationary processes with $\sum_{h=-\infty}^{\infty}|h|^3|\Gamma_j(h)| < \infty$ and $\sum_{h_1=-\infty}^{\infty}\sum_{h_2=-\infty}^{\infty}\sum_{h_3=-\infty}^{\infty} \left| \kappa_{j,abcd}(0,h_1,h_2,h_3) \right| < \infty$ for all $a,b,c,d \leq p$.  
\end{assumption}
The cumulant condition in Assumption \ref{Assumption3: cumulant} is standard in the time-series literature. It is trivial in the Gaussian case, since the fourth-order cumulant is zero. This condition also holds for linear processes under coefficient summability and suitable moment conditions \citep[e.g.,][]{Kim_Sun2011}. Primitive sufficient conditions based on $\alpha$-mixing and a moment condition are provided in \citet[Lemma 1]{Andrews1991}.

\begin{theorem} \label{Theorem: Asymptotic Normality}
    Suppose Assumptions \ref{Assumption 1}-\ref{Assumption3: cumulant} hold. Then, under $H_0$,
    \[
    W_{1,\mathrm{HAR}} \rightarrow^d \chi^2(p) \quad \text{and} \quad t_{1,\mathrm{HAR}} \rightarrow^d N(0,1),
    \]
    as $T_j \rightarrow \infty$ and $K_j \rightarrow \infty$ with $K_j/T_j \rightarrow 0$ for $j=1,2.$
\end{theorem}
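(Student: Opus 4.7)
\medskip

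\noindent\textbf{Proof Proposal.} The plan is to decompose $t_{1,\text{HAR}}$ into a ratio of a standard–normal-limit numerator and a denominator that converges in probability to $1$, and then to invoke Slutsky's theorem. Write
\[
t_{1,\text{HAR}} \;=\; \frac{(\bar{Y}_1 - \bar{Y}_2)\big/\sqrt{\Omega_1/T_1+\Omega_2/T_2}}{\sqrt{\big(\hat{\Omega}_1/T_1+\hat{\Omega}_2/T_2\big)\big/\big(\Omega_1/T_1+\Omega_2/T_2\big)}}.
\]
Under $H_0$ the two series are independent and, by Assumption \ref{Assumption 2} applied separately to each series, $\sqrt{T_j}(\bar{Y}_j-\mu_j)\rightarrow^d N(0,\Omega_j)$ for $j=1,2$; combining these via independence yields that the numerator converges in distribution to $N(0,1)$. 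The conclusion will therefore follow once I establish the following consistency result for the series HAR variance estimator under the large-$K$ regime:
\begin{equation*}
\hat{\Omega}_j \;\rightarrow^P\; \Omega_j, \qquad j=1,2,
\end{equation*}
as $T_j,K_j\to\infty$ with $K_j/T_j\to 0$.

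To prove consistency I would argue separately that the bias and the variance of $\hat{\Omega}_j$ vanish. For the \emph{bias}, using stationarity and Assumption \ref{Assumption 1},
\[
E[\hat{\Omega}_{j,\ell}] \;=\; \sum_{|h|<T_j}\gamma_j(h)\,\frac{1}{T_j}\sum_{t}\phi_\ell\!\left(\tfrac{t}{T_j}\right)\phi_\ell\!\left(\tfrac{t-h}{T_j}\right).
\]
A Taylor expansion of $\phi_\ell(\cdot)$ in the second factor, together with orthonormality $\int_0^1\phi_\ell^2=1$, shows that the inner Riemann sum equals $1+O(\ell^2 h^2/T_j^2)$ uniformly. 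The summability $\sum_h|h|^3|\gamma_j(h)|<\infty$ from Assumption \ref{Assumption3: cumulant} therefore controls the remainder and yields $E[\hat{\Omega}_{j,\ell}]=\Omega_j+o(1)$ uniformly for $\ell\le K_j$ as long as $K_j/T_j\to 0$, hence $E[\hat{\Omega}_j]\to\Omega_j$.

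For the \emph{variance}, I would use the cumulant expansion
\[
\mathrm{Cov}(\hat{\Omega}_{j,k},\hat{\Omega}_{j,\ell}) \;=\; 2\Big[\tfrac{1}{T_j}\textstyle\sum_{t,s}\phi_k(t/T_j)\phi_\ell(s/T_j)\gamma_j(t-s)\Big]^2 + R_{k\ell},
\]
where $R_{k\ell}$ is the fourth-cumulant contribution, which Assumption \ref{Assumption3: cumulant} controls by $\sum|\kappa_j(0,h_1,h_2,h_3)|<\infty$. The Gaussian-part bracket tends to $\Omega_j^2\,\mathbf{1}\{k=\ell\}$ by the orthonormality of $\{\phi_\ell\}$ and the same Taylor argument used for the bias, so diagonal terms contribute $O(K_j)$ and off-diagonal terms sum to $o(K_j^2)$; the cumulant term contributes a total of $O(K_j)$ by a standard bound using orthonormality. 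Dividing by $K_j^2$ yields $\mathrm{Var}(\hat{\Omega}_j)=O(1/K_j)\to 0$.

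Once $\hat{\Omega}_j\rightarrow^P\Omega_j$, the denominator of $t_{1,\text{HAR}}$ converges in probability to $1$ and Slutsky's theorem delivers the claim. The main obstacle is the variance bound in the previous paragraph: one must track the rate at which the cross-covariances $\mathrm{Cov}(\hat{\Omega}_{j,k},\hat{\Omega}_{j,\ell})$ vanish as $K_j$ grows, and show that this decay beats the $K_j^2$ number of pairs, uniformly over $k,\ell\le K_j$. This is where the strengthened autocovariance summability ($|h|^3$ weighting) and the absolute summability of the fourth-order cumulants in Assumption \ref{Assumption3: cumulant} are essential---the former controls a $K_j$-dependent Taylor remainder in the Gaussian piece, and the latter delivers the cumulant-term bound.
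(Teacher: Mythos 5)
Your proposal is correct and follows essentially the same route as the paper: the paper likewise reduces the theorem via Slutsky to the consistency $\hat{\Omega}_j \rightarrow^P \Omega_j$ under large-$K$ asymptotics, and establishes that consistency by the same bias--variance decomposition you sketch (truncating the autocovariance sum, Taylor-expanding the basis functions with the $\sum_h |h|^3|\gamma_j(h)|<\infty$ condition controlling the remainder, and using orthonormality plus the fourth-cumulant summability to kill the off-diagonal covariance and cumulant contributions), carried out in detail in its Lemma~\ref{Lemma: Consistency of Omegahat} for the bootstrap LRV and cited as a ``straightforward modification'' here. The only cosmetic difference is that you normalize the numerator by the exact finite-sample variance rather than by $\sqrt[4]{T_1T_2}$ with $T_2/T_1\rightarrow\rho$ as the paper does.
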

The proof is provided in Section \ref{Sec:appendix}. Theorem \ref{Theorem: Asymptotic Normality} implies the asymptotic validity of the chi-square and normal tests based on $W_{1,\mathrm{HAR}}$ and $t_{1,\mathrm{HAR}}$, respectively, under increasing-$K$ asymptotics. However, it is well documented that such increasing-$K$ asymptotic tests based on the consistency of HAR estimators can exhibit severe size distortions in finite samples, particularly under strong serial dependence. To address this issue, we consider two approaches. First, in the spirit of Welch's $t$-test, we approximate the limiting distribution of the scaled Wald statistic by an $F$-distribution, and that of $t_{1,\mathrm{HAR}}$ by a Student's $t$-distribution, with adjusted degrees of freedom to account for the unequal LRVs. Second, we develop a series HAR wild bootstrap (SHAR-WB) procedure together with the corresponding two-sample tests and establish its asymptotic validity under increasing-$K$ asymptotics. See Section \ref{Section: SHAR-WB} for the second approach. The remainder of this section develops the first approach. 
\paragraph{Student's $t$- and $F$-approximations} As discussed in Section \ref{Section: Setup}, under independence, one popular way to handle unequal variances is Welch's $t$-test, which approximates the sampling distribution of the two-sample $t$-statistic by a Student's $t$-distribution with adjusted degrees of freedom based on the Welch-Satterthwaite equation. \citet{Bell_McCaffrey2002} apply Welch's approach to linear regression, and \citet{Imbens_Kolesar2016} further extend it to clustering settings. Motivated by these developments, we extend the idea of distributional approximation to the present setting by approximating the fixed-$K$ limiting distribution of $t_{1,\mathrm{HAR}}$ with a Student's $t$-distribution, and that of its multivariate counterpart, scaled Wald statistic, with an $F$-distribution, using Welch-type adjusted degrees of freedom.

Define $\hat{\Omega}_{1,\mathrm{HAR}} := \sqrt{T_2/T_1}\,\hat{\Omega}_1 + \sqrt{T_1/T_2}\,\hat{\Omega}_2$. Under $H_0$ and fixed-$K$ asymptotics the limiting distribution of $W_{1,\mathrm{HAR}}$ is non-pivotal, and we approximate that of $\tfrac{K-p+1}{pK}\,W_{1,\mathrm{HAR}}$ by an $F(p,\,K-p+1)$ distribution. We choose the degrees of freedom $K=K_{\mathrm{adf}}$ by matching the first two asymptotic moments of $\mathrm{tr}(\hat{\Omega}_{1,\mathrm{HAR}})$ to those of the reference scaled-Wishart form $\mathrm{tr}\!\big(V^{1/2}\,\mathcal{W}(I_p,K_{\mathrm{adf}})/K_{\mathrm{adf}}\,V^{1/2}\big)$, where $V=\rho^{1/2}\Omega_1+\rho^{-1/2}\Omega_2$ and $\rho=\lim T_2/T_1$; the derivation is given in Section \ref{Sec:appendix}. The means coincide for any $K_{\mathrm{adf}}$, and matching the variances yields
\begin{eqnarray} \label{eqn: infeasible adf}
K_{\text{adf}} = \frac{\text{tr}\left[\left(\rho^{1/2} \Omega_1 + \rho^{-1/2} \Omega_2 \right)^2\right]}{  \rho \text{tr}(\Omega_1^2)/K_1 +  \rho^{-1} \text{tr}(\Omega_2^2)/K_2}.
\end{eqnarray}
(\ref{eqn: infeasible adf}) highlights two informative cases. First, when $\text{tr}(\Omega_1) \gg \text{tr}(\Omega_2)$ or $T_1 \ll T_2$, $K_{\text{adf}}$ approaches $K_1$. Intuitively, the adjusted degrees of freedom are driven by the side with the larger LRV. Second, in the balanced case with $\Omega_1 = \Omega_2$ and $T_1 = T_2$, we obtain $K_{\text{adf}} = \frac{4 K_1 K_2}{K_1+K_2}$, which is no greater than $K_1 + K_2$, the degrees of freedom in the case of equal LRVs. The equality holds only when $K_1 = K_2$, in which case $W_{0,\mathrm{HAR}} = W_{1,\mathrm{HAR}}$.

Let $\tilde{\Omega}_j$ denote a consistent estimator of $\Omega_j$. A feasible version is
\begin{eqnarray} \label{eqn: feasible adf}
\hat {K}_{\text{adf}} = \frac{\text{tr}\left[\left(\rho_T^{1/2}\tilde{\Omega}_1 + \rho_T^{-1/2} \tilde{\Omega}_2 \right)^2\right]}{ \rho_T \tilde{\Omega}_1^2/K_1 +  \rho_T^{-1} \tilde{\Omega}_2^2/K_2}    
\end{eqnarray}
where $\rho_T = T_2/T_1.$ In practice, one may set $\tilde{\Omega}_j=\hat{\Omega}_j,$ and we adopt this choice in our simulations and empirical studies. Alternatively, to better reflect the increasing-$K$ asymptotics for consistency, $\tilde{\Omega}_1$ and $\tilde{\Omega}_2$ can be constructed using larger numbers of basis functions.

If ${Y_{1t}}$ and ${Y_{2t}}$ are scalar ($p=1$), the proposed framework reduces to the univariate case, and the fixed-$K$ limiting distribution of $t_{1,\mathrm{HAR}}$ can be approximated by the Student's $t$-distribution with $K_{\text{adf}}$ degrees of freedom.

\section{Series HAR Wild Bootstrap} \label{Section: SHAR-WB}
\subsection{Bootstrap Method}

In this section, we propose a novel series HAR wild bootstrap (SHAR-WB) test for two-sample comparison with time-series data. The bootstrap data generating process is described as follows. Define
\begin{eqnarray} \label{eqn: bootstrap DGP 1}
Y_{jt}^\ast \;=\; \mu_{0j}^\ast + u_{jt}^\ast,
\qquad
\mu_{01}^\ast \;=\; \frac{T_1 \bar Y_1 + T_2 \bar Y_2}{T_1+T_2} + \frac{\Delta_0}{2}, \qquad \mu_{02}^\ast = \mu_{01}^\ast - \Delta_0.    
\end{eqnarray}
Generate the bootstrap errors from 
\begin{eqnarray} \label{eqn: bootstrap DGP 2}
u_{jt}^\ast \;=\; \hat u_{jt}\,\eta_{jt},
\qquad
\hat u_{jt}=Y_{jt}-\bar{Y}_j,    
\end{eqnarray}
where $\eta_{jt}$ is an external random variable, drawn independently of the data. Crucially, $\mu_{01}^\ast$ and $\mu_{02}^\ast$ are used to generate both series $\{Y_{1t}^\ast\}$ and $\{Y_{2t}^\ast\}$ in (\ref{eqn: bootstrap DGP 1}), thereby imposing the null of $ E^\ast Y_1^\ast - E^\ast Y_2^\ast = \Delta_0$ in the bootstrap world. 

In the standard wild bootstrap, $\eta_{jt}$ is generated in an i.i.d. fashion such that $E^\ast \left(\eta_{jt}\right) = 0$ and $Var^\ast \left( \eta_{jt} \right) = 1$. Consequently, conditional on the data, the bootstrap errors $u_{jt}^\ast$ are independently distributed with mean zero and variance $\hat{u}_{jt} \hat{u}_{jt}^\prime$. However, this approach is not appropriate in our time-series setting, because although it replicates the heteroskedasticity of the original data, it fails to capture their serial dependence.

To address this issue, we propose an SHAR-WB procedure that mimics the serial dependence in the original data nonparametrically. We generate $\{\eta_{jt}\}_{t=1}^{T_j}$ for $j=1,2$ that satisfy the following condition.

\begin{assumption}\label{assumption4: external rv}
The external random variables, $\{\eta_{jt}\}_{t=1}^{T_j}$ for $j=1,2$, are independent of data with $E^\ast\left(\eta_{jt}\right) = 0$, $Var^\ast\left(\eta_{jt}\right) = 1$, and 
\[
Cov^\ast\left(\eta_{jt},\eta_{js}\right) = \frac{1}{K_j^\ast} \sum_{\ell=1}^{ 2 K_j^\ast} \psi_{\ell}\left( \frac{t}{T_j} \right)\psi_{\ell}\left( \frac{s}{T_j} \right),
\]
where $\{\psi_\ell\left(\cdot\right)\}_{\ell=1}^{2K_j^\ast}$ is a sequence of piecewise monotonic, continuous differentiable and orthogonal basis functions on $L_2[0,1]$ satisfying $\int_0^1 \psi_\ell^2(x)dx = \frac{1}{2}$ and $\int_0^1 \psi_\ell(x)dx = 0$.
\end{assumption}

In this paper, we employ
\begin{eqnarray} \label{eqn: Basis function 1}
\psi_{2\ell-1}= \cos(2\pi \ell x):=\psi_{1,\ell}(x) \qquad \psi_{2\ell} = \sin(2\pi \ell x):=\psi_{2,\ell}(x)
\end{eqnarray}
for $\ell = 1,...,K^\ast$. We then generate $\eta_{jt}$ from
\begin{eqnarray} \label{eqn: eta}
\eta_{jt} = \frac{1}{\sqrt{K_j^\ast}} \sum_{\ell=1}^{K_j^\ast} \left[\psi_{1,\ell}\left( \frac{t}{T_j} \right) v_{j,1\ell} + \psi_{2,\ell}\left( \frac{t}{T_j} \right) v_{j,2\ell} \right],    
\end{eqnarray}
where $v_{j,1\ell}$ and $v_{j,2\ell}$ are i.i.d. with mean 0 and variance 1, so that $E^\ast(\eta_{jt})=0$ and, using $\cos a \cos b + \sin a \sin b = \cos(a-b)$,
\begin{eqnarray} \label{eqn: cov of eta}
    Cov^\ast (\eta_{jt},\eta_{js}) =\frac{1}{K_j^\ast} \sum_{\ell=1}^{K_{j}^\ast} \cos\left(2\pi \ell \left(\frac{t-s}{T_j}\right)\right),
\end{eqnarray}
which gives $Var^\ast \left( \eta_{jt} \right)=1$.
Hence, as in the standard wild bootstrap, our procedure generates bootstrap errors $u_{jt}^\ast$ with mean zero and variance $\hat{u}_{jt} \hat{u}_{jt}^\prime$, conditional on the data. However, unlike the standard wild bootstrap, where $u_{jt}^\ast$ are independently distributed conditional on the data, SHAR-WB generates bootstrap errors that are dependently distributed, preserving the serial correlation structure of the original series.

\begin{remark}
We use two sets of basis functions with $2K_j^\ast$ i.i.d.\ draws $\{v_{j,1\ell},v_{j,2\ell}\}$ to obtain exact unit variance for every $t$; a single orthonormal set with $K_j^\ast$ draws yields $Var^{\ast}(\eta_{jt})=1$ only asymptotically.
\end{remark}

As shown in Section \ref{Sec:appendix}, SHAR-WB with the basis functions in (\ref{eqn: Basis function 1}) is asymptotically equivalent to a special case of the dependent wild bootstrap (DWB) of \citet{Shao2010} with a Daniell (sinc) kernel and bandwidth $M_j=T_j/(2K_j^\ast)$.


Our SHAR-WB yields the bootstrap LRV
\begin{eqnarray} \label{Omega_boot}
    \hat{\Omega}_{\text{boot},T_j} &=& Var^\ast\!\left( \frac{1}{\sqrt{T_j}} \sum_{t=1}^{T_j} \hat{u}_{jt}\eta_{jt} \right) \notag\\
    &=& \frac{1}{T_j} \sum_{t=1}^{T_j} \sum_{s=1}^{T_j}  \frac{1}{K_j^\ast} \sum_{\ell=1}^{K_{j}^\ast} \left[\psi_{1,\ell}\left( \frac{t}{T_j} \right)\psi_{1,\ell}\left( \frac{s}{T_j} \right) + \psi_{2,\ell}\left( \frac{t}{T_j} \right)\psi_{2,\ell}\left( \frac{s}{T_j} \right) \right]  \hat{u}_{jt} \hat{u}_{js}^\prime.
\end{eqnarray}

As shown in Lemma \ref{lem:consistency-omegahat} in Section \ref{Sec:appendix}, the SHAR-WB long-run variance, $\hat{\Omega}_{\text{boot},T_j}$, is a consistent estimator of the long-run variance, i.e., $\hat{\Omega}_{\text{boot},T_j} \rightarrow^P \Omega_j$ as $ K_j^\ast \rightarrow \infty$ as $T_j \rightarrow \infty$ such that $K_j^\ast/T_j \rightarrow 0$. 

\begin{theorem}\label{Theorem: Bootstrap consistency}
Suppose Assumptions \ref{Assumption 2}-\ref{assumption4: external rv} hold and $E^\ast v_{j,r\ell}^4 < C$ for $r=1,2$ and $C<\infty$. If $ K_j^\ast \rightarrow \infty$ as $T_j \rightarrow \infty$ such that $K_j^\ast/T_j \rightarrow 0$ for $j=1,2$, then, 
\[
\Omega_{j}^{-1/2} \sqrt{T_j}  \left( \bar{Y}_j^\ast - \mu_{0j}^\ast \right) \rightarrow^{d^\ast} N(0,I_p),
\]
in probability.
\end{theorem}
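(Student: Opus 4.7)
My plan is to show that, conditional on the data, the bootstrap sum $\sqrt{T_j}(\bar Y_j^\ast-\mu^\ast)$ obeys a central limit theorem with limiting variance $\Omega_j$. Combined with Assumption \ref{Assumption 2} on the original-data side, both the bootstrap and original distributions converge weakly to the same $N(0,\Omega_j)$. Since the normal CDF is continuous, Polya's theorem upgrades weak convergence to uniform convergence of CDFs on each side, and the triangle inequality yields the sup-norm statement.

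The first step is algebraic. Substituting (\ref{eqn: bootstrap DGP 1})--(\ref{eqn: bootstrap DGP 2}) and (\ref{eqn: eta}) and interchanging the order of summation,
\[
\sqrt{T_j}\bigl(\bar Y_j^\ast-\mu^\ast\bigr)
=\frac{1}{\sqrt{T_j}}\sum_{t=1}^{T_j}\hat u_{jt}\eta_{jt}
=\frac{1}{\sqrt{K_j^\ast}}\sum_{\ell=1}^{K_j^\ast}\bigl(\hat a_{j,1\ell}\,v_{j,1\ell}+\hat a_{j,2\ell}\,v_{j,2\ell}\bigr),
\]
where $\hat a_{j,r\ell}=T_j^{-1/2}\sum_{t=1}^{T_j}\hat u_{jt}\psi_{r,\ell}(t/T_j)$. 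Conditional on the data, the $v_{j,r\ell}$ are i.i.d.\ with mean zero, variance one, and fourth moment bounded by $C$, so this is a sum of $2K_j^\ast$ independent centered bootstrap random variables. Its conditional mean is zero and its conditional variance equals $\hat\Omega_{\mathrm{boot},T_j}$ by (\ref{Omega_boot}). Lemma \ref{Lemma: Consistency of Omegahat} then delivers $\hat\Omega_{\mathrm{boot},T_j}\rightarrow^P \Omega_j$ under the assumed rates on $K_j^\ast$.

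The second step is to verify a Lindeberg condition in bootstrap probability. Writing $Z_{\ell,r}=\hat a_{j,r\ell}v_{j,r\ell}/\sqrt{K_j^\ast}$, Markov's inequality combined with $E^\ast v_{j,r\ell}^4\le C$ bounds the Lindeberg sum by
\[
\frac{C}{\varepsilon^2}\,\Bigl(\max_{\ell,r}\,\tfrac{\hat a_{j,r\ell}^2}{K_j^\ast}\Bigr)\,\hat\Omega_{\mathrm{boot},T_j},
\]
so it suffices to show $\max_{\ell,r}\hat a_{j,r\ell}^2=o_P(K_j^\ast)$. Each $\hat a_{j,r\ell}$ is a mean-zero, bounded trigonometric projection of a fourth-order stationary sequence and is $O_P(1)$ under Assumption \ref{Assumption3: cumulant}; a standard maximal/moment bound exploiting the cumulant summability in Assumption \ref{Assumption3: cumulant} and the uniform boundedness of $\psi_{r,\ell}$ then yields $\max_{\ell,r}|\hat a_{j,r\ell}|=o_P(\sqrt{K_j^\ast})$. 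Invoking the conditional Lindeberg--Feller CLT, we obtain $\sqrt{T_j}(\bar Y_j^\ast-\mu^\ast)\mid\text{data}\;\rightsquigarrow\;N(0,\Omega_j)$ in $P$-probability.

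Finally, combining this with the $N(0,\Omega_j)$ limit for $\sqrt{T_j}(\bar Y_j-\mu_j)$ from Assumption \ref{Assumption 2} and applying Polya's theorem on both sides (along almost sure subsequences for the bootstrap side) gives uniform convergence of each CDF to the $N(0,\Omega_j)$ CDF, from which the $o_P(1)$ sup-norm bound follows. The main obstacle is the uniform projection-coefficient bound $\max_{\ell,r}\hat a_{j,r\ell}^2=o_P(K_j^\ast)$ in the second step: one must control all $2K_j^\ast$ coefficients simultaneously as $K_j^\ast\to\infty$, and this is where Assumptions \ref{Assumption 1} and \ref{Assumption3: cumulant} do the real work. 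The other steps are essentially bookkeeping given Lemma \ref{Lemma: Consistency of Omegahat}.
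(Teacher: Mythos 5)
Your proposal is correct and follows essentially the same route as the paper: reduce to a conditional CLT for $\sqrt{T_j}(\bar Y_j^\ast-\mu^\ast)$ viewed as a weighted sum of the $2K_j^\ast$ independent external variables $v_{j,r\ell}$ with trigonometric projection coefficients as weights, identify the conditional variance with $\hat\Omega_{\mathrm{boot},T_j}$ and invoke Lemma \ref{Lemma: Consistency of Omegahat}, then conclude via P\'{o}lya's theorem. The only cosmetic differences are that you verify Lindeberg via a uniform bound $\max_{\ell,r}\hat a_{j,r\ell}^2=o_P(K_j^\ast)$ whereas the paper verifies Lyapunov with $\delta=2$ by bounding $K_j^{\ast-1}\sum_\ell \hat a_{j,r\ell}^4$ through the cumulant condition (the same Assumption \ref{Assumption3: cumulant} computation delivers your maximal bound), and that you keep $\hat u_{jt}$ in the weights rather than splitting off the $\bar u_j$ correction term as the paper does.
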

 See Section \ref{Sec:appendix} for the definition of $\rightarrow^{d^\ast}$ in probability. The proof is also in Section \ref{Sec:appendix}. Since $\sqrt{T_j} \left( \bar{Y}_j - \mu_j \right) \rightarrow^d N(0,\Omega_j)$,
Theorem \ref{Theorem: Bootstrap consistency} implies bootstrap distributional consistency for $\sqrt{T_j}\left( \bar{Y}_j^\ast - \mu^\ast \right),$ which underpins the bootstrap two-sample test stated in Theorem \ref{Theorem: Bootstrap test} below. Because the bootstrap LRV is consistent under increasing-$K$ asymptotics, it suffices to establish the asymptotic normality of the bootstrap score. The proof of asymptotic normality differs substantially from that of \citet{Shao2010}, who assumes that the external random variables are $\ell_n^\ast$-dependent and uses a large-block-small-block argument. In contrast, using (\ref{eqn: eta}), we rewrite the bootstrap score as
\begin{eqnarray*}
\sqrt{T_j} \bar{u}_j^\ast = \frac{1}{\sqrt{T_j}}\sum_{t=1}^{T_j}\hat{u}_{jt}\eta_{jt}  = \frac{1}{\sqrt{2K_j^\ast}} \sum_{\ell=1}^{K_j^\ast} \sum_{r=1}^2\left[\frac{\sqrt{2}}  {\sqrt{T_j}}\sum_{t=1}^{T_j}\hat{u}_{jt} \psi_{r,\ell}\left( \frac{t}{T_j} \right)\right] v_{j,r\ell}. 
\end{eqnarray*}
Since $v_{j,r\ell} \sim^{iid}(0,1)$, the bootstrap score is a sum of independent, heterogeneous random vectors conditional on the original sample, and Lyapunov's CLT applies under increasing-$K$ asymptotics.

\subsection{Bootstrap Two-Sample Tests}

In this subsection, we propose the SHAR-WB procedure for testing $H_0: \mu_1 - \mu_2 = \Delta_0$, while allowing for unequal LRVs. The bootstrap Wald and $t$-statistics are defined as:
\[
W_{1,\mathrm{HAR}}^\ast = \left(\bar{Y}_1^\ast - \bar{Y}_2^\ast - \Delta_0\right)^\prime \left(\frac{\hat{\Omega}_1^\ast}{T_1} + \frac{\hat{\Omega}_2^\ast}{T_2}\right)^{-1} \left(\bar{Y}_1^\ast - \bar{Y}_2^\ast - \Delta_0 \right), 
\]
and 
\begin{equation*}
t_{1,\mathrm{HAR}}^\ast=\frac{\bar{Y}_{1}^\ast-\bar{Y}_{2}^\ast - \Delta_0}{\sqrt{\frac{\hat{\Omega}_1^\ast}{T_1} + \frac{\hat{\Omega}_2^\ast}{T_2}}},
\end{equation*}%
where 
\[
\hat{\Omega}_j^\ast = \frac{1}{K_j} \sum_{\ell=1}^{K_j} \hat{\Omega}_{j,\ell}^\ast, \quad 
\hat{\Omega}_{j,\ell}^\ast := \hat{z}_{j,\ell}^\ast \hat{z}_{j,\ell}^{\ast \prime} , \quad \hat{z}_{j,\ell}^\ast := \frac{1}{\sqrt{T_j}}\sum_{t=1}^{T_j} \phi_\ell\left(\frac{t}{T_j}\right) \hat{u}_{jt}^\ast,
\]
with $\hat{u}_{jt}^{\ast} ={Y}_{jt}^{\ast}-\bar{Y}_j^{\ast} $.

We use the same basis functions and the same numbers of basis functions $K_1$ and $K_2$ to construct the series HAR estimators in both the original and bootstrap statistics, ensuring that the bootstrap and original studentizations are matched.

\begin{theorem} \label{Theorem: Bootstrap test}
    Suppose Assumptions \ref{Assumption 1}-\ref{assumption4: external rv} hold and $E^\ast v_{j,r\ell}^4 < C$ for $r=1,2$ and $C<\infty$.  If $K_j \rightarrow \infty$ and $K_j^\ast \rightarrow \infty$ as $T_j \rightarrow \infty$ such that $K_j / T_j \rightarrow 0$ and $K_j^\ast / T_j \rightarrow 0$ for $j=1,2,$ then, 
    \[
    \sup_{x \in \mathbf{R}} \left| P^\ast \left( W_{1,\mathrm{HAR}}^\ast < x\right) - P \left( W_{1,\mathrm{HAR}} < x \right) \right|  = o_P(1).
    \]
    If $p=1$,
    \[
    \sup_{x \in \mathbf{R}} \left| P^\ast \left( t_{1,\mathrm{HAR}}^\ast < x\right) - P \left( t_{1,\mathrm{HAR}} < x \right) \right|  = o_P(1).
    \]
\end{theorem}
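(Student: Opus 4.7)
The plan is to reduce the uniform closeness to the bootstrap asymptotic normality $t_{1,\text{HAR}}^\ast \rightarrow^{d^\ast} N(0,1)$ in probability. Because Theorem \ref{Theorem: Asymptotic Normality} gives $t_{1,\text{HAR}} \rightarrow^d N(0,1)$ under $H_0$ with a continuous limiting CDF, a bootstrap version of Polya's theorem converts pointwise convergence of the bootstrap CDF (in probability) to this continuous limit into the desired uniform closeness on $\mathbf{R}$. I would therefore treat the bootstrap numerator and denominator of $t_{1,\text{HAR}}^\ast$ separately and combine them through a Slutsky argument applied in the bootstrap probability.

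\paragraph{Numerator.} Because the bootstrap imposes the null by centering both series at the common value $\mu^\ast$, $\bar Y_1^\ast - \bar Y_2^\ast = \bar u_1^\ast - \bar u_2^\ast$. Theorem \ref{Theorem: Bootstrap consistency} already supplies the conditional asymptotic normality $\sqrt{T_j}\,\bar u_j^\ast \rightarrow^{d^\ast} N(0,\Omega_j)$ in probability for each $j$. The external random variables $\{\eta_{1t}\}$ and $\{\eta_{2t}\}$ are generated from disjoint collections of i.i.d.\ $v_{j,r\ell}$'s and independently of the data, so the two bootstrap sample paths are conditionally independent; the joint conditional law of $(\sqrt{T_1}\bar u_1^\ast,\sqrt{T_2}\bar u_2^\ast)$ then tends to a product of independent Gaussians, mirroring the joint limit used in the proof of Theorem \ref{Theorem: Asymptotic Normality}.

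\paragraph{Denominator.} The core of the proof is the bootstrap consistency $\hat\Omega_j^\ast \rightarrow^{P^\ast} \Omega_j$ in probability for $j=1,2$. Because $\int_0^1 \phi_\ell(x)\,dx=0$, the sample centering in $\hat u_{jt}^\ast=u_{jt}^\ast-\bar u_j^\ast$ contributes only an asymptotically negligible term to the projections, so one may work directly with $\hat u_{jt}\eta_{jt}$. A bias-plus-variance decomposition in the bootstrap probability then reduces the claim to two pieces: (a) a mean calculation showing $E^\ast[\hat\Omega_j^\ast]=\hat\Omega_{\text{boot},T_j}+o_P(1)$, which together with Lemma \ref{Lemma: Consistency of Omegahat} yields convergence of the bootstrap mean to $\Omega_j$; and (b) a bootstrap variance bound $\mathrm{Var}^\ast(\hat\Omega_j^\ast)=o_P(1)$, obtained by expanding the fourth conditional moment of the projections as a polynomial in the i.i.d.\ $v_{j,r\ell}$'s and using $E^\ast v_{j,r\ell}^4\le C$ to show that every surviving Wick-type pairing carries an extra $1/K_j$ or $1/K_j^\ast$ factor. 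With both pieces in hand, Slutsky in the bootstrap probability delivers $t_{1,\text{HAR}}^\ast\rightarrow^{d^\ast} N(0,1)$ in probability.

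\paragraph{Main obstacle.} The hardest step is the bias calculation in (a). The quantity $E^\ast[\hat\Omega_j^\ast]$ is a triple sum over the estimator's basis index $\ell\le K_j$, the frequency index $m\le K_j^\ast$ entering $\mathrm{Cov}^\ast(\eta_{jt},\eta_{js})$, and the time pair $(t,s)$. Showing that it coincides with $\hat\Omega_{\text{boot},T_j}$ up to a vanishing remainder requires identifying the Dirichlet-type kernel produced by the $\ell$-average of the cosine/sine basis, controlling the difference between $\hat u_{jt}\hat u_{js}$ and $u_{jt}u_{js}$ when summed against these trigonometric weights by appealing to the cumulant summability in Assumption \ref{Assumption3: cumulant}, and coordinating the two rate requirements $K_j/T_j\to 0$ and $K_j^\ast/T_j\to 0$ so that the biases coming from a finite $K_j$ and a finite $K_j^\ast$ vanish simultaneously.
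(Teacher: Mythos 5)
Your proposal is correct in outline and shares the paper's overall architecture: the numerator is handled by the conditional CLT already established in Theorem \ref{Theorem: Bootstrap consistency}, the whole problem reduces to bootstrap consistency of the studentization, and a Slutsky--P\'olya argument finishes. Where you differ is in how that consistency is organized. The paper never computes $E^\ast[\hat\Omega_j^\ast]$ or $\mathrm{Var}^\ast(\hat\Omega_j^\ast)$ directly; instead it writes $\hat\Omega_j^\ast - \hat\Omega_{\text{boot},T_j} = (\hat\Omega_j^\ast - \hat\Omega_j) + (\hat\Omega_j - \hat\Omega_{\text{boot},T_j})$, disposes of the second bracket by the already-proved consistency of $\hat\Omega_j$ (Theorem \ref{Theorem: Asymptotic Normality}) and of $\hat\Omega_{\text{boot},T_j}$ (Lemma \ref{Lemma: Consistency of Omegahat}), and then shows $\hat\Omega_j^\ast - \hat\Omega_j = o_{P^\ast}(1)$ by expanding $\hat u_{jt}^\ast \hat u_{js}^\ast - \hat u_{jt}\hat u_{js}$ into a centered fluctuation term $u_{jt}u_{js}(\eta_{jt}\eta_{js}-E^\ast\eta_{jt}\eta_{js})$ plus bias and recentering terms. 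This is the same bias-plus-variance content as your (a) and (b), just anchored to $\hat\Omega_j$ rather than to $\hat\Omega_{\text{boot},T_j}$; the paper's route lets it reuse Lemma \ref{Lemma: Consistency of Omegahat} wholesale, while yours would force you to redo a chunk of that lemma's Dirichlet-kernel analysis inside the bias step, exactly as you anticipate in your ``main obstacle'' paragraph.

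One claim in your variance step is too optimistic as stated: it is not true that ``every surviving Wick-type pairing carries an extra $1/K_j$ or $1/K_j^\ast$ factor.'' In the paper's computation of $E(\mathrm{Var}^\ast(D_{11}))$, only the fourth-cumulant term of the $v_{j,r\ell}$'s (the analogue of your non-Gaussian contribution, controlled by $E^\ast v_{j,r\ell}^4 \le C$) is automatically $O(1/K_j^\ast)$. The two Gaussian pairings produce products of two covariance kernels summed over all four time indices; these are $O(1)$ a priori and vanish only because the estimator basis $\{\phi_\ell\}$ and the bootstrap basis $\{\psi_{r,m}\}$ are asymptotically orthogonal at distinct frequencies (Lemma \ref{Lemma: Basis function}) and because $\sum_h |h|^3|\gamma_j(h)| < \infty$ (Assumption \ref{Assumption3: cumulant}) lets one Taylor-expand $\psi_{a,k}((t+h)/T_j)$ around $t/T_j$. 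The same orthogonality machinery is also what kills the recentering cross-terms $\bar u_j \cdot T_j^{-1/2}\sum_t \phi_\ell(t/T_j)\eta_{jt}$ (the paper's $D_2$ and $D_3$), which your appeal to $\int_0^1\phi_\ell(x)\,dx = 0$ does not cover on its own, since the weight $\phi_\ell(t/T_j)\eta_{jt}$ is random in $t$. These are repairable with the ingredients you already name, but as written the variance bound would not go through by a $1/K$ counting argument alone.
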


The proof is provided in Section \ref{Sec:appendix}. Theorem \ref{Theorem: Bootstrap test} establishes the first-order validity of the proposed bootstrap test by showing that, under increasing-$K$ asymptotics, the bootstrap distributions of $W_{1,\mathrm{HAR}}^\ast$ and $t_{1,\mathrm{HAR}}^\ast$ converge to the null distributions of $W_{1,\mathrm{HAR}}$ and $t_{1,\mathrm{HAR}}$, respectively. Because the bootstrap data-generating process in (\ref{eqn: bootstrap DGP 1}) imposes the null restriction, the resulting bootstrap critical values are asymptotically valid whether or not the null hypothesis is true.

Next, we outline the steps to implement our SHAR-WB test for $H_0: \mu_1 - \mu_2 = \Delta_0$. We present the algorithm for the multivariate Wald statistic; the scalar SHAR-WB $t$-test follows immediately.

\begin{description}
\item[Algorithm: Series HAR wild bootstrap for testing $H_0: \mu_1 - \mu_2 = \Delta_0$]
\end{description}

\setcounter{bean}{0} 
\begin{center}
\begin{list}{\textsc{Step} \arabic{bean}.}{\usecounter{bean}}

\item Compute the residuals:
\[
\hat{u}_{jt} = Y_{jt} - \bar{Y}_j, 
\qquad 
\bar{Y}_j = \frac{1}{T_j}\sum_{t=1}^{T_j} Y_{jt}, 
\quad j=1,2.
\]


\item For a given choice of $\hat{K}_j$, compute the series HAR variance estimator:
\[
\hat{\Omega}_j = \frac{1}{\hat{K}_j} \sum_{\ell=1}^{\hat{K}_j} \hat{\Omega}_{j,\ell},
\quad
\hat{\Omega}_{j,\ell} = \hat{z}_{j,\ell} \hat{z}_{j,\ell}^\prime,
\quad
\hat{z}_{j,\ell} = \frac{1}{\sqrt{T_j}}\sum_{t=1}^{T_j} \phi_\ell\!\left(\frac{t}{T_j}\right)\hat{u}_{jt},
\]
where $\{\phi_{\ell}(\cdot)\}_{\ell=1}^{\hat{K}_j}$ are the basis functions defined in (\ref{Basis function}). The selection procedure for $\hat{K}_j$ is discussed in Section \ref{Section:Simulation}.

\item Compute the series HAR two-sample Wald statistic:
\[
W_{1,\mathrm{HAR}} =  \left(\bar{Y}_{1}-\bar{Y}_{2} - \Delta_0 \right)^\prime \left( \frac{\hat{\Omega}_1}{T_1} + \frac{\hat{\Omega}_2}{T_2} \right)^{-1} \left(\bar{Y}_{1}-\bar{Y}_{2} - \Delta_0 \right). 
\]

\item Generate $\eta_{jt}$ as
\[
\eta_{jt} = \frac{1}{\sqrt{K_j^\ast}} 
\sum_{\ell=1}^{K_j^\ast} \Big[\psi_{1,\ell}\!\left( \frac{t}{T_j} \right) v_{j,1\ell} 
+ \psi_{2,\ell}\!\left( \frac{t}{T_j} \right) v_{j,2\ell}\Big],
\]
where $v_{j,1\ell}$ and $v_{j,2\ell}$ are i.i.d. random variables with mean zero and variance one. $\{\psi_{1,\ell}(\cdot)\}$ and $\{\psi_{2,\ell}(\cdot)\}$ are defined in (\ref{eqn: Basis function 1}). We set $K_j^\ast = \hat{K}_j.$

\item Generate the SHAR-WB bootstrap sample $\{Y_{jt}^\ast\}$:
\[
Y_{jt}^\ast \;=\; \mu_{0j}^\ast + u_{jt}^\ast,
\qquad
\mu_{01}^\ast \;=\; \frac{T_1 \bar Y_1 + T_2 \bar Y_2}{T_1+T_2} + \frac{\Delta_0}{2}, \qquad \mu_{02}^\ast = \mu_{01}^\ast - \Delta_0,\qquad
u_{jt}^\ast = \hat{u}_{jt}\eta_{jt}.
\]

\item Compute the SHAR-WB two-sample Wald statistic:
\[
W_{1,\mathrm{HAR}}^\ast = \left(\bar{Y}_1^\ast - \bar{Y}_2^\ast - \Delta_0\right)^\prime \left(\frac{\hat{\Omega}_1^\ast}{T_1} + \frac{\hat{\Omega}_2^\ast}{T_2}\right)^{-1} \left(\bar{Y}_1^\ast - \bar{Y}_2^\ast - \Delta_0 \right), 
\]
where 
\[
\hat{\Omega}_j^\ast = \frac{1}{\hat{K}_j} \sum_{\ell=1}^{\hat{K}_j} 
\hat{\Omega}_{j,\ell}^\ast,
\quad
\hat{\Omega}_{j,\ell}^\ast = \hat{z}_{j,\ell}^\ast \hat{z}_{j,\ell}^{\ast\prime},
\quad
\hat{z}_{j,\ell}^\ast = \frac{1}{\sqrt{T_j}}
\sum_{t=1}^{T_j} \phi_\ell\!\left(\frac{t}{T_j}\right)\hat{u}_{jt}^\ast,
\]
with $\hat{u}_{jt}^{\ast} = Y_{jt}^{\ast} - \bar{Y}_j^{\ast}$ and $\bar{Y}_j^{\ast} = \frac{1}{T_j}\sum_{t=1}^{T_j} Y_{jt}^{\ast}$.  
Note that $\hat{K}_j$ is the same as in Step~(2).

\item Obtain the empirical distribution of $W_{1,\mathrm{HAR}}^\ast$ by repeating Steps (4)–(6) $B$ times:
\[
\hat{p}_{1,\mathrm{HAR}}^{\ast}(w) = \frac{1}{B}\sum_{b=1}^{B} 1\{W_{1,\mathrm{HAR},b}^\ast \le w\},
\]
where $1\{\cdot\}$ denotes the indicator function.

\item Compute the $\alpha$-level bootstrap critical value 
$w_{1,\mathrm{HAR},1-\alpha}^{\ast}$, 
where $w_{1,\mathrm{HAR},1-\alpha}^{\ast}$ denotes the $(1-\alpha)$th quantile of $\hat{p}_{1,\mathrm{HAR}}^{\ast}$.

\item Reject $H_0: \mu_1 - \mu_2 = \Delta$ at level $\alpha$ if 
$W_{1,\mathrm{HAR}} > w_{1,\mathrm{HAR},1-\alpha}^{\ast}.$

\end{list}
\end{center}

\section{Simulation Experiments} \label{Section:Simulation}
In this section, we evaluate the finite-sample behaviour of the proposed procedures and benchmark them against the grouped two-sample $t$-test of \citet{Ibragimov_Mueller2016} (hereafter IM) and, in the multivariate case, against a Hotelling $T^2$ procedure that we obtain by extending the IM grouping idea to a vector of means. We emphasise that \citet{Ibragimov_Mueller2016} develop their test for a scalar parameter; the multivariate Hotelling version is not theirs but our own generalization, which we introduce here so that a like-for-like comparison is available in the multivariate setting. We study five objects: the classical two-sample $t$-test and Welch's $t$-test; the increasing-$K$ chi-square test based on $W_{1,\mathrm{HAR}}$ (HAR-$\chi^2$); the Welch-type fixed-$K$ $F$-approximation (HAR-$F$); and the series HAR wild bootstrap (SHAR-WB). The scalar design ($p=1$) is retained as the leading benchmark, since it permits a direct comparison with the classical and Welch tests and with the scalar IM $t$-test. We then turn to a genuinely multivariate design, which is the setting of the new theory, where the null restricts a vector of means.

\paragraph{Design.} For $j=1,2$, the data are generated by a stationary vector autoregression,
\begin{equation}
    Y_{jt}=\mu_j+u_{jt}, \qquad
    u_{jt}=\rho\, u_{j,t-1}+\sqrt{1-\rho^{2}}\;\Sigma_j^{1/2}\varepsilon_{jt},
    \label{eq:mc_var_design}
\end{equation}
where $Y_{jt}$ is $p\times 1$, the innovations $\varepsilon_{jt}$ have mean zero and identity covariance, and the two samples are independent. The scaling $\sqrt{1-\rho^{2}}$ normalises the contemporaneous covariance of $u_{jt}$ to $\Sigma_j$, so that changing $\rho$ isolates the effect of persistence on the long-run variance $\Omega_j=(1+\rho)(1-\rho)^{-1}\Sigma_j$. In the scalar design $p=1$ and $\Sigma_j=1$. In the multivariate design $p=3$ (unless stated otherwise) and
\[
    \Sigma_j=D_j R D_j, \qquad R_{ab}=0.5^{|a-b|},
\]
where $D_1=I_p$ throughout, $D_2=I_p$ in the equal-LRV case, and $D_2=\mathrm{diag}(1,1.5,2)$ in the unequal-LRV case. We consider $\rho\in\{0,0.5,0.8,0.95\}$ and balanced samples $T_1=T_2=T\in\{30,50,100,200,400\}$. Under the null $\mu_1=\mu_2$. Power is evaluated under local alternatives $\mu_1-\mu_2=cT^{-1/2}\iota_p$, which keep rejection frequencies in an informative range across $T$ and dependence levels. We report results for normal innovations; centred and standardised $\chi^2(1)$ innovations produce very similar patterns and are omitted to save space. All rejection frequencies use $1{,}000$ Monte Carlo replications and $B=399$ bootstrap draws at the $5\%$ nominal level.

\paragraph{Choice of $K$ and a rank safeguard.} For each replication and sample, $K_j$ is selected by the data-driven rule of \citet[Eqs.~(5.3)--(5.5)]{Sun2013},
\begin{eqnarray} \label{eqn: selection of K}
\hat{K}_j=\left\lceil 0.42293\,\left|\bar{B}_{j}\right|^{-1/3}T_{j}^{2/3} \right\rceil,
\end{eqnarray}
where $\bar{B}_{j}=\hat{B}_{j}/\hat{\Sigma}_{j}$ and $(\hat{A}_j,\hat{B}_j,\hat{\Sigma}_j)$ are the standard AR(1) plug-in quantities of \citet{Sun2013}, reported in Section \ref{app:ksel}. In the multivariate design, these scalar quantities are computed from the first principal component of the demeaned residuals (keeping the rule invariant to coordinate rotations), and the selected $K_j$ is used for all coordinates. Since $\mathrm{rank}(\hat{\Omega}_j)\le K_j$, the joint statistic requires $K_1+K_2\ge p$ for a nonsingular denominator; because \eqref{eqn: selection of K} can return very small $K_j$ under strong persistence, we impose the mild floor $K_j\ge\lceil (p+1)/2\rceil$. This is the series HAR analogue of the grouped test's $q>p$ requirement, but far weaker: it binds only under strong dependence and leaves the scalar case unchanged.

\subsection{Scalar size and comparison with \texorpdfstring{\citet{Ibragimov_Mueller2016}}{Ibragimov \& Mueller (2016)}}
We implement IM by splitting each sample into $q_j\in\{6,8,10\}$ consecutive, non-overlapping blocks and applying the two-sample grouped $t$-test to the block means with $\min(q_1,q_2)-1$ degrees of freedom. Because the method does not prescribe a block count in this setting, we report all three choices rather than the most favourable one.

Table~\ref{tab:mc_scalar_size_im} reports empirical size. The classical and Welch tests over-reject sharply as $\rho$ grows, reaching about $52\%$ at $\rho=0.8$ irrespective of $T$, and HAR-$\chi^2$ also over-rejects, though far less. HAR-$F$ and SHAR-WB control size well: for $\rho\le0.8$ they stay close to nominal (about $5$--$7\%$ for $\rho\le0.5$ and $4$--$11\%$ at $\rho=0.8$), deteriorating only in the near-unit-root case $\rho=0.95$, where every procedure struggles. The IM benchmark is markedly sensitive to the block count---conservative at $\rho=0$ (about $2$--$4\%$) but liberal when blocks are short relative to the persistence (e.g.\ $\mathrm{IM}(10)$ reaches $26\%$ at $\rho=0.8$, $T=30$)---with no single $q$ controlling size across the range. IM is asymptotically valid; the point is that its finite-sample size hinges on a tuning choice the practitioner must make without knowing the dependence, whereas HAR-$F$ and SHAR-WB require none.

\begin{table}[!t]
\centering
\caption{Scalar empirical size and comparison with \citet{Ibragimov_Mueller2016}}
\label{tab:mc_scalar_size_im}
{\footnotesize\begin{tabular}{llrrrrrrrr}
\toprule
$T$ & $\rho$ & Classical & Welch & HAR-$\chi^2$ & HAR-$F$ & SHAR-WB & IM(6) & IM(8) & IM(10) \\
\midrule
30 & 0.00 & 4.73 & 4.73 & 7.07 & 4.60 & 5.07 & 2.80 & 3.13 & 3.47 \\
30 & 0.50 & 28.13 & 28.13 & 14.60 & 5.47 & 6.00 & 5.13 & 7.33 & 9.93 \\
30 & 0.80 & 53.80 & 53.73 & 27.27 & 10.53 & 10.27 & 14.93 & 21.07 & 26.47 \\
30 & 0.95 & 78.07 & 77.93 & 60.67 & 39.67 & 40.00 & 47.87 & 54.87 & 60.60 \\
50 & 0.00 & 4.93 & 4.93 & 6.07 & 4.93 & 4.60 & 2.93 & 3.60 & 3.93 \\
50 & 0.50 & 26.80 & 26.80 & 11.47 & 5.33 & 6.13 & 4.33 & 5.73 & 7.73 \\
50 & 0.80 & 52.93 & 52.93 & 18.33 & 6.53 & 6.80 & 7.80 & 13.13 & 17.73 \\
50 & 0.95 & 77.67 & 77.67 & 49.60 & 28.80 & 29.53 & 35.93 & 45.27 & 50.67 \\
100 & 0.00 & 4.73 & 4.73 & 6.07 & 5.07 & 5.40 & 2.87 & 3.33 & 3.93 \\
100 & 0.50 & 24.80 & 24.80 & 9.67 & 4.80 & 5.93 & 3.27 & 3.67 & 4.67 \\
100 & 0.80 & 53.20 & 53.20 & 15.07 & 4.80 & 6.93 & 5.60 & 8.40 & 10.27 \\
100 & 0.95 & 79.47 & 79.40 & 34.27 & 16.47 & 15.80 & 21.33 & 28.93 & 35.73 \\
200 & 0.00 & 4.60 & 4.60 & 5.07 & 4.27 & 4.47 & 2.27 & 3.07 & 3.40 \\
200 & 0.50 & 25.60 & 25.60 & 8.27 & 5.20 & 5.33 & 2.73 & 3.87 & 4.13 \\
200 & 0.80 & 52.40 & 52.40 & 13.20 & 4.27 & 6.93 & 4.07 & 4.67 & 5.67 \\
200 & 0.95 & 76.87 & 76.87 & 23.20 & 8.60 & 10.20 & 12.07 & 17.27 & 22.60 \\
400 & 0.00 & 5.00 & 5.00 & 5.20 & 4.93 & 4.67 & 2.40 & 3.47 & 3.40 \\
400 & 0.50 & 26.27 & 26.27 & 7.73 & 6.27 & 6.13 & 4.00 & 4.07 & 5.53 \\
400 & 0.80 & 53.07 & 53.07 & 12.60 & 4.80 & 7.07 & 4.20 & 4.80 & 5.27 \\
400 & 0.95 & 75.87 & 75.87 & 14.80 & 5.47 & 6.47 & 6.27 & 8.67 & 12.00 \\
\bottomrule
\end{tabular}
}
\begin{minipage}{0.95\textwidth}
\footnotesize\emph{Notes:} Rejection frequencies (\%) at the $5\%$ nominal level under $H_0:\mu_1=\mu_2$, $p=1$. IM($q$) is the grouped two-sample $t$-test of \citet{Ibragimov_Mueller2016} with $q_1=q_2=q$ consecutive blocks. HAR-$\chi^2$ is the increasing-$K$ chi-square test, HAR-$F$ the Welch-type fixed-$K$ $F$-approximation, and SHAR-WB the series HAR wild bootstrap.
\end{minipage}
\end{table}

On a size-adjusted (oracle) basis the scalar ranking reverses: a well-tuned IM is somewhat more powerful than SHAR-WB, especially under strong dependence (e.g.\ at $T=200$, $\rho=0.8$, $c=8$, about $41$--$44\%$ for IM versus $34\%$ for HAR-$F$ and $27\%$ for SHAR-WB). This is the expected---and, since size adjustment is infeasible, purely notional---cost of a tuning-free procedure in one dimension; the full table is in Section \ref{app:extra-results}. As shown next, the trade-off turns decisively in favour of the proposed tests once $p>1$.

\subsection{Multivariate size}
We now test the joint null $H_0:\mu_1-\mu_2=0$ with $p=3$. Because \citet{Ibragimov_Mueller2016} develop their procedure for a scalar parameter, the multivariate comparison requires a vector analogue, which we construct explicitly and in the way most faithful --- and most favourable --- to the competitor.

\paragraph{A multivariate Ibragimov--M\"uller benchmark.} Since IM is scalar, we adopt the natural vector analogue: partition each sample into $q_j$ consecutive blocks, form the block-mean vectors, and apply a two-sample Hotelling $T^2$ statistic, $F$-calibrated with $\min(q_1,q_2)-p$ residual degrees of freedom and defined only when $q>p$. This is the most favourable implementation for the grouping approach---the invariant, most powerful quadratic form against the dense alternatives considered here---and it collapses \emph{exactly} to the scalar IM $t$-test when $p=1$. The constraint $q>p$ is intrinsic, not a handicap we impose: the $p\times p$ block covariance is estimated from only $q_j$ vectors, leaving $q-p$ residual degrees of freedom, so the grouping approach is structurally handicapped as $p$ grows---exactly the mechanism behind Table~\ref{tab:mc_dim_scaling}. A full derivation, including the exact $p=1$ equivalence and the degrees-of-freedom argument, is given in Section \ref{app:im-mv}.

Table~\ref{tab:mc_multivar_size} reports empirical size for the unequal-LRV case; the equal-LRV panel, which is qualitatively identical, is in Section \ref{Sec:appendix} (Table~\ref{tab:mc_multivar_size_equal}).

The contrast with the scalar case is stark. The consistent HAR-$\chi^2$ test is severely oversized (reaching about $96\%$ under strong persistence), confirming that the joint chi-square approximation is unusable in finite samples. Both refinements repair this: HAR-$F$ and especially SHAR-WB hold size close to nominal for $\rho\le0.8$ (typically $4$--$8\%$), with SHAR-WB the more reliable and the only procedure near $5\%$ at $\rho=0.95$ once $T\ge200$. The grouped Hotelling benchmark has \emph{no} block count that controls size across the grid: $\mathrm{IM}(6)$ is degenerate-conservative (often below $1\%$) and nearly powerless, whereas $\mathrm{IM}(8)$ and $\mathrm{IM}(10)$ swing from conservative at low $\rho$ to badly oversized under dependence (up to $40\%$ at $\rho=0.8$ and $80$--$90\%$ at $\rho=0.95$). Few blocks starve the statistic of residual degrees of freedom; many short blocks expose it to within-block dependence---no intermediate $q$ resolves both. SHAR-WB faces no such dilemma, since its effective degrees of freedom come from the $K_1+K_2$ projections rather than a coarse partition.

\begin{table}[!t]
\centering
\caption{Multivariate empirical size for the joint test ($p=3$, unequal LRV)}
\label{tab:mc_multivar_size}
{\footnotesize\begin{tabular}{lllrrrrrr}
\toprule
LRV & $T$ & $\rho$ & HAR-$\chi^2$ & HAR-$F$ & SHAR-WB & IM(6) & IM(8) & IM(10) \\
\midrule
Equal & 30 & 0.00 & 14.20 & 3.30 & 4.90 & 0.00 & 0.70 & 1.70 \\
Equal & 30 & 0.50 & 46.70 & 4.70 & 4.90 & 1.20 & 2.10 & 6.70 \\
Equal & 30 & 0.80 & 71.90 & 11.50 & 7.50 & 5.60 & 21.20 & 38.00 \\
Equal & 30 & 0.95 & 94.50 & 41.60 & 30.10 & 51.20 & 80.00 & 88.20 \\
Equal & 50 & 0.00 & 11.60 & 5.10 & 5.40 & 0.20 & 1.20 & 1.80 \\
Equal & 50 & 0.50 & 40.50 & 5.30 & 5.80 & 0.60 & 1.90 & 3.70 \\
Equal & 50 & 0.80 & 58.30 & 7.60 & 5.10 & 1.80 & 9.40 & 19.10 \\
Equal & 50 & 0.95 & 89.80 & 27.60 & 18.70 & 32.90 & 64.60 & 76.60 \\
Equal & 100 & 0.00 & 7.70 & 5.10 & 5.50 & 0.40 & 1.00 & 1.60 \\
Equal & 100 & 0.50 & 26.10 & 5.00 & 5.80 & 0.60 & 1.60 & 3.00 \\
Equal & 100 & 0.80 & 50.40 & 6.30 & 4.80 & 0.90 & 3.80 & 8.30 \\
Equal & 100 & 0.95 & 76.00 & 13.40 & 8.20 & 8.50 & 34.30 & 51.10 \\
Equal & 200 & 0.00 & 6.80 & 4.70 & 5.30 & 0.20 & 1.00 & 0.90 \\
Equal & 200 & 0.50 & 16.80 & 4.60 & 5.20 & 0.30 & 0.70 & 1.50 \\
Equal & 200 & 0.80 & 49.20 & 4.60 & 4.10 & 0.60 & 2.10 & 2.80 \\
Equal & 200 & 0.95 & 59.40 & 7.60 & 4.00 & 1.60 & 9.70 & 21.10 \\
Equal & 400 & 0.00 & 5.60 & 4.30 & 4.40 & 0.10 & 0.80 & 1.20 \\
Equal & 400 & 0.50 & 11.80 & 4.90 & 5.20 & 0.10 & 0.90 & 0.90 \\
Equal & 400 & 0.80 & 37.80 & 3.10 & 4.10 & 0.30 & 1.10 & 1.90 \\
Equal & 400 & 0.95 & 51.80 & 6.00 & 4.30 & 1.00 & 3.80 & 8.00 \\
Unequal & 30 & 0.00 & 16.90 & 4.40 & 5.90 & 0.10 & 1.40 & 1.60 \\
Unequal & 30 & 0.50 & 47.30 & 5.70 & 5.90 & 0.70 & 2.50 & 6.20 \\
Unequal & 30 & 0.80 & 72.50 & 11.60 & 7.80 & 4.90 & 23.80 & 40.30 \\
Unequal & 30 & 0.95 & 95.80 & 42.80 & 28.70 & 54.50 & 79.80 & 90.60 \\
Unequal & 50 & 0.00 & 13.30 & 5.00 & 5.50 & 0.30 & 1.10 & 1.60 \\
Unequal & 50 & 0.50 & 39.80 & 4.30 & 4.60 & 0.40 & 2.60 & 4.20 \\
Unequal & 50 & 0.80 & 59.40 & 7.90 & 4.80 & 1.50 & 8.70 & 20.30 \\
Unequal & 50 & 0.95 & 90.70 & 28.50 & 17.40 & 30.40 & 65.00 & 76.30 \\
Unequal & 100 & 0.00 & 10.30 & 6.10 & 7.40 & 0.20 & 1.50 & 2.30 \\
Unequal & 100 & 0.50 & 29.70 & 3.40 & 4.50 & 0.30 & 1.20 & 2.30 \\
Unequal & 100 & 0.80 & 54.10 & 7.50 & 4.80 & 0.60 & 4.30 & 8.00 \\
Unequal & 100 & 0.95 & 79.40 & 16.90 & 9.40 & 12.80 & 37.30 & 54.40 \\
Unequal & 200 & 0.00 & 7.10 & 4.60 & 5.40 & 0.40 & 1.00 & 2.00 \\
Unequal & 200 & 0.50 & 19.20 & 4.70 & 5.70 & 0.30 & 1.00 & 1.90 \\
Unequal & 200 & 0.80 & 47.80 & 4.70 & 4.00 & 0.30 & 1.80 & 3.30 \\
Unequal & 200 & 0.95 & 61.60 & 10.50 & 5.90 & 3.10 & 11.40 & 24.90 \\
Unequal & 400 & 0.00 & 6.90 & 5.50 & 6.00 & 0.20 & 0.80 & 1.40 \\
Unequal & 400 & 0.50 & 11.80 & 4.10 & 5.30 & 0.40 & 1.40 & 1.40 \\
Unequal & 400 & 0.80 & 39.30 & 3.60 & 4.60 & 1.10 & 1.60 & 2.50 \\
Unequal & 400 & 0.95 & 52.70 & 6.90 & 4.80 & 0.90 & 4.30 & 9.50 \\
\bottomrule
\end{tabular}
}
\begin{minipage}{0.95\textwidth}
\footnotesize\emph{Notes:} Rejection frequencies (\%) at the $5\%$ level for $H_0:\mu_1-\mu_2=0$, $p=3$, unequal LRV ($D_1=I_p$, $D_2=\mathrm{diag}(1,1.5,2)$); the equal-LRV panel ($D_1=D_2=I_p$) is in Section \ref{Sec:appendix}. IM($q$) is the grouped two-sample Hotelling $T^2$ test on $q$ block-mean vectors, defined only for $q>p$.
\end{minipage}
\end{table}

\subsection{Multivariate power and the role of the dimension}
Raw and size-adjusted power for $p=3$ are reported in Section \ref{Sec:appendix} (Tables~\ref{tab:mc_multivar_power}--\ref{tab:mc_multivar_power_adj}). Two points stand out. First, raw power must be read jointly with size: the high HAR-$\chi^2$ rejection rates merely reflect its gross over-rejection and are not usable, whereas among size-controlled procedures SHAR-WB delivers the most trustworthy feasible power and feasible IM is either powerless (small $q$) or size-distorted (large $q$). Second, on an oracle size-adjusted basis a well-tuned IM remains competitive, mirroring the scalar finding; but this comparison is infeasible and presumes a size-controlling block count that, as Table~\ref{tab:mc_multivar_size} shows, does not exist uniformly.

The decisive comparison concerns scaling with the dimension. Table~\ref{tab:mc_dim_scaling} fixes $T=200$, $\rho=0.6$ and a local alternative and varies $p$. SHAR-WB retains near-nominal size and non-trivial power throughout, degrading only gently as $p$ grows. Feasible IM does not survive: $\mathrm{IM}(8)$'s power collapses (from $46\%$ at $p=2$ to $0.3\%$ at $p=6$, and it is \emph{undefined} at $p=8$, where $q\le p$), and even $\mathrm{IM}(12)$ falls to $2\%$ at $p=8$. For small $p$ a generously tuned IM can be more powerful (at $p\le3$, $\mathrm{IM}(12)$ dominates), but from $p=6$ onward SHAR-WB dominates every feasible IM configuration---precisely the regime targeted by the multivariate theory, where a grouping-based approach is fundamentally handicapped.

\begin{table}[!t]
\centering
\caption{Effect of the dimension $p$ on size and power ($T=200$, $\rho=0.6$)}
\label{tab:mc_dim_scaling}
{\footnotesize\begin{tabular}{l cc cc cc cc}
\toprule
 & \multicolumn{2}{c}{SHAR-WB} & \multicolumn{2}{c}{HAR-$F$} & \multicolumn{2}{c}{IM(8)} & \multicolumn{2}{c}{IM(12)} \\
\cmidrule(lr){2-3}\cmidrule(lr){4-5}\cmidrule(lr){6-7}\cmidrule(lr){8-9}
$p$ & Size & Power & Size & Power & Size & Power & Size & Power \\
\midrule
2 & 4.2 & 48.2 & 3.2 & 44.8 & 1.5 & 45.5 & 3.0 & 61.4 \\
3 & 4.0 & 46.0 & 3.4 & 38.9 & 0.6 & 31.6 & 1.9 & 57.6 \\
4 & 5.4 & 38.6 & 3.2 & 28.2 & 0.1 & 12.5 & 1.5 & 45.5 \\
6 & 4.5 & 22.8 & 2.8 & 13.4 & 0.0 & 0.2 & 0.1 & 20.6 \\
8 & 3.6 & 18.9 & 1.6 & 11.1 & -- & -- & 0.0 & 2.2 \\
\bottomrule
\end{tabular}
}
\begin{minipage}{0.95\textwidth}
\footnotesize\emph{Notes:} Empirical size and raw power (\%) under a local alternative with unequal LRV, as $p$ varies. ``--'' indicates that the grouped Hotelling test is undefined because $q\le p$. SHAR-WB and HAR-$F$ use the data-driven $K_j$ with the rank floor $K_j\ge\lceil(p+1)/2\rceil$.
\end{minipage}
\end{table}

\paragraph{Summary.} Three conclusions emerge. First, conventional two-sample procedures are badly misleading under serial dependence, and the consistent HAR-$\chi^2$ test is unusable in the multivariate case. Second, in the scalar case HAR-$F$ and SHAR-WB control size automatically, at the price of a modest oracle-power loss relative to a well-tuned IM. Third, and centrally, in the multivariate case SHAR-WB is the only procedure that controls size across persistence levels without tuning, and as the dimension grows it dominates all feasible implementations of the grouped alternative. It is thus best seen not as uniformly more powerful than IM, but as a tuning-free, dimension-robust alternative preferable in the multivariate, persistent, moderate-sample settings that motivate it.

\section{Empirical Illustrations}
\label{Section:Empirical}
This section revisits two applications. The first re-examines the working-from-home experiment using both scalar and joint tests; the second is a fully reproducible public-data macroeconomic illustration. In both cases the practical message is the same: serial-dependence-robust inference materially changes the conclusions relative to classical tests, and the joint test answers the multivariate question that coordinate-by-coordinate tests cannot.

\subsection{Working-from-home experiment}
\label{Section:Empirical_WFH}
We revisit the randomised experiment of \citet{Bloom_etc2015} at CTrip, following \citet{Hounyo_Lin_2025}, using four performance measures: log weekly calls answered, log calls per minute, log total phone minutes, and an overall performance $z$-score. Because performance evolves with learning and turnover, these series are serially dependent, so the independence assumption underlying classical two-sample tests is untenable. We therefore also test the joint null
\[
    H_0:\ E(Y_{1t})-E(Y_{2t})=0,
\]
where $Y_{jt}$ is the four-dimensional vector of outcomes; this is the object aligned with the multivariate theory.

Table~\ref{tab:wfh_scalar_updated} in Section \ref{Sec:appendix} reports the scalar tests. The classical and Welch tests deliver small $p$-values for the outcomes with strong autocorrelation --- most visibly ``log calls/min'', where both reject at $p=0.000$ --- whereas the serial-dependence-robust HAR-$F$ and SHAR-WB procedures do not ($p=0.14$ each). The grouped IM($8$) benchmark, included for reference, illustrates its sensitivity to the block count: for ``log calls/min'' it rejects ($p=0.03$), diverging from the other robust procedures because eight blocks are short relative to the persistence in this series.

Table~\ref{tab:wfh_joint_updated} reports the joint four-dimensional test. The consistent chi-square approximation rejects strongly ($p=0.001$), but --- exactly as the simulations predict for a joint chi-square test under persistence --- this reflects size distortion rather than evidence. The size-controlled HAR-$F$ and SHAR-WB tests do not reject ($p=0.32$ and $0.44$), and neither does the grouped Hotelling benchmark. The joint conclusion is that, once serial dependence and the cross-outcome covariance are taken into account, the vector of average performance measures does not differ significantly between the WFH and office groups.

\begin{table}[!t]
\centering
\caption{WFH application: joint multivariate test}
\label{tab:wfh_joint_updated}
\begin{tabular}{lrrrrrrrr}
\toprule
Null & Wald & $\chi^2$ $p$ & HAR-$F$ $p$ & SHAR-WB $p$ & IM(6) $p$ & IM(8) $p$ & $K_1$ & $K_2$ \\
\midrule
Four outcomes & 18.62 & 0.001 & 0.322 & 0.290 & 0.444 & 0.234 & 3 & 3 \\
\bottomrule
\end{tabular}

\begin{minipage}{0.95\textwidth}
\footnotesize\emph{Notes:} The null is equality of the four-dimensional mean vector. The table reports the Wald statistic and $p$-values from the chi-square approximation, HAR-$F$, SHAR-WB, and the grouped Hotelling test IM($q$) with $q\in\{6,8\}$; $K_1,K_2$ are the selected numbers of basis functions.
\end{minipage}
\end{table}

\subsection{A public-data macroeconomic illustration}
\label{Section:Empirical_Macro_Public}
The second illustration uses publicly available FRED data. We form a monthly bivariate vector of the civilian unemployment rate (UNRATE) and year-over-year CPI inflation (from CPIAUCSL), and take September 2008 as the break date separating the pre-crisis period from the Global Financial Crisis and its aftermath. The null is the equality of the bivariate mean vector across the two subsamples.

Table~\ref{tab:macro_lb_updated} in Section \ref{Sec:appendix} reports Ljung--Box $Q(10)$ diagnostics; both components are strongly autocorrelated in both subsamples, so classical independent-sample inference is inappropriate.

Table~\ref{tab:macro_joint_updated} reports the componentwise and joint tests. The classical and Welch tests reject equality of means decisively for both unemployment ($p=0.008$ and $0.035$) and inflation (both $p=0.000$). The serial-dependence-robust procedures overturn these conclusions: HAR-$F$ and SHAR-WB return large $p$-values for both components (for example $0.75$ and $0.81$ for unemployment), and the joint bivariate test likewise fails to reject ($p=0.56$ for HAR-$F$, $0.70$ for SHAR-WB, $0.42$ for grouped Hotelling). Once the strong persistence of these series is properly accounted for, the apparent shifts in mean unemployment and inflation across the 2008 break are not statistically distinguishable from sampling variation.

\begin{table}[!t]
\centering
\caption{FRED macro application: scalar and joint tests}
\label{tab:macro_joint_updated}
\begin{tabular}{lrrrrrr}
\toprule
Outcome & MeanDiff & Classical $p$ & Welch $p$ & HAR-$F$ $p$ & SHAR-WB $p$ & IM(8) $p$ \\
\midrule
Unemployment & -0.359 & 0.008 & 0.035 & 0.748 & 0.807 & 0.663 \\
Inflation & 1.419 & 0.000 & 0.000 & 0.356 & 0.497 & 0.189 \\
Joint vector & -- & -- & -- & 0.563 & 0.704 & 0.419 \\
\bottomrule
\end{tabular}

\begin{minipage}{0.95\textwidth}
\footnotesize\emph{Notes:} The monthly vector contains unemployment and year-over-year CPI inflation; the break date is September 2008. Scalar rows report componentwise tests (IM(8) grouped $t$); the joint row reports the bivariate Wald test with HAR-$F$, SHAR-WB, and grouped Hotelling IM(8) $p$-values.
\end{minipage}
\end{table}

Taken together, the two applications deliver the paper's practical message. With persistent time-series observations, classical and Welch tests overstate the evidence against equality of means, sometimes dramatically. The proposed HAR-$F$ and SHAR-WB procedures provide a direct scalar and multivariate alternative that is robust to serial dependence, and the joint version answers the genuinely multivariate question --- whether a \emph{vector} of related outcomes differs across regimes --- that a collection of separate scalar tests cannot.

\section{Conclusion}
\label{Section:Conclusion}
This paper develops robust and theoretically grounded procedures for two-sample inference in the presence of serial dependence and heteroskedasticity. Building on the series HAR framework, we construct series HAR $t$-tests---analogues of the classical and Welch two-sample $t$-tests---that remain valid under general dependence structures. We further develop the series HAR wild bootstrap (SHAR-WB), which uses orthogonal bootstrap basis functions to replicate the dependence structure of the original data. The SHAR-WB test is simple to implement and delivers accurate size control and good power even under strong serial dependence. We further extend all of these procedures from the scalar mean to joint hypotheses on a vector of means: the series HAR Wald statistic $W_{1,\mathrm{HAR}}$ admits an increasing-$K$ chi-square calibration (HAR-$\chi^2$) and a finite-sample-refined fixed-$K$ $F$-approximation (HAR-$F$), complemented by a multivariate series HAR bootstrap. 

Two empirical applications---one examining the effects of working from home, and another using two key U.S. macroeconomic time series sharing a well-documented structural break---demonstrate the practical relevance of our methods. 
Our procedures require no choice of clustering and, in the multivariate case, remain valid and retain power as the number of jointly tested means grows. Overall, the proposed series HAR and SHAR-WB procedures provide a unified, easy-to-implement, and theoretically rigorous framework for reliable two-sample inference with dependent time series.

The relevance of our procedures extends beyond the applications considered here. Event studies compare mean abnormal returns before and after corporate events using $t$-tests \citep{Fama_etc1969}, and \citet{Bertrand_etc2004} show that the naïve i.i.d.-based $t$-test can over-reject severely when data are serially correlated. Our series HAR asymptotic and bootstrap methods deliver valid inference in exactly such settings, where classical $t$-tests fail.

While our series HAR wild bootstrap is developed for two-sample inference with time-series data, extending it to a broader class of testing problems involving dependent data is an important avenue for future research.

\section{Appendix} \label{Sec:appendix}

\subsection{Comparison with \texorpdfstring{\citet{Ibragimov_Mueller2016}}{Ibragimov \& Mueller (2016)}}
Our two-sample tests are closely related to \citet{Ibragimov_Mueller2016} (hereafter IM), who develop a $t$-test for comparing a scalar parameter across two samples. Their approach partitions the observations in each group into $q_1, q_2 \geq 2$ clusters, respectively, and estimates the parameter separately within each cluster. 

Let $\{\hat{\mu}_{1,\ell}\}_{\ell=1}^{q_1}$ and $\{\hat{\mu}_{2,\ell}\}_{\ell=1}^{q_2}$ denote the cluster-level estimates of $\mu_1$ and $\mu_2$. The null hypothesis
\[
    H_0:\Delta=\Delta_0, \qquad \Delta=\mu_1-\mu_2,
\]
is then tested using the two-sample $t$-statistic:
\[
    t_{\text{IM}} = \frac{\bar{\hat{\mu}}_{1}-\bar{\hat{\mu}}_{2}-\Delta_0}{\sqrt{\frac{s_{\hat{\mu}_1}^2}{q_1}+\frac{s_{\hat{\mu}_2}^2}{q_2}}},
\]
where $\bar{\hat{\mu}}_{j} = \frac{1}{q_j} \sum_{\ell=1}^{q_{j}}\hat{\mu}_{j,\ell}$ and $s_{\hat{\mu}_j}^2 = \frac{1}{q_j - 1} \sum_{\ell=1}^{q_{j}}\left( \hat{\mu}_{j,\ell} - \bar{\hat{\mu}}_{j}\right)^2$ for $j=1,2.$ The critical values are obtained from a $t$-distribution with $\min(q_1,q_2)-1$ degrees of freedom.

Their approach builds on \citet{Bakirov1998}, who shows that Student's $t$ critical values with $\min(q_1,q_2)-1$ degrees of freedom yield asymptotically valid tests when $\min(q_1,q_2) > 7$ and for sufficiently small significance levels $\alpha$.

To compare their procedure with ours, consider the two-sample mean problem and equal cluster sizes $n_{j,\ell} = T_j/q_j$ for $\ell=1,...,q_j$. In this case, 
\[
\hat{\mu}_{j,\ell} = \frac{1}{n_{j,\ell}}\sum_{t=(\ell-1)n_{j,\ell}+1}^{\ell n_{j,\ell}} Y_{jt}
\]
and therefore
\[
\bar{\hat{\mu}}_{j} = \frac{1}{q_j} \sum_{\ell=1}^{q_{j}}\hat{\mu}_{j,\ell} =  \bar{Y}_j. 
\]
Thus, the main difference between our procedure (for the scalar case, $p=1$) and that of IM lies in the construction of the variance estimator. IM relies on variability across asymptotically uncorrelated cluster-level estimates, whereas our procedure averages asymptotically unbiased and uncorrelated variance components obtained from orthonormal basis projections of the residuals. 

 Their approach is applicable in a broad range of settings including time-series, spatial, and panel data and is particularly attractive when observations naturally exhibit a grouped structure. In contrast, our paper focuses on time-series inference. However, our framework accommodates multivariate parameters, whereas the procedure of \citet{Ibragimov_Mueller2016} is restricted to scalar parameters. Furthermore, our SHAR-WB procedure can provide improved finite-sample performance, particularly in the presence of strong serial dependence or relatively small sample sizes.

\subsubsection{A multivariate Ibragimov--M\"uller benchmark}\label{app:im-mv}
The simulations in Section~\ref{Section:Simulation} of the main paper benchmark our joint test against a multivariate version of IM. Since the procedure of \citet{Ibragimov_Mueller2016} is restricted to the scalar case, we construct the natural vector analogue, described here in full.

The IM principle is to reduce inference to a small number of approximately independent, approximately Gaussian group estimators and then apply the canonical small-sample Gaussian test to them; we keep this principle intact. Partition each sample into $q_j$ consecutive, non-overlapping blocks, and let $\bar Y_{j,(g)}\in\mathbb{R}^{p}$ be the mean vector of block $g$ in sample $j$. Under the maintained weak-dependence conditions a within-block functional central limit theorem makes each $\bar Y_{j,(g)}$ approximately Gaussian, while consecutive blocks are asymptotically independent, so $\{\bar Y_{j,(g)}\}_{g=1}^{q_j}$ behaves like a sample of $q_j$ independent $\mathbb{R}^{p}$-valued Gaussians centred at $\mu_j$. The canonical test for the common mean of a sample of Gaussian vectors is Hotelling's $T^2$; we therefore reject for large values of
\begin{equation}\label{eq:im_hotelling}
    T^2=\big(\bar M_1-\bar M_2-\Delta_0\big)'\Big(\tfrac{S_1}{q_1}+\tfrac{S_2}{q_2}\Big)^{-1}\big(\bar M_1-\bar M_2-\Delta_0\big),
\end{equation}
where $\bar M_j=q_j^{-1}\sum_{g}\bar Y_{j,(g)}$ and $S_j$ is the sample covariance matrix of the $q_j$ block-mean vectors, and refer $\tfrac{\nu-p+1}{p\nu}\,T^2$ to an $F(p,\,\nu-p+1)$ distribution with $\nu=\min(q_1,q_2)-1$ (here $\Delta_0=0$). Three features make this the natural, and deliberately the most favourable, multivariate counterpart of IM. First, it \emph{specialises exactly} to the scalar two-sample IM $t$-test when $p=1$: there $T^2=t^2$ and $F(1,\nu)=t^2_\nu$, so the two procedures return numerically identical $p$-values. Hotelling's $T^2$ is thus not an ad hoc device but the unique statistic that collapses to IM's own when the parameter is scalar, and it inherits IM's finite-sample rationale: just as the validity of the scalar test rests on the exact behaviour of the $t$-statistic of a few, possibly heterogeneous, independent normals \citep{Bakirov1998,Ibragimov_Mueller2016}, the $F$ calibration in \eqref{eq:im_hotelling} is exact for the idealised sample of independent Gaussian vectors that the blocks approximate. Second, it is the \emph{most generous} implementation available to the grouping approach: against the dense local alternatives considered here it is the invariant, most powerful quadratic form, and it is at least as powerful as coordinatewise (Bonferroni) or maximum-type combinations, so adopting it gives IM its best case rather than a straw man. Third, the requirement $q>p$ is intrinsic rather than a handicap we impose: $S_j$ is a $p\times p$ dispersion estimated from only $q_j$ vectors and is singular whenever $q_j\le p$, leaving just $q-p$ residual degrees of freedom. This bottleneck is a feature of \emph{any} method that compresses a $p$-dimensional problem into a handful of group estimates; as shown below and in Table~\ref{tab:mc_dim_scaling} it is the mechanism through which the grouping approach deteriorates as $p$ grows, and it is precisely what the series HAR statistic avoids, since its effective degrees of freedom $K_1+K_2$ are governed by the sample size rather than by a fixed partition.

\subsection{Derivation of the adjusted degrees of freedom}\label{app:adf}
Define \[
\hat{\Omega}_{1,\mathrm{HAR}} := \sqrt{\frac{T_2}{T_1}} \hat{\Omega}_1 + \sqrt{\frac{T_1}{T_2}} \hat{\Omega}_2 
\]
We begin with the multivariate case. $W_{1,\mathrm{HAR}}$ can be rewritten as
\[
W_{1,\mathrm{HAR}} = \sqrt[4]{T_1 T_2} \left( \bar{Y}_1 - \bar{Y}_2 - \Delta_0\right)^\prime \hat{\Omega}_{1,\mathrm{HAR}}^{-1}\sqrt[4]{T_1 T_2} \left( \bar{Y}_1 - \bar{Y}_2 - \Delta_0\right). 
\]
Let $T_2/T_1 \rightarrow  \rho $. Under $H_0$, we have
\begin{eqnarray} \label{eq:asy normality}
\sqrt[4]{T_1 T_2} \left( \bar{Y}_1 - \bar{Y}_2  - \Delta_0\right) \rightarrow^d N(0,V), \quad V := \rho^{1/2} \Omega_1 + \rho^{-1/2} \Omega_2.    
\end{eqnarray}
Since  
\begin{eqnarray} \label{eq:limit of HAR}
\hat{\Omega}_{1,\mathrm{HAR}} \rightarrow^d \rho^{1/2} \Omega_1^{1/2} \frac{\mathcal{W}(I_p,K_1)}{K_1}\Omega_1^{1/2} + \rho^{-1/2} \Omega_2^{1/2} \frac{\mathcal{W}(I_p,K_2)}{K_2} \Omega_2^{1/2}    
\end{eqnarray}
for fixed $K_1$ and $K_2$, the fixed-$K$ limiting distribution of $W_{1,\mathrm{HAR}}$ is non-pivotal. 

To motivate our $F$-approximation, we introduce the infeasible two-sample Wald statistic
\[
\tilde{W}_{1,\mathrm{HAR}} = \sqrt[4]{T_1 T_2} \left( \bar{Y}_1 - \bar{Y}_2 - \Delta_0\right)^\prime \left( V^{1/2} \frac{\mathcal{W}(I_p,K)}{K} V^{1/2} \right)^{-1}\sqrt[4]{T_1 T_2} \left( \bar{Y}_1 - \bar{Y}_2 - \Delta_0\right).
\]
It follows that
\begin{eqnarray*}
    \frac{K-p+1}{pK} \tilde{W}_{1,\mathrm{HAR}} \rightarrow^d F(p,K-p+1)
\end{eqnarray*}
under $H_0$. We approximate the fixed-$K$ limiting distribution of $  \frac{K-p+1}{pK} W_{1,\mathrm{HAR}}$ by an $F$-distribution with $p$ and $K-p+1$ degrees of freedom. We choose the adjusted degrees of freedom, $K=K_{\text{adf}}$, by matching the first two moments of $\text{tr}\left(\hat{\Omega}_{1,\mathrm{HAR}}\right)$ 
with those of 
\[
\text{tr}\left(V^{1/2} \frac{\mathcal{W}(I_p,K_{\text{adf}})}{K_{\text{adf}}} V^{1/2} \right)
\] 
in the asymptotic sense. It is straightforward from (\ref{eq:asy normality}) and (\ref{eq:limit of HAR}) that the means coincide regardless of the value of $K_{\text{adf}}$ under fixed-$K$ asymptotics. Matching the variances yields
\begin{eqnarray*}
K_{\text{adf}} = \frac{\text{tr}\left[\left(\rho^{1/2} \Omega_1 + \rho^{-1/2} \Omega_2 \right)^2\right]}{  \rho \text{tr}(\Omega_1^2)/K_1 +  \rho^{-1} \text{tr}(\Omega_2^2)/K_2}.
\end{eqnarray*} 
\subsection{Relation to the dependent wild bootstrap}\label{app:dwb}
Our bootstrap procedure is closely related to the dependent wild bootstrap (DWB) method of \citet{Shao2010}. In the DWB framework, the external random variables $\eta_{jt}$ are generated such that $E^\ast (\eta_{jt})=0$, $Var^\ast (\eta_{jt})=1$, and $Cov^\ast (\eta_{jt},\eta_{js})=a((t-s)/M_j)$, where $a(\cdot)$ is a kernel function and $M_j$ is a truncation lag. Building on standard identities from spectral analysis, we can show that SHAR-WB with the proposed basis functions in (\ref{eqn: Basis function 1}) is asymptotically equivalent to a special case of DWB with  
\[
Cov^\ast (\eta_{jt},\eta_{js}) = a\left( \frac{t-s}{M_j} \right) = \frac{1}{2K_j^\ast}\frac{\sin\left( \pi (t-s)/M_j \right)}{\sin\left(\pi (t-s)/(2K_j^\ast M_j) \right)} + O\left(\frac{1}{K_j^\ast}\right) \sim \frac{\sin\left( \pi (t-s)/M_j \right)}{\pi (t-s)/M_j}
\]
for small $|t-s|/T_j$, where the last relation follows from the Taylor approximation near zero \(\sin x \sim x\). This is the Daniell (sinc) kernel with 
\(M_j = T_j/(2K_j^\ast)\).

\subsection{Plug-in quantities for the choice of \texorpdfstring{$K$}{K}}\label{app:ksel}
In the data-driven rule~(\ref{eqn: selection of K}) of the main text, $\lceil\cdot\rceil$ is the ceiling function, $\bar{B}_{j}=\hat{B}_{j}/\hat{\Sigma}_{j}$, and $\hat{B}_{j}=-\tfrac{\pi^{2}}{6}(1-\hat{A}_j)^{-6}\,2\hat{A}_j\hat{\Sigma}_j(1-\hat{A}_j)^{2}$ in the scalar case, with
\[
\hat{A}_{j}=\frac{\sum_{t=2}^{T_j} \hat{u}_{jt}\hat{u}_{j,t-1}}{\sum_{t=1}^{T_j-1} \hat{u}_{jt}^{2}}, \qquad
\hat{\Sigma}_{j}=(1-\hat{A}_{j})^{-2}\Big(\tfrac{1}{T_j-1}\textstyle\sum_{t=2}^{T_j} (\hat{u}_{jt}-\hat{A}_{j}\hat{u}_{j,t-1})^{2}\Big).
\]
\subsection{Mathematical derivations}
As usual in the bootstrap literature, we use $P^{\ast }$\ to denote the
bootstrap probability measure, conditional on the original sample (defined
on a given probability space $\left( \Omega ,\mathcal{F},P\right) $). For
any bootstrap statistic $T_{n}^{\ast }$, we write $T_{n}^{\ast }=o_{P^{\ast
}}\left( 1\right) $, in probability, or $T_{n}^{\ast }\rightarrow ^{P^{\ast }}0$%
, in probability, when for any $\delta >0$, $P^{\ast }\left( \left\vert
T_{n}^{\ast }\right\vert >\delta \right) =o_{P}\left( 1\right) $. We write $%
T_{n}^{\ast }=O_{P^{\ast }}\left( 1\right) $, in probability, when for all $%
\delta >0$\ there exists $M_{\delta }<\infty $\ such that $%
\lim_{n\rightarrow \infty }P\left[ P^{\ast }\left( \left\vert T_{n}^{\ast
}\right\vert >M_{\delta }\right) >\delta \right] =0.$\ By Markov's
inequality, this follows if $E^{\ast }\left\vert T_{n}^{\ast }\right\vert
^{q}=O_{P}\left( 1\right) $ for some $q>0.$ Finally, we write $T_{n}^{\ast
}\rightarrow ^{d^{\ast }}D,$\ in probability, if conditional on a sample
with probability that converges to one, $T_{n}^{\ast }$\ weakly converges to
the distribution $D$\ under $P^{\ast }$, i.e. $E^{\ast }\left( f\left(
T_{n}^{\ast }\right) \right) \rightarrow ^{P}E\left( f\left( D\right)
\right) $\ for all bounded and uniformly continuous functions $f$.

Lemma 1 states that the proposed basis functions for SHAR-WB in Section 4.1 satisfy the zero-mean and orthogonality conditions. Recall that $\psi _{1,\ell}\left( x\right) = \cos \left( 2 \pi \ell x \right)$ and $\psi _{2,\ell}\left( x\right) = \sin \left( 2 \pi \ell x \right)$.

\begin{lemma} \label{Lemma: Basis function} For $\ell=1,...,K,$ $r=1,2$ and $c=1,2$,
\begin{equation*}
(i)\text{ }\frac{1}{T}\sum_{t=1}^{T}\psi _{r,\ell}\left( \frac{t}{T}\right)
=o\left( 1\right) ,\ (ii)\ \frac{1}{T}\sum_{t=1}^{T}\psi _{r,\ell}\left( 
\frac{t}{T}\right) \psi _{c,k}\left( 
\frac{t}{T}\right) = \left\{ 
\begin{array}{l}
1/2 \\ 
o(1)%
\end{array}%
\begin{array}{l}
\text{if } \ell = k \text{ and } r = c, \\ 
\text{if } \ell \neq k \text{ or } r \neq c.%
\end{array}%
\right.
\end{equation*}
\end{lemma}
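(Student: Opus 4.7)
}

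The plan is to reduce everything to the elementary identity
\[
\frac{1}{T}\sum_{t=1}^{T} e^{i \, 2\pi m t/T} = \mathbbm{1}\{m \equiv 0 \pmod{T}\},
\]
which follows from summing the finite geometric series. For part (i), I would take real and imaginary parts of the above with $m=\ell$, and observe that since $\ell$ is in a bounded or slowly growing range relative to $T$, one has $1 \le \ell < T$ eventually, so $\ell \not\equiv 0 \pmod{T}$. Hence both $T^{-1}\sum_t \cos(2\pi \ell t/T)$ and $T^{-1}\sum_t \sin(2\pi \ell t/T)$ are in fact exactly zero for all $T$ large enough, which is trivially $o(1)$.

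For part (ii), the plan is to apply the product-to-sum trigonometric identities to linearize each product $\psi_{r,\ell}(t/T)\psi_{c,k}(t/T)$:
\[
\cos A \cos B = \tfrac{1}{2}[\cos(A-B)+\cos(A+B)], \quad
\sin A \sin B = \tfrac{1}{2}[\cos(A-B)-\cos(A+B)],
\]
\[
\sin A \cos B = \tfrac{1}{2}[\sin(A+B)+\sin(A-B)],
\]
with $A = 2\pi \ell t/T$ and $B = 2\pi k t/T$. This rewrites $T^{-1}\sum_t \psi_{r,\ell}(t/T)\psi_{c,k}(t/T)$ as a fixed linear combination of terms of the form $T^{-1}\sum_t \cos(2\pi(\ell\pm k)t/T)$ or $T^{-1}\sum_t \sin(2\pi(\ell\pm k)t/T)$. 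If $\ell = k$ and $r = c$, one of the two resulting summands is $T^{-1}\sum_t \cos(0) = 1$, contributing the leading $1/2$; the other involves frequency $2(\ell+k) \neq 0$, which vanishes by part (i). If either $\ell \neq k$ or $r \neq c$, both frequencies $\ell - k$ and $\ell + k$ are nonzero integers, still bounded by $2K \ll T$, so again part (i) yields $o(1)$ (in fact exact zero for $T$ large). The mixed case $r \neq c$ produces only sine terms at nonzero frequencies, giving $o(1)$ whether or not $\ell = k$.

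The only mildly delicate point is ensuring that $\ell+k$ does not equal a multiple of $T$ so that the geometric-series identity really gives zero rather than $1$. This is harmless under the regime in which the lemma will be invoked, namely $K/T \to 0$, since then $\ell+k \le 2K < T$ for all sufficiently large $T$. Outside of this bookkeeping, the argument is essentially a discrete analog of the classical $L^2[0,1]$ orthogonality of trigonometric polynomials, so no new technical obstacle is anticipated.
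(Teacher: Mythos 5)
Your proposal is correct, but it takes a genuinely different route from the paper. The paper proves the lemma by treating each normalized sum as a Riemann sum and passing to the corresponding integral, e.g.\ $\frac{1}{T}\sum_{t=1}^{T}\cos(2\pi\ell t/T)=\int_{0}^{1}\cos(2\pi\ell x)\,dx+o(1)$, and then uses the same product-to-sum identities you do to reduce part (ii) to integrals of cosines at frequencies $\ell\pm k$ and $2\ell$. Your argument instead exploits the \emph{exact} discrete orthogonality of the complex exponentials, $\frac{1}{T}\sum_{t=1}^{T}e^{i2\pi m t/T}=\mathbbm{1}\{m\equiv 0 \pmod T\}$, which yields exact zeros (and an exact $1/2$ in the diagonal case) once $2K<T$, rather than $o(1)$ remainders. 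What your approach buys is sharper conclusions with no discretization error to track; what the paper's approach buys is robustness to the choice of basis, since the Riemann-sum argument only uses the smoothness and continuous orthogonality postulated in Assumption \ref{assumption4: external rv} and would survive replacing sines and cosines by other admissible $\psi_{\ell}$. Note that both arguments share the same implicit bookkeeping: the paper's Riemann-sum error is $O(\ell/T)$ because the derivative of $\cos(2\pi\ell x)$ scales with $\ell$, so uniformity over $\ell\le K$ requires $K/T\to 0$ (or fixed $K$) exactly as your aliasing caveat about $\ell+k$ not being a multiple of $T$ does; you are explicit about this where the paper is silent, which is a point in your favor.
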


\begin{lemma} \label{lem:consistency-omegahat}
Suppose that the conditions of Theorem 3 hold. If $ K_j^\ast \rightarrow \infty$ as $T_j \rightarrow \infty$ such that $K_j^\ast/T_j \rightarrow 0$, 
\[
\hat{\Omega}_{\text{boot},T_j} \rightarrow^P \Omega_j
\]
for $j=1,2$.
\end{lemma}


\begin{proof}[Proof of Theorem \protect\ref{Theorem 1}]
Under $H_0$, if $\Omega_1 = \Omega_2 := \Omega$, we have 
\[
\left(\Omega \left(\frac{1}{T_1} + \frac{1}{T_2}\right)\right)^{-1/2}\left(\bar{Y}_1 - \bar{Y}_2 - \Delta_0\right) = \left(\Omega \left(\frac{1}{T_1} + \frac{1}{T_2}\right)\right)^{-1/2}(\bar{u}_1 - \bar{u}_2)\rightarrow^d N(0,I_p).
\]
under Assumption \ref{Assumption 2}. Thus, it suffices to (i) $\hat{\Omega}_{\text{pool}} \rightarrow^d \Omega^{1/2}\frac{\mathcal{W}(I_p,K_1+K_2)}{K_1 + K_2}\Omega^{1/2}$ and (ii) the asymptotic independence of $\bar{u}_1 - \bar{u}_2$ and $\hat{\Omega}_{\text{pool}}$.

Under Assumptions \ref{Assumption 1} and \ref{Assumption 2},
\begin{eqnarray} \label{eqn: eqn1 for Pf of Thm1}
    \frac{1}{\sqrt{T_j}} \sum_{t=1}^{T_j} \phi_{\ell}\left( \frac{t}{T_j} \right) \hat{u}_{jt}
    = \frac{1}{\sqrt{T_j}} \sum_{t=1}^{T_j} \phi_{\ell}\left( \frac{t}{T_j} \right) u_{jt} + o_P(1) \rightarrow^d  \Omega^{1/2}\int_0^1 \phi_\ell(x)dW_{2j}(x).
\end{eqnarray}
The $o_P(1)$ term follows because $\sqrt{T_j}\bar{u}_j=O_P(1)$ and $\frac{1}{T_j} \sum_{t=1}^{T_j} \phi_\ell \left( \frac{t}{T_j} \right) \rightarrow 0$. Moreover,
\begin{eqnarray}\label{eqn: eqn2 for Pf of Thm1}
    Cov\left( \int_0^1 \phi_\ell(x)dW_{2j}(x), \int_0^1 \phi_k(x)dW_{2j}(x) \right) = \int_{0}^{1}\phi_\ell(x)\phi_k(x) dx = 0
\end{eqnarray}
for $\ell \neq k$ and $\ell,k=0,1,...,K_j$, by orthonormality and mean zero condition of the basis functions. Recall that $\phi_0(x) = 1.$ From (\ref{eqn: eqn1 for Pf of Thm1}) and (\ref{eqn: eqn2 for Pf of Thm1}), 
\begin{eqnarray} \label{eqn: eqn3 for Pf of Thm1}
    \hat{\Omega}_j \rightarrow^d \Omega^{1/2} \frac{\mathcal{W}(I_p,K_j)}{K_j} \Omega^{1/2},
\end{eqnarray}
and $\sqrt{T_j} \bar{u}_j$ is asymptotically independent of $\hat{\Omega}_j$. Together with the independence of $\{ u_{1t} \}$ and $\{ u_{2t} \}$, this implies (i) and (ii).
\end{proof}

\begin{proof}[Proof of Theorem \protect\ref{Theorem: Asymptotic Normality}] Since 
\begin{eqnarray*}
W_{1,\mathrm{HAR}} &=& \left(\bar{Y}_{1}-\bar{Y}_{2} - \Delta_0 \right)^\prime \left( \frac{\hat{\Omega}_1}{T_1} + \frac{\hat{\Omega}_2}{T_2} \right)^{-1} \left(\bar{Y}_{1}-\bar{Y}_{2} - \Delta_0 \right) \\
&=& 
\sqrt[4]{T_1 T_2} \left( \bar{u}_1 - \bar{u}_2\right)^\prime \left( \sqrt{\frac{T_2}{T_1}} \hat{\Omega}_1 + \sqrt{\frac{T_1}{T_2}} \hat{\Omega}_2 \right)^{-1}\sqrt[4]{T_1 T_2} \left( \bar{u}_1 - \bar{u}_2\right)
\end{eqnarray*}
under $H_0$ and
\[
\sqrt[4]{T_1 T_2} \left( \bar{u}_1 - \bar{u}_2 \right) \rightarrow^d N \left(0,\rho^{1/2} \Omega_1 + \rho^{-1/2} \Omega_2 \right)
\] 
under Assumption \ref{Assumption 2}, it is sufficient to show
\[
\hat{\Omega}_j \rightarrow^P \Omega_j
\]
for $j=1,2$. The proof follows from straightforward modification of the proof of Lemma \ref{lem:consistency-omegahat} and is therefore omitted for brevity.
\end{proof}

\begin{proof}[Proof of Theorem \protect\ref{Theorem: Bootstrap consistency}]
Since 
\begin{eqnarray*}
    \Omega_{j}^{-1/2}\sqrt{T_j}\left( \bar{Y}_j - \mu_j \right) = \Omega_{j}^{-1/2}\frac{1}{\sqrt{T_j}}\sum_{t=1}^{T_j} u_{jt} \rightarrow^d N \left(0,I_p\right)       
\end{eqnarray*}
under Assumption \ref{Assumption 2}, it suffices to show that
\begin{eqnarray} \label{eqn: bootstrap CLT}
    \Omega_{j}^{-1/2}\frac{1}{\sqrt{T_j}}\sum_{t=1}^{T_j} u_{jt}^\ast \rightarrow^{d^\ast} N\left(0,I_p\right), 
\end{eqnarray}
in probability, for the proof by the consistency of $\hat{\Omega}_{\text{boot},T_j}$ in Lemma \ref{lem:consistency-omegahat} and P\'{o}lya's theorem. 

Using the definition of $u_{jt}^\ast$, (\ref{eqn: bootstrap CLT}) holds if
\begin{eqnarray}
    &&\frac{1}{\sqrt{T_j}} \sum_{t=1}^{T_j} \left(\hat{u}_{jt} -u_{jt} \right) \eta_{jt} \rightarrow^{P^\ast} 0, \quad \text{in probability}, \label{eqn: bootstrapCLT part1}\\
    &&  \Omega_{j}^{-1/2} \frac{1}{\sqrt{T_j}} \sum_{t=1}^{T_j} u_{jt}\eta_{jt}  \rightarrow^{d^\ast} N(0,I_p), \label{eqn: bootstrapCLT part2}
\end{eqnarray}
in probability. For (\ref{eqn: bootstrapCLT part1}), since
\begin{eqnarray*}
\frac{1}{\sqrt{T_j}} \sum_{t=1}^{T_j} \left(\hat{u}_{jt} -u_{jt} \right) \eta_{jt} = \underbrace{\sqrt{T_j}\bar{u}_j}_{=O_P(1)} \underbrace{\frac{1}{T_j} \sum_{t=1}^{T_j} \eta_{jt}}_{=d_1},      
\end{eqnarray*}
it suffices to show that $d_1 = o_{P^\ast}(1)$, in probability. For $d_1$, we have
\begin{eqnarray*}
    d_1 &=& \frac{1}{T_j} \sum_{t=1}^{T_j}\frac{1}{\sqrt{K_j^\ast}} \sum_{\ell=1}^{K_j^\ast} \left[ \psi_{1,\ell}\left( \frac{t}{T_j} \right) v_{j,1\ell} + \psi_{2,\ell}\left( \frac{t}{T_j} \right) v_{j,2\ell} \right].
\end{eqnarray*}
Thus, by Markov's inequality, (\ref{eqn: bootstrapCLT part1}) holds if $E^\ast(d_{1}^2) = o_P(1).$ 
\begin{eqnarray*}
    E^\ast(d_{1}^2) &=& \frac{1}{T_j^2}\sum_{t=1}^{T_j}\sum_{s=1}^{T_j} E^\ast \left(\eta_{jt} \eta_{js} \right) \\
    &=& \frac{1}{T_j^2}\sum_{t=1}^{T_j}\sum_{s=1}^{T_j} \frac{1}{K_j^\ast} \sum_{\ell=1}^{K_j^\ast} \left[\psi_{1,\ell}\left(\frac{t}{T_j} \right)\psi_{1,\ell}\left(\frac{s}{T_j} \right) + \psi_{2,\ell}\left(\frac{t}{T_j} \right)\psi_{2,\ell}\left(\frac{s}{T_j} \right)\right] \\
    &=& o(1),
\end{eqnarray*}
given Lemma \ref{Lemma: Basis function}(i). 

For (\ref{eqn: bootstrapCLT part2}), 
\begin{eqnarray*}
    \Omega_{j}^{-1/2} \frac{1}{\sqrt{T_j}} \sum_{t=1}^{T_j} u_{jt}\eta_{jt} &=& \Omega_{j}^{-1/2} \frac{1}{\sqrt{T_j}} \sum_{t=1}^{T_j} u_{jt} \frac{1}{\sqrt{K_j^\ast}} \sum_{\ell=1}^{K_j^\ast} \sum_{r=1}^2 \psi_{r,\ell} \left( \frac{t}{T_j} \right) v_{j,r\ell} \\
    &=& \sum_{r=1}^2  \frac{1}{\sqrt{K_j^\ast}} \sum_{\ell=1}^{K_j^\ast}  \left( \Omega_{j}^{-1/2} \frac{1}{\sqrt{T_j}} \sum_{t=1}^{T_j}\psi_{r,\ell} \left( \frac{t}{T_j} \right) u_{jt} \right) v_{j,r\ell} \\
    &=& \frac{1}{\sqrt{K_j^\ast}} \sum_{\ell=1}^{K_j^\ast} \xi_{j,1\ell}^\ast + \frac{1}{\sqrt{K_j^\ast}} \sum_{\ell=1}^{K_j^\ast} \xi_{j,2\ell}^\ast,
\end{eqnarray*}
where 
\[
     \xi_{j,r\ell}^\ast = \left(\Omega_{j}^{-1/2} \frac{1}{\sqrt{T_j}} \sum_{t=1}^{T_j} \psi_{r,\ell} \left( \frac{t}{T_j} \right) u_{jt} \right) v_{j,r\ell}.
\]
Note that $\Omega_{j}^{-1/2} \frac{1}{\sqrt{T_j}} \sum_{t=1}^{T_j} \psi_{r,\ell} \left( \frac{t}{T_j} \right) u_{jt}$ is a constant conditional on data and that $v_{j,r\ell}\sim^{iid}(0,1)$, which implies that $\{\xi_{j,r\ell}^\ast\}_{\ell=1}^{K_j^\ast}$ is an independent heterogeneous array across $\ell$ and $\{\xi_{j,1\ell}^\ast\}_{\ell=1}^{K_j^\ast}$ and $\{\xi_{j,2\ell}^\ast\}_{\ell=1}^{K_j^\ast}$ are mutually independent. Thus, we can apply Lyapunov CLT (see e.g., Proposition 2.27 of van der Vaart, 1998) to obtain
\[
\frac{1}{\sqrt{K_j^\ast}} \sum_{\ell=1}^{K_j^\ast} \xi_{j,r\ell}^\ast \rightarrow^{d^\ast} N\left(0,\frac{1}{2}I_p\right),
\]
in probability, for $r=1,2$, which implies
\[
\Omega_{j}^{-1/2} \frac{1}{\sqrt{T_j}} \sum_{t=1}^{T_j} u_{jt}\eta_{jt}  \rightarrow^d N(0,I_p).
\]

First, conditional on data, we have $E^\ast \xi_{j,r\ell}^\ast = 0$ and
\begin{eqnarray*}
    Var^\ast\left( \frac{1}{\sqrt{K_j^\ast}} \sum_{\ell=1}^{K_j^\ast} \xi_{j,r\ell}^\ast \right)&=& \Omega_{j}^{-1} \frac{1}{K_j^\ast} \sum_{\ell=1}^{K_j^\ast} \left( \frac{1}{\sqrt{T_j}} \sum_{t=1}^{T_j}  \psi_{r,\ell} \left(\frac{t}{T_j} \right) u_{jt} \right)\left( \frac{1}{\sqrt{T_j}} \sum_{s=1}^{T_j}  \psi_{r,\ell} \left(\frac{s}{T_j} \right) u_{js} \right)^\prime \\
    &\rightarrow^P&  \frac{1}{2},
\end{eqnarray*}
where we use that $\Omega_{\text{boot},T_j} - \frac{1}{2}\Omega_{T_j} = o_P(1)$ from the proof of Lemma \ref{lem:consistency-omegahat}. Next, we need to check Lyapunov's condition, which requires that for some $\delta>0$
\begin{eqnarray}
    \frac{1}{K_j^{\ast 1+\delta/2}} \sum_{\ell=1}^{K_j^\ast}E^\ast \left\Vert \xi_{j,r\ell}^\ast \right\Vert^{2+\delta} \rightarrow^P 0. 
\end{eqnarray}
We will show that the Lyapunov condition holds for $\delta=2$. We focus on the special case where $p=1$ for simplicity. Given $\delta=2$, we have
\begin{eqnarray*}
   && \frac{1}{K_j^{\ast 1+\delta/2}} \sum_{\ell=1}^{K_j^\ast}E^\ast \left| \xi_{j\ell}^\ast \right|^{2+\delta}\\ 
   &\leq& \Omega_{j}^{-1-\delta/2} \frac{1}{K_j^{\ast 1+\delta/2}} \sum_{\ell=1}^{K_j^\ast} \left\vert \frac{1}{\sqrt{T_j}} \sum_{t=1}^{T_j}\psi_{r,\ell} \left( \frac{t}{T_j} \right) u_{jt} \right\vert^{2+\delta} E^\ast \left| v_{j,r\ell} \right|^{2+\delta} \\
   &=& O\left( \frac{1}{K_j^\ast} \right) \frac{1}{K_j^{\ast }} \sum_{\ell=1}^{K_j^\ast} \left\vert \frac{1}{\sqrt{T_j}} \sum_{t=1}^{T_j}\psi_{r,\ell} \left( \frac{t}{T_j} \right) u_{jt} \right\vert^{4} E^\ast \left| v_{j,r\ell} \right|^{4} \\
   &=& O\left( \frac{1}{K_j^\ast} \right),
\end{eqnarray*}
   because $E^\ast \left| v_{j,r\ell} \right|^{4} \leq M$ and under Assumption \ref{Assumption3: cumulant}
\begin{eqnarray*}
    &&P\left(\frac{1}{K_j^\ast} \sum_{\ell=1}^{K_j^\ast} \left\vert \frac{1}{\sqrt{T_j}} \sum_{t=1}^{T_j}\psi_{r,\ell} \left( \frac{t}{T_j} \right) u_{jt} \right\vert^{4} > \Delta \right) \\
    &\leq& \frac{1}{K_j^\ast} \sum_{\ell=1}^{K_j^\ast}  \frac{1}{T_j^2} \sum_{t=1}^{T_j}\sum_{s=1}^{T_j}\sum_{\tau=1}^{T_j}\sum_{w=1}^{T_j} \psi_{r,\ell} \left( \frac{t}{T_j} \right) \psi_{r,\ell} \left( \frac{s}{T_j} \right) \psi_{r,\ell} \left( \frac{\tau}{T_j} \right) \psi_{r,\ell} \left( \frac{w}{T_j} \right) \left\vert E \left(u_{jt}u_{js}u_{j\tau}u_{jw} \right) \right\vert \\
    &\leq& \frac{1}{K_j^\ast} \sum_{\ell=1}^{K_j^\ast}  \frac{1}{T_j^2} \sum_{t=1}^{T_j}\sum_{s=1}^{T_j}\sum_{\tau=1}^{T_j}\sum_{w=1}^{T_j} \left|E \left(u_{jt}u_{js}u_{j\tau}u_{jw} \right) - E \left(u_{jt}u_{js} \right) E \left(u_{j\tau}u_{jw} \right) -E \left(u_{jt}u_{j\tau}\right) E\left(u_{js}u_{jw} \right)\right. \\
    &&\left.  - E \left(u_{jt}u_{jw} \right) E(u_{js}u_{j\tau}) \right| \\
    && + \frac{1}{K_j^\ast} \sum_{\ell=1}^{K_j^\ast}  \frac{1}{T_j^2} \sum_{t=1}^{T_j}\sum_{s=1}^{T_j}\sum_{\tau=1}^{T_j}\sum_{w=1}^{T_j} \left|  E \left(u_{jt}u_{js} \right) E \left(u_{j\tau}u_{jw} \right) + E \left(u_{jt}u_{j\tau}\right) E\left(u_{js}u_{jw} \right) \right. \\
    && \left.+ E \left(u_{jt}u_{jw} \right) E(u_{js}u_{j\tau}) \right| \\
    &=& O(1).
\end{eqnarray*}
This completes the proof.
\end{proof}

\begin{proof}[Proof of Theorem \protect\ref{Theorem: Bootstrap test}]
By Theorem \ref{Theorem: Bootstrap consistency} and Lemma \ref{lem:consistency-omegahat},
\[
    \left(\bar{Y}_1^\ast - \bar{Y}_2^\ast - \Delta_0\right)^\prime \left( \frac{\hat{\Omega}_{\text{boot},T_1}}{T_1} + \frac{\hat{\Omega}_{\text{boot},T_2}}{T_2}\right)^{-1} \left(\bar{Y}_1^\ast - \bar{Y}_2^\ast - \Delta_0\right) \rightarrow^{d^\ast} \chi^2(p),
\]
in probability. Hence, to prove Theorem \ref{Theorem: Bootstrap test}, it suffices to show $\hat{\Omega}_j^\ast - \hat{\Omega}_{\text{boot},T_j} = o_{P^\ast}(1),$ in probability. As in the proof of Lemma \ref{lem:consistency-omegahat}, we consider the case that $\hat{\Omega}_j^\ast$ and $\hat{\Omega}_{\text{boot},T_j}$ are scalars without loss of generality. 

Note that
\begin{eqnarray}
    \hat{\Omega}_j^\ast - \hat{\Omega}_{\text{boot},T_j} &=& \left(\hat{\Omega}_j^\ast - \hat{\Omega}_j \right) + \left( \hat{\Omega}_j - \hat{\Omega}_{\text{boot},T_j} \right) \notag \\
    &=& \hat{\Omega}_j^\ast - \hat{\Omega}_j  + o_P(1) \label{eqn: Pf of Thm4}
\end{eqnarray}
by consistency of $\hat{\Omega}_j$ in Theorem \ref{Theorem: Asymptotic Normality} and bootstrap consistency of $\hat{\Omega}_{\text{boot},T_j}$ in Lemma \ref{lem:consistency-omegahat}. Thus, we need to show $\hat{\Omega}_j^\ast - \hat{\Omega}_j = o_{P^\ast}(1),$ in probability.

Note that
\begin{eqnarray*}
    &&\hat{\Omega}_j^\ast - \hat{\Omega}_j \\
    &=& \frac{1}{T_j} \sum_{t=1}^{T_j}\sum_{s=1}^{T_j} \frac{1}{K_j} \sum_{\ell=1}^{K_j} \phi_\ell\left( \frac{t}{T_j} \right)\phi_\ell\left( \frac{s}{T_j} \right) \left( \hat{u}_{jt}^\ast\hat{u}_{js}^\ast  - \hat{u}_{jt}\hat{u}_{js} \right) \\
    &=& \frac{1}{T_j} \sum_{t=1}^{T_j}\sum_{s=1}^{T_j} \frac{1}{K_j} \sum_{\ell=1}^{K_j} \phi_\ell\left( \frac{t}{T_j} \right)\phi_\ell\left( \frac{s}{T_j} \right) \left(u_{jt}^\ast u_{js}^\ast- u_{jt} u_{js}  \right) + o_P(1) \\
    &=& \frac{1}{T_j} \sum_{t=1}^{T_j}\sum_{s=1}^{T_j} \frac{1}{K_j} \sum_{\ell=1}^{K_j} \phi_\ell\left( \frac{t}{T_j} \right)\phi_\ell\left( \frac{s}{T_j} \right) \left(\left({u}_{jt} - \bar{u}_j \right) \left(u_{js} - \bar{u}_j \right)\eta_{jt}\eta_{js}- u_{jt} u_{js}  \right) + o_P(1) \\
    &=& \frac{1}{T_j} \sum_{t=1}^{T_j}\sum_{s=1}^{T_j} \frac{1}{K_j} \sum_{\ell=1}^{K_j} \phi_\ell\left( \frac{t}{T_j} \right)\phi_\ell\left( \frac{s}{T_j} \right) \left[u_{jt}u_{js} \left(\eta_{jt}\eta_{js} - 1\right)- 2 \bar{u}_{j} u_{js}\eta_{jt}\eta_{js} + \bar{u}_{j}^2\eta_{jt}\eta_{js}  \right] + o_P(1) \\  
    &=& D_1 - D_2 + D_3 + o_P(1).
\end{eqnarray*}

For $D_1$,
\begin{eqnarray*}
    D_1 &=& \frac{1}{T_j} \sum_{t=1}^{T_j}\sum_{s=1}^{T_j} \frac{1}{K_j} \sum_{\ell=1}^{K_j} \phi_\ell\left( \frac{t}{T_j} \right)\phi_\ell\left( \frac{s}{T_j} \right)u_{jt}u_{js} \left(\eta_{jt}\eta_{js} - 1\right) \\
    &=& \frac{1}{T_j} \sum_{t=1}^{T_j}\sum_{s=1}^{T_j} \frac{1}{K_j} \sum_{\ell=1}^{K_j} \phi_\ell\left( \frac{t}{T_j} \right)\phi_\ell\left( \frac{s}{T_j} \right)u_{jt}u_{js} \left(\eta_{jt}\eta_{js} - E^\ast\left(\eta_{jt}\eta_{js} \right)\right) \\
    &&+ \frac{1}{T_j} \sum_{t=1}^{T_j}\sum_{s=1}^{T_j} \frac{1}{K_j} \sum_{\ell=1}^{K_j} \phi_\ell\left( \frac{t}{T_j} \right)\phi_\ell\left( \frac{s}{T_j} \right)u_{jt}u_{js} \left( E^\ast\left(\eta_{jt}\eta_{js} \right)-1\right) \\
    &=& D_{11} + D_{12}.
\end{eqnarray*}
Since $E^\ast D_{11} =0$, it suffices to show $Var^\ast \left(D_{11}\right) = o_P(1)$ to prove $D_{11} = o_{P^\ast}(1)$. Using Markov's inequality, we can prove this by showing $E\left(Var^\ast \left(D_{11}\right) \right) = o(1).$ Note that
\begin{eqnarray} \label{eqn: Var D11}
    &&E\left(Var^\ast \left(D_{11}\right)\right)  \notag \\
    &=& E\left\{E^\ast \left( \frac{1}{T_j} \sum_{t=1}^{T_j}\sum_{s=1}^{T_j} \frac{1}{K_j} \sum_{\ell=1}^{K_j} \phi_\ell\left( \frac{t}{T_j} \right)\phi_\ell\left( \frac{s}{T_j} \right)u_{jt}u_{js} \left(\eta_{jt}\eta_{js} - E^\ast\left(\eta_{jt}\eta_{js} \right)\right) \right)^2\right\} \notag \\
    &=& \frac{1}{T_j^2} \sum_{t=1}^{T_j}\sum_{s=1}^{T_j}  \sum_{\tau=1}^{T_j}\sum_{w=1}^{T_j}  E\left(u_{jt}u_{js}u_{j\tau}u_{jw}  \right)\frac{1}{K_j^2}\sum_{\ell=1}^{K_j}\sum_{k=1}^{K_j}\phi_\ell\left( \frac{t}{T_j} \right)\phi_\ell\left( \frac{s}{T_j} \right)\phi_k\left( \frac{\tau}{T_j} \right)\phi_k\left( \frac{w}{T_j} \right) \notag \\
    &&\times\left[ E^\ast \left(\eta_{jt}\eta_{js}\eta_{j\tau}\eta_{jw} \right) -  E^\ast \left(\eta_{jt}\eta_{js}\right) E^\ast \left(\eta_{j\tau}\eta_{jw} \right) \right].
\end{eqnarray}

Note that
\begin{eqnarray} \label{eqn: D11 decomposition}
&&E^\ast \left(\eta_{jt}\eta_{js}\eta_{j\tau}\eta_{jw} \right) - E^\ast \left(\eta_{jt}\eta_{js}\right) E^\ast \left(\eta_{j\tau}\eta_{jw} \right) \notag \\ 
&=& \sum_{r=1}^2\frac{1}{K_j^{\ast 2}} \sum_{\ell=1}^{K_j^\ast}  \psi_{r,\ell}\left( \frac{t}{T_j} \right)\psi_{r,\ell}\left( \frac{s}{T_j} \right)\psi_{r,\ell}\left( \frac{\tau}{T_j} \right)\psi_{r,\ell}\left( \frac{w}{T_j} \right) \left(E^\ast\left(v_{j,r\ell}^4 \right)  - 3 \right) \notag \\
&&+\left(\sum_{r=1}^2 \frac{1}{K_j^\ast} \sum_{\ell=1}^{K_j^\ast} \psi_{r,\ell}\left( \frac{t}{T_j} \right) \psi_{r,\ell}\left( \frac{\tau}{T_j} \right)  \right) \left(\sum_{a=1}^2 \frac{1}{K_j^\ast}\sum_{k=1}^{K_j^\ast} \psi_{a,k}\left( \frac{s}{T_j} \right)\psi_{a,k}\left( \frac{w}{T_j} \right) \right) \notag \\
&&+ \left(\sum_{r=1}^2 \frac{1}{K_j^\ast} \sum_{\ell=1}^{K_j^\ast} \psi_{r,\ell}\left( \frac{t}{T_j} \right) \psi_{r,\ell}\left( \frac{w}{T_j} \right)  \right) \left( \sum_{a=1}^2  \frac{1}{K_j^\ast}\sum_{k=1}^{K_j^\ast} \psi_{a,k}\left( \frac{s}{T_j} \right)\psi_{a,k}\left( \frac{\tau}{T_j} \right) \right).
\end{eqnarray}
Plugging (\ref{eqn: D11 decomposition}) into (\ref{eqn: Var D11}), we have
\begin{eqnarray*} 
    &&E\left(Var^\ast \left(D_{11}\right)\right)  \notag \\
    &=& \frac{1}{T_j^2} \sum_{t=1}^{T_j}\sum_{s=1}^{T_j}  \sum_{\tau=1}^{T_j}\sum_{w=1}^{T_j}  E\left(u_{jt}u_{js}u_{j\tau}u_{jw}  \right)\frac{1}{K_j^2}\sum_{\ell=1}^{K_j}\sum_{k=1}^{K_j}\phi_\ell\left( \frac{t}{T_j} \right)\phi_\ell\left( \frac{s}{T_j} \right)\phi_k\left( \frac{\tau}{T_j} \right)\phi_k\left( \frac{w}{T_j} \right) \notag \\
    &&\times\left[ \sum_{r=1}^2\frac{1}{K_j^{\ast 2}} \sum_{\ell=1}^{K_j^\ast}\psi_{r,\ell}\left( \frac{t}{T_j} \right)\psi_{r,\ell}\left( \frac{s}{T_j} \right)\psi_{r,\ell}\left( \frac{\tau}{T_j} \right)\psi_{r,\ell}\left( \frac{w}{T_j} \right) \left(E^\ast\left(v_{j,r\ell}^4 \right)  - 3 \right) \notag  \right.\\
&&+\left(\sum_{r=1}^2 \frac{1}{K_j^\ast} \sum_{\ell=1}^{K_j^\ast} \psi_{r,\ell}\left( \frac{t}{T_j} \right) \psi_{r,\ell}\left( \frac{\tau}{T_j} \right)  \right) \left(\sum_{a=1}^2 \frac{1}{K_j^\ast}\sum_{k=1}^{K_j^\ast} \psi_{a,k}\left( \frac{s}{T_j} \right)\psi_{a,k}\left( \frac{w}{T_j} \right) \right) \notag \\
&& \left. + \left(\sum_{r=1}^2 \frac{1}{K_j^\ast} \sum_{\ell=1}^{K_j^\ast} \psi_{r,\ell}\left( \frac{t}{T_j} \right) \psi_{r,\ell}\left( \frac{w}{T_j} \right)  \right) \left( \sum_{a=1}^2  \frac{1}{K_j^\ast}\sum_{k=1}^{K_j^\ast} \psi_{a,k}\left( \frac{s}{T_j} \right)\psi_{a,k}\left( \frac{\tau}{T_j} \right) \right)\right] \\
&=& d_1 + d_2 + d_3.
\end{eqnarray*}
For $d_1$,
\begin{eqnarray*}
    \left| d_1 \right| &\leq& O\left(\frac{1}{K_j^\ast}\right) \frac{1}{T_j^2} \sum_{t=1}^{T_j}\sum_{s=1}^{T_j}  \sum_{\tau=1}^{T_j}\sum_{w=1}^{T_j} \frac{1}{K_j^2}\sum_{\ell=1}^{K_j}\sum_{k=1}^{K_j} \left| \phi_\ell\left( \frac{t}{T_j} \right)\phi_\ell\left( \frac{s}{T_j} \right)\phi_k\left( \frac{\tau}{T_j} \right)\phi_k\left( \frac{w}{T_j} \right) \right| \\
    &&\times \sum_{r=1}^2  \frac{1}{K_j^\ast} \sum_{\ell=1}^{K_j^\ast} \left| \psi_{r,\ell}\left( \frac{t}{T_j} \right)\psi_{r,\ell}\left( \frac{s}{T_j} \right)\psi_{r,\ell}\left( \frac{\tau}{T_j} \right)\psi_{r,\ell}\left( \frac{w}{T_j} \right) \right| \underbrace{\left|E^\ast\left(v_{j,r\ell}^4 \right)  - 3 \right|}_{<M} \\
    &=& O\left(\frac{1}{K_j^\ast}\right).
\end{eqnarray*}


For $d_2$,
\begin{eqnarray*}
    d_2    &=&  \frac{1}{T_j^2} \sum_{t=1}^{T_j}\sum_{s=1}^{T_j}  \sum_{\tau=1}^{T_j}\sum_{w=1}^{T_j} \left[ E\left(u_{jt}u_{js}u_{j\tau}u_{jw}  \right) -E\left(u_{jt}u_{js}\right) E\left(u_{j\tau}u_{jw}\right) - E\left(u_{jt}u_{j\tau} \right) E\left(u_{js}u_{jw} \right) \right.\\
    && \left. - E\left(u_{jt}u_{jw} \right) E\left(u_{js}u_{j\tau}  \right) \right] \left(\frac{1}{K_j}\sum_{\ell=1}^{K_j}\phi_\ell\left( \frac{t}{T_j} \right)\phi_\ell\left( \frac{s}{T_j} \right) \right)\left( \frac{1}{K_j}\sum_{k=1}^{K_j}\phi_k\left( \frac{\tau}{T_j} \right)\phi_k\left( \frac{w}{T_j} \right)\right) \\
    &&\times \left(\sum_{r=1}^2 \frac{1}{K_j^\ast} \sum_{\ell_1=1}^{K_j^\ast} \psi_{r,\ell_1}\left( \frac{t}{T_j} \right) \psi_{r,\ell_1}\left( \frac{\tau}{T_j} \right)  \right) \left(\sum_{r=1}^2 \frac{1}{K_j^\ast}\sum_{k_1=1}^{K_j^\ast} \psi_{a,k_1}\left( \frac{s}{T_j} \right)\psi_{a,k_1}\left( \frac{w}{T_j} \right) \right) \notag \\
    && + \frac{1}{T_j^2} \sum_{t=1}^{T_j}\sum_{s=1}^{T_j}  \sum_{\tau=1}^{T_j}\sum_{w=1}^{T_j} \left[E\left(u_{jt}u_{js}\right) E\left(u_{j\tau}u_{jw}\right) + E\left(u_{jt}u_{j\tau} \right) E\left(u_{js}u_{jw} \right) +E\left(u_{jt}u_{jw} \right) E\left(u_{js}u_{j\tau}  \right) \right] \\
    && \times \left(\frac{1}{K_j}\sum_{\ell=1}^{K_j}\phi_\ell\left( \frac{t}{T_j} \right)\phi_\ell\left( \frac{s}{T_j} \right) \right)\left( \frac{1}{K_j}\sum_{k=1}^{K_j}\phi_k\left( \frac{\tau}{T_j} \right)\phi_k\left( \frac{w}{T_j} \right)\right)\\
    && \times\left(\sum_{r=1}^2 \frac{1}{K_j^\ast} \sum_{\ell=1}^{K_j^\ast} \psi_{r,\ell}\left( \frac{t}{T_j} \right) \psi_{r,\ell}\left( \frac{\tau}{T_j} \right)  \right) \left(\sum_{a=1}^2 \frac{1}{K_j^\ast}\sum_{k=1}^{K_j^\ast} \psi_{a,k}\left( \frac{s}{T_j} \right)\psi_{a,k}\left( \frac{w}{T_j} \right) \right)\\
    &=& d_{21} + d_{22} + d_{23} + d_{24}.
\end{eqnarray*}
For $d_{21}$, it is easy to show 
\[
d_{21} = O\left(\frac{1}{T_j}\right)
\]
using Assumption \ref{Assumption3: cumulant}. For $d_{22}$,
\begin{eqnarray*}
    d_{22} &=& \sum_{r=1}^2 \sum_{a=1}^2  \frac{1}{K_j^{\ast 2}} \sum_{\ell_1=1}^{K_j^\ast}\sum_{k_1=1}^{K_j^\ast} \left[\frac{1}{T_j} \sum_{t=1}^{T_j}\sum_{s=1}^{T_j}   E\left(u_{jt}u_{js}\right) \left( \frac{1}{K_j}\sum_{\ell=1}^{K_j} \phi_\ell\left( \frac{t}{T_j} \right)\phi_\ell\left( \frac{s}{T_j} \right) \right)  \right. \\
    &&\times \left. \psi_{r,\ell_1}\left( \frac{t}{T_j} \right)\psi_{a,k_1}\left( \frac{s}{T_j} \right) \right]^2. \\ 
\end{eqnarray*}
Since
\begin{eqnarray*}
    &&\left[ \frac{1}{T_j} \sum_{t=1}^{T_j}\sum_{s=1}^{T_j}   E\left(u_{jt}u_{js}\right) \left( \frac{1}{K_j}\sum_{\ell=1}^{K_j} \phi_\ell\left( \frac{t}{T_j} \right)\phi_\ell\left( \frac{s}{T_j} \right) \right) \psi_{r,\ell_1}\left( \frac{t}{T_j} \right)\psi_{a,k_1}\left( \frac{s}{T_j} \right)\right]^2 \\
    &\leq& 2 \left[ \frac{1}{T_j} \sum_{t=1}^{T_j}\sum_{s=1}^{T_j}   E\left(u_{jt}u_{js}\right) \left( \frac{1}{K_j}\sum_{\ell=1}^{K_j} \phi_\ell\left( \frac{t}{T_j} \right)\phi_\ell\left( \frac{s}{T_j} \right) -1\right) \psi_{r,\ell_1}\left( \frac{t}{T_j} \right)\psi_{a,k_1}\left( \frac{s}{T_j} \right)\right]^2 \\
    &&+2 \left[ \frac{1}{T_j} \sum_{t=1}^{T_j}\sum_{s=1}^{T_j}   E\left(u_{jt}u_{js}\right)  \psi_{r,\ell_1}\left( \frac{t}{T_j} \right)\psi_{a,k_1}\left( \frac{s}{T_j} \right)\right]^2,
\end{eqnarray*}
it suffices to show that
\begin{eqnarray*}
     d_{221}&=&\sum_{r=1}^2 \sum_{a=1}^2  \frac{1}{K_j^{\ast 2}} \sum_{\ell_1=1}^{K_j^\ast}\sum_{k_1=1}^{K_j^\ast} \left[ \frac{1}{T_j} \sum_{t=1}^{T_j}\sum_{s=1}^{T_j}   E\left(u_{jt}u_{js}\right) \left( \frac{1}{K_j}\sum_{\ell=1}^{K_j} \phi_\ell\left( \frac{t}{T_j} \right)\phi_\ell\left( \frac{s}{T_j} \right) -1\right)  \right. \\
     && \times \left.\psi_{r,\ell_1}\left( \frac{t}{T_j} \right)\psi_{a,k_1}\left( \frac{s}{T_j} \right)\right]^2 = o(1)
\end{eqnarray*}
and 
\begin{eqnarray*}
     d_{222}= \sum_{r=1}^2 \sum_{a=1}^2  \frac{1}{K_j^{\ast 2}} \sum_{\ell_1=1}^{K_j^\ast}\sum_{k_1=1}^{K_j^\ast} \left[ \frac{1}{T_j} \sum_{t=1}^{T_j}\sum_{s=1}^{T_j}   E\left(u_{jt}u_{js}\right)  \psi_{r,\ell_1}\left( \frac{t}{T_j} \right)\psi_{a,k_1}\left( \frac{s}{T_j} \right) \right]^2 = o(1).
\end{eqnarray*}

For $d_{221}$, we can show $d_{221} = o(1)$ using an argument similar to the one to study the term $B_{11}$ in the proof of Lemma \ref{lem:consistency-omegahat}. 

For $d_{222}$,
\begin{eqnarray*}
    d_{222} &=& \sum_{r=1}^2 \sum_{a=1}^2  \frac{1}{K_j^{\ast 2}} \sum_{\ell_1=1}^{K_j^\ast} \left[ \frac{1}{T_j} \sum_{t=1}^{T_j}\sum_{s=1}^{T_j}   E\left(u_{jt}u_{js}\right)  \psi_{r,\ell_1}\left( \frac{t}{T_j} \right)\psi_{a,\ell_1}\left( \frac{s}{T_j} \right) \right]^2 \\
    &&+ \sum_{r=1}^2 \sum_{a=1}^2  \frac{1}{K_j^{\ast 2}} \sum_{\ell_1=1}^{K_j^\ast}\sum_{k_1\neq \ell_1} \left[ \frac{1}{T_j} \sum_{t=1}^{T_j}\sum_{s=1}^{T_j}   E\left(u_{jt}u_{js}\right)  \psi_{r,\ell_1}\left( \frac{t}{T_j} \right)\psi_{a,k_1}\left( \frac{s}{T_j} \right) \right]^2.
\end{eqnarray*}
It is straightforward to show that the first term is $O(1/K_j^{\ast})$. Suppose that $L_{3T_j}$ satisfies $L_{3T_j} \rightarrow \infty$ as $T_j \rightarrow \infty$ such that $L_{3T_j}/T_j \rightarrow 0$. For the second term, we note that
\begin{eqnarray*}
    &&  \frac{1}{T_j} \sum_{t=1}^{T_j}\sum_{s=1}^{T_j}   E\left(u_{jt}u_{js}\right)  \psi_{r,\ell_1}\left( \frac{t}{T_j} \right)\psi_{a,k_1}\left( \frac{s}{T_j} \right)  \\
    &=& \sum_{h=-L_{3T_j}}^{L_{3T_j}}\frac{1}{T_j} \sum_{1 \leq t,t+h \leq T_j}\Gamma_j(h) \psi_{r,\ell_1}\left( \frac{t}{T_j} \right)\psi_{a,k_1}\left( \frac{t+h}{T_j} \right) \\
    &&+ \sum_{L_{3T_j}<|h|<T_j}\frac{1}{T_j} \sum_{1 \leq t,t+h \leq T_j}\Gamma_j(h) \psi_{r,\ell_1}\left( \frac{t}{T_j} \right)\psi_{a,k_1}\left( \frac{t+h}{T_j} \right) \\
    &=& d_{2221} + d_{2222}.
\end{eqnarray*}
For $d_{2222}$,
\begin{eqnarray*}
    |d_{2222}| &\leq& \frac{1}{L_{3T_j}} \sum_{L_{3T_j}<|h|<T_j} |\Gamma_j(h)| |h|  \frac{1}{T_j} \sum_{1 \leq t,t+h \leq T_j} \left| \psi_{r,\ell_1}\left( \frac{t}{T_j} \right)\psi_{a,k_1}\left( \frac{t+h}{T_j} \right) \right|\\
    &=& o(1).
\end{eqnarray*}
For $d_{2221}$,
\begin{eqnarray*}
    |d_{2221}| &=& \left|\sum_{h=-L_{3T_j}}^{L_{3T_j}} \Gamma_j(h) \frac{1}{T_j} \sum_{1 \leq t,t+h \leq T_j} \psi_{r,\ell_1}\left( \frac{t}{T_j} \right) \left( \psi_{a,k_1}\left( \frac{t}{T_j} \right) + \psi_{a,k_1}^\prime \left( \frac{t+\tilde{h}}{T_j} \right) \frac{h}{T_j} \right) \right| \\
    &=& \sum_{h=-L_{3T_j}}^{L_{3T_j}} | \Gamma_j(h) | \left| \frac{1}{T_j} \sum_{1 \leq t,t+h \leq T_j} \psi_{r,\ell_1}\left( \frac{t}{T_j} \right)\psi_{a,k_1}\left( \frac{t}{T_j} \right) \right| \\
    &&+O\left(\frac{1}{T_j} \right)\sum_{h=-L_{3T_j}}^{L_{3T_j}} |\Gamma_j(h)| |h| \left|\frac{1}{T_j} \sum_{1 \leq t,t+h \leq T_j} \psi_{r,\ell_1}\left( \frac{t}{T_j} \right)\psi_{a,k_1}^\prime \left( \frac{t+\tilde{h}}{T_j} \right) \right| \\
    &=& o(1).
\end{eqnarray*}
For the first term in the second equation, let's focus on the case that $h\geq0$. Since $\frac{L_{3T_j}}{T_j} \rightarrow 0$ and $|h| \leq L_{3T_j}$,
\[
\frac{1}{T_j} \sum_{1 \leq t,t+h \leq T_j} \psi_{r,\ell_1}\left( \frac{t}{T_j} \right)\psi_{a,k_1}\left( \frac{t}{T_j} \right) = \int_{0}^{1-\frac{h}{T_j}} \psi_{r,\ell_1}\left( x \right)\psi_{a,k_1}\left( x\right) dx = o(1)
\]
by Lemma \ref{Lemma: Basis function}.

For $D_3$,
\begin{eqnarray*}
    D_3 &=& \left(\sqrt{T_j} \bar{u}_{j} \right)^2\frac{1}{K_j} \sum_{\ell=1}^{K_j} \left(\frac{1}{T_j} \sum_{t=1}^{T_j}\phi_\ell\left( \frac{t}{T_j} \right)\eta_{jt}\right)\left(\frac{1}{T_j} \sum_{s=1}^{T_j}\phi_\ell\left( \frac{s}{T_j} \right)\eta_{js}\right)\\
    &\leq& \left(\sqrt{T_j} \bar{u}_{j} \right)^2 \underbrace{\sqrt{\frac{1}{K_j} \sum_{\ell=1}^{K_j} \left(\frac{1}{T_j} \sum_{t=1}^{T_j}\phi_\ell\left( \frac{t}{T_j} \right)\eta_{jt}\right)^2}}_{:=e_1 = o_{P^\ast}(1) \text{ in probability}} \sqrt{\frac{1}{K_j} \sum_{\ell=1}^{K_j} \left(\frac{1}{T_j} \sum_{s=1}^{T_j}\phi_\ell\left( \frac{s}{T_j} \right)\eta_{js}\right)^2} \\
    &=& o_{P^\ast} \left(1 \right),
\end{eqnarray*}
in probability. To show $e_1 = o_{P^\ast}(1),$ in probability, it suffices to show that $E\left(E^\ast\left( e_1^2 \right)\right) = o(1)$.  
\begin{eqnarray*}
    E^\ast\left( e_1^2 \right) &=& \frac{1}{K_j}\sum_{\ell=1}^{K_j} \frac{1}{T_j^2}\sum_{t=1}^{T_j}\sum_{s=1}^{T_j} \phi_{\ell}\left( \frac{t}{T_j} \right)\phi_{\ell}\left( \frac{s}{T_j}\right) E^\ast \left(\eta_{jt}\eta_{js} \right) \\
    &=& \frac{1}{K_j K_j^\ast}\sum_{\ell=1}^{K_j}\sum_{\ell_1=1}^{K_j^\ast}  \left[\left(\frac{1}{T_j}\sum_{t=1}^{T_j} \phi_{\ell}\left( \frac{t}{T_j} \right) \psi_{1,\ell_1}\left( \frac{t}{T_j} \right)\right)^2  + \left(\frac{1}{T_j}\sum_{t=1}^{T_j} \phi_{\ell}\left( \frac{t}{T_j} \right)\psi_{2,\ell_1}\left( \frac{t}{T_j} \right)\right)^2 \right]  \\
    &=& e_{11} + e_{12}.
\end{eqnarray*}
Assume $K_j$ is a even number for simplicity. Since 
\[
\phi_{2\ell-1}(x) = \sqrt{2} \psi_{1,\ell}(x), \qquad \phi_{2\ell}(x) = \sqrt{2} \psi_{2,\ell}(x),
\]
\begin{eqnarray*}
    e_{11} &=& \frac{1}{K_j K_j^\ast}\sum_{\ell=1}^{K_j/2}\sum_{\ell_1=1}^{K_j^\ast} \left[\left(\frac{1}{T_j}\sum_{t=1}^{T_j} \phi_{2\ell-1}\left( \frac{t}{T_j} \right) \psi_{1,\ell_1}\left( \frac{t}{T_j} \right)\right)^2 + \left(\frac{1}{T_j}\sum_{t=1}^{T_j} \phi_{2\ell}\left( \frac{t}{T_j} \right) \psi_{1,\ell_1}\left( \frac{t}{T_j} \right)\right)^2 \right] \\
    &=& \frac{2}{K_j K_j^\ast} \sum_{\ell_1=1}^{\min( K_j/2,K_j^\ast)} \left(\frac{1}{T_j}\sum_{t=1}^{T_j} \psi_{1,\ell_1}^2\left( \frac{t}{T_j} \right)  \right)^2 + o(1) \\
    &=& o\left(1 \right)
\end{eqnarray*}
with $K_j, K_j^\ast \rightarrow \infty$, where we use the orthogonality of $\{ \psi_{r,\ell}(x) \}$ in Lemma \ref{Lemma: Basis function}. We can use the same argument to prove $e_{12} = o\left(1\right)$. Thus, we have $E\left(E^\ast\left(e_1^2 \right)\right) =o\left(1\right)$.

For $D_2,$
\begin{eqnarray*}
    D_2 &=& 2\sqrt{T_j}\bar{u}_j \frac{1}{K_j} \sum_{\ell=1}^{K_j}\left(\frac{1}{T_j} \sum_{t=1}^{T_j} \phi_\ell\left( \frac{t}{T_j} \right)\eta_{jt} \right)\left(\frac{1}{\sqrt{T_j}} \sum_{s=1}^{T_j}  \phi_\ell\left( \frac{s}{T_j} \right) u_{js}\eta_{js}\right) \\
    &\leq& 2\sqrt{T_j}\bar{u}_j \underbrace{\sqrt{\frac{1}{K_j} \sum_{\ell=1}^{K_j}\left(\frac{1}{T_j} \sum_{t=1}^{T_j} \phi_\ell\left( \frac{t}{T_j} \right)\eta_{jt} \right)^2}}_{=o_{P^\ast}(1), \text{ in probability, as shown above}} \underbrace{\sqrt{\frac{1}{K_j} \sum_{\ell=1}^{K_j}\left(\frac{1}{\sqrt{T_j}} \sum_{s=1}^{T_j}  \phi_\ell\left( \frac{s}{T_j} \right) u_{js}\eta_{js}\right)^2 }}_{:=d_2 = O_{P^\ast}(1)} \\
    &=& o_{P^\ast}(1),
\end{eqnarray*}
where we have $d_2 = O_{P^\ast}(1)$ because
\begin{eqnarray*}
    E\left( E^\ast\left(d_2^2 \right) \right) 
    &\leq& O(1) \frac{1}{K_j} \sum_{\ell=1}^{K_j}\frac{1}{T_j} \sum_{t=1}^{T_j} \sum_{s=1}^{T_j}  \phi_\ell\left( \frac{t}{T_j} \right)\phi_\ell\left( \frac{s}{T_j} \right) \left \vert E\left(u_{jt}u_{js}\right) \right \vert \\
    &=& O(1).
\end{eqnarray*}
Therefore, 
\[
\hat{\Omega}_j^\ast - \hat{\Omega}_j = o_{P^\ast}(1),
\]
in probability, which completes the proof as shown in (\ref{eqn: Pf of Thm4}).
\end{proof}

\begin{proof}[Proof of Lemma \protect\ref{Lemma: Basis function}]
Consider the case of $r=c=1$. Then, we have $\psi _{1,\ell}\left( x\right) = \cos \left( 2 \pi \ell x \right)$. For (i),
\begin{eqnarray*}
\frac{1}{T}\sum_{t=1}^{T}\psi _{1,\ell}\left( \frac{t}{T}\right)  &=& \int_0^1 \cos(2\pi \ell x) dx + o(1) = o(1).
\end{eqnarray*}%
For (ii), if $\ell=k$,
\begin{eqnarray*}
\frac{1}{T}\sum_{t=1}^{T}\psi^2_{1,\ell}\left(\frac{t}{T}\right) &=& \int_0^1 \cos^2 \left(2 \pi \ell x \right) dx (1+o(1)) \\
&=& \frac{1}{2}\int_0^1 \left( 1 + \cos\left( 4\pi \ell x\right) \right)dx(1+o(1))\\
&=& \frac{1}{2} + o(1).
\end{eqnarray*}%
If $\ell \neq k$
\begin{eqnarray*}
\frac{1}{T}\sum_{t=1}^{T}\psi _{1,\ell}\left(\frac{t}{T}\right) \psi_{1,k}\left(\frac{t}{T}\right)
&=&\int_0^1 \cos(2\pi \ell x) \cos(2\pi k x) dx(1 + o(1))\\
&=& \frac{1}{2} \int_0^1 \left[ \cos(2\pi(\ell-k)x) + \cos(2\pi(\ell+k)x) \right]dx(1+o(1)) \\
&=& o(1).
\end{eqnarray*}%
Using the same procedure, we can also show that (i) and (ii) hold in the other cases.
\end{proof}

\begin{proof}[Proof of Lemma \protect\ref{lem:consistency-omegahat}] Note that $\hat{\Omega}_{\text{boot},T_j} - \Omega_j \rightarrow^P 0$ if and only if $\underline{g}^\prime \hat{\Omega}_{\text{boot},T_j} \underline{g}  - \underline{g}^\prime \Omega_j \underline{g} \rightarrow^P 0 $ for any $p \times 1$ vector $\underline{g}$. Therefore, we assume the case that $\hat{\Omega}_{\text{boot},T_j}$ and $\Omega_j$ are scalars without loss of generality.
Define
\begin{eqnarray*}
\Omega_{T_j} &=& Var\left( \frac{1}{\sqrt{T_j}} \sum_{t=1}^{T_j} u_{jt}\right),\qquad
\Omega_{\text{boot},T_j} = Var^\ast\left( \frac{1}{\sqrt{T_j}} \sum_{t=1}^{T_j} u_{jt}\eta_{jt} \right).
\end{eqnarray*}
$\Omega_{T_j}$ is the finite sample version of the LRV, and $\Omega_{\text{boot},T_j}$ is the infeasible version of the bootstrap LRV by replacing $\hat{u}_{jt}$ in $\hat{\Omega}_{\text{boot},T_j}$ with $u_{jt}$.

The proof consists of two parts: 
\[
\text{(i)} \ \hat{\Omega}_{\text{boot},T_j} - \Omega_{\text{boot},T_j} = o_P(1), \qquad
\text{(ii)} \ \Omega_{\text{boot},T_j} - \Omega_{T_j} = o_P(1).
\]
For part (i), note that%
\begin{eqnarray*}
&&\hat{\Omega}_{\text{boot},T_j}-\Omega_{\text{boot},T_j} \\
&=&\frac{1}{T_j}\sum_{t=1}^{T_j}\sum_{s=1}^{T_j}\frac{1}{K_j^\ast}%
\sum_{\ell=1}^{K_j^\ast} \left[\psi_{1,\ell}\left( \frac{t}{T_j}\right) \psi_{1,\ell}\left(\frac{s}{T_j}\right) + \psi_{2,\ell}\left( \frac{t}{T_j}\right) \psi_{2,\ell}\left(\frac{s}{T_j}\right) \right]\left( \hat{u}_{jt}\hat{u}_{js}-u_{jt}u_{js}\right)  \\
&=&\frac{1}{T_j}\sum_{t=1}^{T_j}\sum_{s=1}^{T_j}\frac{1}{K_j^\ast}%
\sum_{\ell=1}^{K_j^\ast}\left[\psi_{1,\ell}\left( \frac{t}{T_j}\right) \psi_{1,\ell}\left(\frac{s}{T_j}\right) + \psi_{2,\ell}\left( \frac{t}{T_j}\right) \psi_{2,\ell}\left(\frac{s}{T_j}\right) \right]\left(\bar{u}_j \bar{u}_j - u_{jt}\bar{u}_j - \bar{u}_ju_{js} \right) \\
&:=& A_1 + A_2 + A_3.
\end{eqnarray*}%
For $A_1$,
\begin{eqnarray} \label{eqn: A1}
    A_1 = \left( \sqrt{T_j} \bar{u}_j \right)^2\ \frac{1}{K_j^\ast}\sum_{\ell=1}^{K_j^\ast} \sum_{r=1}^{2} \underbrace{\frac{1}{T_j}\sum_{t=1}^{T_j} \psi _{r,\ell}\left( \frac{t}{T_j}\right) }_{=o(1) \text{ by Lemma \ref{Lemma: Basis function}}}\frac{1}{T_j}\sum_{s=1}^{T_j} \psi_{r,\ell}\left( \frac{s}{T_j} \right)= o_P(1).
\end{eqnarray}
Using a similar procedure as in (\ref{eqn: A1}), we can show $A_2 = A_3 = o_P(1)$. 

For part (ii), 
\begin{eqnarray*}
\Omega_{\text{boot},T_j}-\Omega_{T_j}  
    &=& \frac{1}{T_j} \sum_{t=1}^{T_j} \sum_{s=1}^{T_j}  \frac{1}{K_j^\ast} \sum_{\ell=1}^{K_{j}^\ast} \psi_{1,\ell}\left( \frac{t}{T_j} \right)\psi_{1,\ell}\left( \frac{s}{T_j} \right)  u_{jt} u_{js} - \frac{1}{2}\Omega_{T_j} \\
        &&+ \frac{1}{T_j} \sum_{t=1}^{T_j} \sum_{s=1}^{T_j}  \frac{1}{K_j^\ast} \sum_{\ell=1}^{K_{j}^\ast} \psi_{2,\ell}\left( \frac{t}{T_j} \right)\psi_{2,\ell}\left( \frac{s}{T_j} \right)  u_{jt} u_{js} - \frac{1}{2}\Omega_{T_j}\\
        &=& B_1 + B_2.
\end{eqnarray*}
To show $B_1=o_P(1),$ we first decompose $B_1$ into the bias
and variation, and then show that each term converges to zero. 
\begin{eqnarray*}
B_1 &=&  \frac{1%
}{2K_j^\ast}\sum_{\ell=1}^{K_j^\ast} \frac{1}{T_j}\sum_{t=1}^{T_j}\sum_{s=1}^{T_j}\left[2\psi _{1,\ell}\left( \frac{t}{T_j}\right) \psi
_{1,\ell}\left( \frac{s}{T_j}\right) -1\right] Eu_{jt}u_{js} \\
&&+\frac{1}{K_j^\ast}%
\sum_{\ell=1}^{K_j^\ast}\frac{1}{T_j}\sum_{t=1}^{T_j}\sum_{s=1}^{T_j}\psi _{1,\ell}\left( \frac{t}{T_j}\right) \psi _{1,\ell}\left( 
\frac{s}{T_j}\right) \left[ u_{jt}u_{js}-Eu_{jt}u_{js}\right]  \\
&=&B_{11}+B_{12}.
\end{eqnarray*}%
Recall $\Gamma_{j}(t-s) = Eu_{jt}u_{js}^\prime$. Choose $L_{1T_j}$ that satisfies
\begin{equation*}
\frac{T_j}{L_{1T_j}^{3/2}K_j^\ast}+\frac{L_{1T_j}K_j^\ast}{T_j}=o\left( 1\right) 
\end{equation*}%
as $T_j,K_j \rightarrow \infty$ and $K_j/T_j \rightarrow 0.$
\begin{eqnarray}
B_{11} &=&\frac{1}{2K_j^\ast}\sum_{\ell=1}^{K_j^\ast}\sum_{h=-L_{1T_j}}^{L_{1T_j}}%
\left[ \frac{1}{T_j}\sum_{1\leq t,t+h\leq T_j}2\psi _{1,\ell}\left( \frac{t+h}{T_j}\right)
\psi_{1,\ell}\left( \frac{t}{T_j}\right) -1\right] \Gamma_j \left( h\right)  
\notag \\
&&+\frac{1}{2K_j^\ast}\sum_{\ell=1}^{K_j^\ast}\sum_{L_{1T_j}<\left\vert
h\right\vert <T_j}\left[ \frac{1}{T_j}%
\sum_{1\leq t,t+h\leq T_j}2\psi _{1,\ell}\left( \frac{t+h}{T_j}\right) \psi _{1,\ell}\left( 
\frac{t}{T_j}\right) - 1\right] \Gamma_j \left( h\right)   \notag \\
&=&b_{1}+b_{2}.  \label{eqn: G1}
\end{eqnarray}
For $b_2$,
\begin{equation*}
\left\Vert b_{2}\right\Vert \leq O\left( 1\right) \frac{1}{K_j^\ast}%
\sum_{\ell=1}^{K_j^\ast}\sum_{h=-T_j}^{T_j}\left\vert
\Gamma_j \left( h\right) \right\vert \left\vert h\right\vert ^{3}\frac{1}{%
L_{1T_j}^{3}}=O\left( \frac{1}{L_{1T_j}^{3}}\right) =o\left( \frac{K_j^{\ast2}}{%
T_j^{2}}\right). 
\end{equation*}%
For $b_{1}$, 
\begin{eqnarray*}
b_{1} &=&\frac{1}{2K_j^\ast}\sum_{\ell=1}^{K_j^\ast}\sum_{h=-L_{1T_j}}^{L_{1T_j}}%
\left[ \frac{1}{T_j}\sum_{1\leq t,t+h\leq T_j}2\psi _{1,\ell}\left( \frac{t}{T_j}\right)
\left( \psi _{1,\ell}\left( \frac{t+h}{T_j}\right) - \psi _{1,\ell}\left( \frac{t}{T_j}%
\right) \right) \right] \Gamma_j \left( h\right)  \\
&&+\frac{1}{2K_j^\ast}%
\sum_{\ell=1}^{K_j^\ast}\sum_{h=-L_{1T_j}}^{L_{1T_j}}\left[ \frac{1}{T_j}\sum_{1\leq
t,t+h\leq T_j}2\psi_{1,\ell}^2\left( \frac{t}{T_j} \right)-1\right] \Gamma_j \left(h\right)\\
&=& b_{11} + b_{12}.
\end{eqnarray*}%
For $b_{11}$, since
\begin{eqnarray*}
&& \frac{1}{T_j}\sum_{1\leq t,t+h\leq T_j}2\psi _{1,\ell}\left( \frac{t}{T_j}\right)
\left( \psi _{1,\ell}\left( \frac{t+h}{T_j}\right) - \psi _{1,\ell}\left( \frac{t}{T_j}%
\right) \right) \\
&=&\frac{1}{T_j}\sum_{1\leq t,t+h\leq T_j}2\cos\left(2 \pi \ell \frac{t}{T_j}\right)
\left( \cos\left(2 \pi \ell \frac{t+h}{T_j}\right) - \cos\left(2 \pi \ell \frac{t}{T_j}%
\right) \right) \\
&=& \frac{1}{T_j}\sum_{1\leq t,t+h\leq T_j}2\cos\left(2 \pi \ell \frac{t}{T_j}\right) \sin\left(2 \pi \ell \frac{t+\tilde{h}}{T_j}\right)\frac{2\pi\ell h}{T_j} 
\end{eqnarray*}
for some $\tilde{h} \in (0,h)$, we have
\begin{eqnarray*}
    \left\vert b_{11} \right\vert &\leq& O(1) \frac{1}{K_j^\ast}\sum_{\ell=1}^{K_j^\ast}\sum_{h=-L_{1T_j}}^{L_{1T_j}}\left\Vert\Gamma_j \left(h\right) \right\Vert \left | h \right| \frac{\ell}{T_j} = O\left(\frac{K_j^\ast}{T_j} \right).
\end{eqnarray*}
For $b_{12}$,
\begin{eqnarray*}
 \vert b_{12} \vert  &=&\left\vert \frac{1}{2K_j^\ast}%
\sum_{\ell=1}^{K_j^\ast}\sum_{h=-L_{1T_j}}^{L_{1T_j}}\left[ \frac{1}{T_j}\sum_{1\leq
t,t+h\leq T_j}2\cos^2\left( 2\pi \ell \frac{t}{T_j} \right)-1\right] \Gamma_j \left(h\right) \right \vert \\
&\leq & \left \vert \frac{1}{2K_j^\ast}%
\sum_{\ell=1}^{K_j^\ast}\sum_{h=-L_{1T_j}}^{L_{1T_j}}\left[ \frac{1}{T_j}\sum_{1\leq
t,t+h\leq T_j} \left(1 + \cos\left( 4\pi \ell \frac{t}{T_j} \right)\right)-1\right] \Gamma_j \left(h\right) \right\vert \\
&\leq& \left\vert\frac{1}{2K_j^\ast}%
\sum_{\ell=1}^{K_j^\ast}\sum_{h=-L_{1T_j}}^{L_{1T_j}}\left[ \frac{1}{T_j}\sum_{1\leq
t,t+h\leq T_j} 1 -1\right] \Gamma_j \left(h\right) \right\vert\\
&&+ \left\vert\frac{1}{2K_j^\ast}%
\sum_{\ell=1}^{K_j^\ast}\sum_{h=-L_{1T_j}}^{L_{1T_j}}\left[  \frac{1}{T_j}\sum_{1\leq
t,t+h\leq T_j} \cos\left( 4\pi \ell \frac{t}{T_j} \right)\right] \Gamma_j \left(h\right) \right\vert \\
&=& b_{121} + b_{122}.
\end{eqnarray*}    
For $b_{121}$,
\begin{eqnarray*}
b_{121} &\leq& \frac{1}{K_{j}^\ast}\sum_{\ell=1}^{K_j^\ast}\sum_{h=-T_j}^{T_j} \vert \Gamma_j \left( h\right)\vert \vert h \vert O\left(\frac{1}{T_j}\right) = O\left(\frac{1}{T_j}\right).
\end{eqnarray*}
For $b_{122}$, let's focus on the case that $h\geq0$. Since $\frac{L_{1T_j}K_j}{T_j}=o(1)$,
\begin{eqnarray*}
    b_{122} &=& \left\vert\frac{1}{2K_j^\ast}%
\sum_{\ell=1}^{K_j^\ast}\sum_{h=0}^{L_{1T_j}}\left[  \int_0^{1-\frac{h}{T_j}} \cos\left( 4\pi \ell x \right)dx\right] \Gamma_j \left(h\right) \right\vert(1 + o(1)) \\
&\leq& O(1) \left\vert\frac{1}{K_j^\ast}%
\sum_{\ell=1}^{K_j^\ast}\sum_{h=0}^{L_{1T_j}} \sin\left( 4\pi \ell - \frac{4\pi \ell h}{T_j}  \right) \Gamma_j \left(h\right) \right\vert \\
&\leq& O(1) \frac{1}{K_j^\ast}%
\sum_{\ell=1}^{K_j^\ast}\sum_{h=0}^{L_{1T_j}}\left| \frac{4\pi \ell h}{T_j} + \left( \frac{4\pi \ell h}{T_j} \right)^2 \right| \left\vert  \Gamma_j \left(h\right) \right\vert \\
&\leq& O(1) \frac{1}{K_j^\ast}%
\sum_{\ell=1}^{K_j^\ast}\frac{\ell}{T_j}\sum_{h=0}^{L_{1T_j}} \vert \Gamma_j \left(h\right)\vert|h| \\
&=& O\left(\frac{K_j^\ast}{T_j} \right).
\end{eqnarray*}
We have the same result for the case that $h<0$.

Thus, we have
\begin{eqnarray} \label{eqn: B11}
    B_{11} = O\left(\frac{K_j^\ast}{T_j} \right).
\end{eqnarray}
 
For $B_{12},$ since $%
E\left( B_{12}\right) =0,$ it is sufficient to show that $Var\left(
B_{12}\right) =o\left( 1\right).$ Due to Assumption \ref{Assumption3: cumulant}, we have%
\begin{eqnarray*}
Var\left( B_{12}\right)  &=&\frac{1}{K_j^{\ast 2}}\sum_{\ell=1}^{K_j^\ast}\sum_{k =1}^{K_j^\ast}\frac{1}{T_j^{2}}\sum_{t=1}^{T_j}\sum_{s=1}^{T_j}%
\sum_{\tau =1}^{T_j}\sum_{w=1}^{T_j}\psi _{1,\ell}\left( \frac{t}{T_j}\right) \psi
_{1,\ell}\left( \frac{s}{T_j}\right) \psi _{1,k}\left( \frac{\tau }{T_j}\right)
\psi _{1,k}\left( \frac{w}{T_j}\right)  \\
&&\times \left[ E\left( u_{jt}u_{js}u_{j\tau }u_{jw}\right) -E\left(
u_{jt}u_{js}\right) E\left( u_{j\tau }u_{jw}\right) \right]  \\
&=&\frac{1}{K_j^{\ast 2}}\sum_{\ell=1}^{K_j^\ast}\sum_{k =1}^{K_j^\ast}\frac{%
1}{T_j^{2}}\sum_{t=1}^{T_j}\sum_{s=1}^{T_j}\sum_{\tau =1}^{T_j}\sum_{w=1}^{T_j}\psi
_{1,\ell}\left( \frac{t}{T_j}\right) \psi _{1,\ell}\left( \frac{s}{T_j}\right) \psi _{1,k}\left( \frac{\tau }{T_j}\right) \psi _{1,k}\left( \frac{w}{T_j}\right)  \\
&&\times \left[ E\left( u_{jt}u_{j\tau }\right) E\left( u_{js}u_{jw}\right)
+E\left( u_{jt}u_{jw}\right) E\left( u_{js}u_{j\tau }\right) \right] +O\left( 
\frac{1}{T_j}\right)  \\
&=&c_{1}+c_{2}+O\left( \frac{1}{T_j}\right) .
\end{eqnarray*}

Since $c_1$ and $c_2$ are identical, it is sufficient to show $c_1 = o(1)$ to prove $Var\left( B_{12}\right) =o(1)$. For $c_{1},$%
\begin{eqnarray*}
c_{1} &=&\frac{1}{K_j^{\ast 2}}\sum_{\ell=1}^{K_j^\ast}\sum_{k =1}^{K_j^\ast}\left( \frac{1}{T_j}\sum_{t=1}^{T_j}\sum_{\tau =1}^{T_j}\psi _{1,\ell}\left( \frac{t}{%
T_j}\right) \psi _{1,k}\left( \frac{\tau }{T_j}\right) E\left( u_{jt}u_{j\tau
}\right) \right) ^{2}  \notag \\
&=&\underset{=O\left( \frac{1}{K_j^\ast}\right) =o\left( 1\right) \text{ as 
}K_j^\ast\rightarrow \infty }{\underbrace{\frac{1}{K_j^{\ast 2}}%
\sum_{\ell=1}^{K_j^\ast}\left( \frac{1}{T_j}\sum_{t=1}^{T_j}\sum_{\tau =1}^{T_j}\psi
_{1,\ell}\left( \frac{t}{T_j}\right) \psi _{1,\ell}\left( \frac{\tau }{T_j}\right) E\left(
u_{jt}u_{j\tau }\right) \right) ^{2}}}  \notag \\
&&+\frac{1}{K_j^{\ast 2}}\sum_{\ell=1}^{K_j^\ast}\sum_{k\neq \ell}\left( \frac{1%
}{T_j}\sum_{t=1}^{T_j}\sum_{\tau =1}^{T_j}\psi _{1,\ell}\left( \frac{t}{T_j}\right) \psi
_{1,k }\left( \frac{\tau }{T_j}\right) E\left( u_{jt}u_{j\tau }\right) \right)
^{2} \\
&=& c_{11} + c_{12}.
\end{eqnarray*}%
It is straightforward that $c_{11} = O\left( \frac{1}{K_j^\ast} \right)$. Choose $L_{2T_j}$ that satisfies
\[
\frac{L_{2T_j}}{T_j} + \frac{1}{L_{2T_j}} = o(1)
\]
as $T_j$ grows. $c_{12}$ can be written as 
\begin{eqnarray*}
c_{12}&=&\frac{1}{K_j^{\ast 2}}\sum_{\ell=1}^{K_j^\ast}\sum_{k \neq \ell}\left(
\sum_{h=-L_{2T_j}}^{L_{2T_j}}\Gamma_j \left( h\right) \frac{1}{T_j}\sum_{1\leq
t,t+h\leq T}\psi _{1,\ell}\left( \frac{t}{T_j}\right) \psi _{1,k}\left( \frac{t+h%
}{T_j}\right) \right) ^{2} \\
&&+\frac{1}{K_j^{\ast 2}}\sum_{\ell=1}^{K_j^\ast}\sum_{k \neq \ell}\left(
\sum_{L_{2T_j}<\left\vert h\right\vert <T_j}\Gamma_j \left( h\right) \frac{1}{T_j}%
\sum_{1\leq t,t+h\leq T_j}\psi _{1,\ell }\left( \frac{t}{T}\right) \psi
_{1,k}\left( \frac{t+h}{T_j}\right) \right) ^{2} \\
&=&c_{121}+c_{122}.
\end{eqnarray*}%
For $c_{121},$ 
\begin{eqnarray*}
\left\vert c_{121}\right\vert  &\leq &\frac{1}{K_j^{\ast 2}}\sum_{\ell=1}^{K_j^\ast}\sum_{k \neq \ell}\left( \sum_{h=-L_{2T_j}}^{L_{2T_j}}\left\vert \Gamma_j \left(
h\right) \right\vert \left\vert \frac{1}{T_j}\sum_{1\leq t,t+h\leq T_j}\psi
_{1,\ell }\left( \frac{t}{T_j}\right) \left[ \psi _{1,k}\left( \frac{t}{T_j}\right)
+\psi_{1,k}^{\prime }\left( \frac{t_{1}}{T_j}\right) \frac{\left\vert
h\right\vert }{T_j}\right] \right\vert \right) ^{2}  \notag \\
&\leq &\frac{2}{K_j^{\ast 2}}\sum_{\ell=1}^{K_j^\ast}\sum_{k \neq \ell}\left(
\sum_{|h|=0}^{L_{2T_j}}\left\vert \Gamma_j \left( h\right) \right\vert
\left\vert \frac{1}{T_j}\sum_{1\leq t,t+|h|\leq T_j}\psi_{1,\ell}\left( \frac{t}{T_j}%
\right) \psi _{1,k}\left( \frac{t}{T_j}\right) \right\vert \right) ^{2}  \notag
\\
&&+\frac{2}{K_j^{\ast 2}}\sum_{\ell=1}^{K_j^\ast}\sum_{k \neq \ell}\left(
\sum_{|h|=0}^{L_{2T_j}} \left\vert
h\right\vert\left\vert \Gamma_j \left(
h\right) \right\vert \left\vert \frac{1}{T_j^2}\sum_{1\leq t,t+|h|\leq T}\psi
_{1,\ell }\left( \frac{t}{T_j}\right)\psi_{1,k}^{\prime }\left( \frac{t_{1}}{T_j}\right) \right\vert\right) ^{2}
\notag \\
&=& c_{1211} + c_{1212},
\end{eqnarray*}%
where $t_{1}\in \left( t,t+h\right)$. For $c_{1211}$, note that 
\begin{eqnarray*}
&& \sum_{|h|=0}^{L_{2T_j}}\left\vert \Gamma_j \left( h\right) \right\vert
\left\vert \frac{1}{T_j}\sum_{1\leq t,t+|h|\leq T_j}\psi _{1,\ell }\left( \frac{t}{T_j}%
\right) \psi _{1,k}\left( \frac{t}{T}\right) \right\vert  \\
&\leq &\underset{%
=O\left( 1\right) }{\underbrace{\sum_{|h|=0}^{L_{2T_j}}\left\vert \Gamma_j
\left( h\right) \right\vert }}\underset{=o\left( 1\right) \text{ by Lemma %
\ref{Lemma: Basis function}(ii)}}{\underbrace{\left\vert \int_{0}^{1-\frac{|h|}{T_j}}\psi _{1,\ell }\left( x\right) \psi
_{1,k}\left( x\right)dx \right\vert }}(1+o(1)) \\
&=&o\left( 1\right) 
\end{eqnarray*}%
with $L_{2T_j}=o\left( T_j\right)$ for all $k\neq \ell,$ which implies 
\[
c_{1211} = o(1).
\]

For $c_{1212}$, since $\psi _{1,k}^{\prime }\left( x\right) =2\pi k \sin
\left( 2\pi k x\right),$ 
\begin{eqnarray*}
c_{1212}&=&\frac{1}{K_j^{\ast 2}}\sum_{\ell=1}^{K_j^\ast}\sum_{k \neq \ell}\left(\sum_{h=-L_{2T_j}}^{L_{2T_j}} \left\vert h \right\vert\left\vert \Gamma_j \left(
h\right) \right\vert \left\vert \frac{1}{T_j^2}\sum_{1\leq t,t+h\leq T}\psi
_{1,\ell }\left(\frac{t}{T_j}\right)\psi_{1,k}^{\prime }\left( \frac{t_{1}}{T_j}\right) \right\vert \right)^2\\
&=&\frac{1}{K_j^{\ast 2}}\sum_{\ell=1}^{K_j^\ast}\sum_{k \neq \ell} \frac{k^2}{T_j^2}\left(\sum_{h=-L_{2T_j}}^{L_{2T_j}}\left%
\vert h\right\vert \left\vert \Gamma_j \left( h\right) \right\vert \left\vert \frac{2\pi}{T_j}\sum_{1\leq
t,t+h\leq T}\cos \left( 2\pi \ell \frac{t}{T_j}\right)  \sin
\left( \frac{2\pi k t_{1}}{T}\right) \right\vert \right)^2\\
&=& O\left( \frac{K_j^{\ast 2}}{T_j^2}\right). 
\end{eqnarray*}%

For $c_{122},$ we have%
\begin{eqnarray*}
c_{122} &\leq& \frac{1}{K_j^{\ast 2}}\sum_{\ell=1}^{K_j^\ast}\sum_{k \neq \ell}\left(
\sum_{L_{2T_j}<\left\vert h\right\vert <T_j}\Gamma_j \left( h\right) \frac{h}{L_{2T_j}} \frac{1}{T_j}%
\sum_{1\leq t,t+h\leq T_j}\psi _{1,\ell }\left( \frac{t}{T}\right) \psi
_{1,k}\left( \frac{t+h}{T_j}\right) \right) ^{2} \\
&\leq& \frac{1}{L_{2T_j}^2} \frac{1}{K_j^{\ast 2}}\sum_{\ell=1}^{K_j^\ast}\sum_{k \neq \ell}\left(
\sum_{h=-\infty}^{\infty} \left|\Gamma_j \left( h\right) \right| \left| h \right| \left(\frac{1}{T_j}%
\sum_{1\leq t,t+h\leq T_j}\psi _{1,\ell }\left( \frac{t}{T}\right) \psi
_{1,k}\left( \frac{t+h}{T_j}\right)\right) \right) ^{2} \\
&=& O\left(\frac{1}{L_{2T_j}^2}\right) = o(1).
\end{eqnarray*}%
as $L_{2T_j}\rightarrow \infty $ as $T_j\rightarrow \infty.$ 

Combining the results above, we have $c_1 = o(1)$ as $T_j , K_j^\ast \rightarrow \infty$ such that $K_j^\ast/T_j \rightarrow 0.$ Hence,
\[
B_{12} = o(1),
\]
which together with $B_{11}=O\left(\frac{K_j^\ast}{T_j}\right)$ in (\ref{eqn: B11}) implies 
\[
B_1 = o(1).
\]
It remains to show that $B_2 = o(1)$. Recall that
\[
B_2 = \frac{1}{T_j} \sum_{t=1}^{T_j} \sum_{s=1}^{T_j}  \frac{1}{K_j^\ast} \sum_{\ell=1}^{K_{j}^\ast} \psi_{2,\ell}\left( \frac{t}{T_j} \right)\psi_{2,\ell}\left( \frac{s}{T_j} \right)  u_{jt} u_{js} - \frac{1}{2}\Omega_{T_j}
\]
with $\psi_{2,\ell}(x) = \sin(2\pi \ell x)$. Note that $B_2$ is obtained from $B_1$ by replacing $\psi_{1,\ell}(x) = \cos(2\pi \ell x)$ with $\psi_{2,\ell}(x) = \sin(2\pi \ell x)$. The argument establishing $B_1 = o(1)$ carries over verbatim, so we omit the details.

Hence, we have established (i) and (ii), and therefore
\[
\hat{\Omega}_{\text{boot},T_j} \rightarrow^P \Omega_j.
\]
\end{proof}

\subsection{Additional simulation and empirical results}\label{app:extra-results}
This appendix collects results referenced in the main text: the scalar and multivariate size-adjusted power comparisons---which rely on infeasible oracle critical values and are therefore reported here rather than in the main text---the multivariate raw-power comparison, together with the component-wise working-from-home tests and the Ljung--Box diagnostics for the macroeconomic series.

\begin{table}[!t]
\centering
\caption{Multivariate raw power ($p=3$)}
\label{tab:mc_multivar_power}
\begin{tabular}{lllrrrrrrr}
\toprule
$T$ & $\rho$ & $c$ & SB size & HAR-$\chi^2$ & HAR-$F$ & SHAR-WB & IM(6) & IM(8) & IM(10) \\
\midrule
100 & 0.50 & 4.0 & 4.4 & 59.0 & 13.8 & 15.9 & 2.4 & 8.2 & 13.9 \\
100 & 0.50 & 8.0 & 4.4 & 93.4 & 43.0 & 51.1 & 15.0 & 45.0 & 63.2 \\
100 & 0.80 & 4.0 & 5.2 & 61.1 & 8.2 & 6.7 & 1.3 & 7.2 & 14.2 \\
100 & 0.80 & 8.0 & 5.2 & 79.6 & 16.0 & 11.9 & 7.7 & 25.1 & 39.4 \\
200 & 0.50 & 4.0 & 4.9 & 46.0 & 17.0 & 20.6 & 2.9 & 7.4 & 11.6 \\
200 & 0.50 & 8.0 & 4.9 & 91.2 & 62.1 & 65.4 & 15.4 & 40.9 & 57.5 \\
200 & 0.80 & 4.0 & 4.7 & 55.9 & 6.2 & 5.0 & 0.8 & 3.0 & 6.0 \\
200 & 0.80 & 8.0 & 4.7 & 78.0 & 11.9 & 9.5 & 5.1 & 15.2 & 24.8 \\
\bottomrule
\end{tabular}

\begin{minipage}{0.95\textwidth}
\footnotesize\emph{Notes:} Raw rejection frequencies (\%) under $\mu_1-\mu_2=cT^{-1/2}\iota_p$, unequal LRV, $p=3$. ``SB size'' is the SHAR-WB null rejection frequency for the corresponding $(T,\rho)$, reproduced from Table~\ref{tab:mc_multivar_size} to aid interpretation. IM($q$) is the grouped Hotelling test.
\end{minipage}
\end{table}

\begin{table}[!t]
\centering
\caption{Multivariate empirical size for the joint test ($p=3$, equal LRV)}
\label{tab:mc_multivar_size_equal}
\begin{tabular}{llrrrrrr}
\toprule
$T$ & $\rho$ & HAR-$\chi^2$ & HAR-$F$ & SHAR-WB & IM(6) & IM(8) & IM(10) \\
\midrule
  30 & 0.00 & 14.20 & 3.30 & 4.90 & 0.00 & 0.70 & 1.70 \\
  30 & 0.50 & 46.70 & 4.70 & 4.90 & 1.20 & 2.10 & 6.70 \\
  30 & 0.80 & 71.90 & 11.50 & 7.50 & 5.60 & 21.20 & 38.00 \\
  30 & 0.95 & 94.50 & 41.60 & 30.10 & 51.20 & 80.00 & 88.20 \\
  50 & 0.00 & 11.60 & 5.10 & 5.40 & 0.20 & 1.20 & 1.80 \\
  50 & 0.50 & 40.50 & 5.30 & 5.80 & 0.60 & 1.90 & 3.70 \\
  50 & 0.80 & 58.30 & 7.60 & 5.10 & 1.80 & 9.40 & 19.10 \\
  50 & 0.95 & 89.80 & 27.60 & 18.70 & 32.90 & 64.60 & 76.60 \\
  100 & 0.00 & 7.70 & 5.10 & 5.50 & 0.40 & 1.00 & 1.60 \\
  100 & 0.50 & 26.10 & 5.00 & 5.80 & 0.60 & 1.60 & 3.00 \\
  100 & 0.80 & 50.40 & 6.30 & 4.80 & 0.90 & 3.80 & 8.30 \\
  100 & 0.95 & 76.00 & 13.40 & 8.20 & 8.50 & 34.30 & 51.10 \\
  200 & 0.00 & 6.80 & 4.70 & 5.30 & 0.20 & 1.00 & 0.90 \\
  200 & 0.50 & 16.80 & 4.60 & 5.20 & 0.30 & 0.70 & 1.50 \\
  200 & 0.80 & 49.20 & 4.60 & 4.10 & 0.60 & 2.10 & 2.80 \\
  200 & 0.95 & 59.40 & 7.60 & 4.00 & 1.60 & 9.70 & 21.10 \\
  400 & 0.00 & 5.60 & 4.30 & 4.40 & 0.10 & 0.80 & 1.20 \\
  400 & 0.50 & 11.80 & 4.90 & 5.20 & 0.10 & 0.90 & 0.90 \\
  400 & 0.80 & 37.80 & 3.10 & 4.10 & 0.30 & 1.10 & 1.90 \\
  400 & 0.95 & 51.80 & 6.00 & 4.30 & 1.00 & 3.80 & 8.00 \\
\bottomrule
\end{tabular}

\begin{minipage}{0.95\textwidth}
\footnotesize\emph{Notes:} Rejection frequencies (\%) at the $5\%$ level for $H_0:\mu_1-\mu_2=0$, $p=3$, equal LRV ($D_1=D_2=I_p$); companion to the unequal-LRV panel in the main text. IM($q$) is the grouped two-sample Hotelling $T^2$ test, defined only for $q>p$.
\end{minipage}
\end{table}

\begin{table}[!t]
\centering
\caption{Scalar size-adjusted power}
\label{tab:mc_scalar_power}
\begin{tabular}{lllrrrrr}
\toprule
$T$ & $\rho$ & $c$ & HAR-$F$ & SHAR-WB & IM(6) & IM(8) & IM(10) \\
\midrule
100 & 0.50 & 4.0 & 29.3 & 27.8 & 30.3 & 31.7 & 32.4 \\
100 & 0.50 & 8.0 & 75.4 & 70.7 & 82.0 & 84.4 & 86.0 \\
100 & 0.80 & 4.0 & 14.2 & 13.9 & 15.7 & 16.6 & 16.6 \\
100 & 0.80 & 8.0 & 35.5 & 29.9 & 44.8 & 46.8 & 47.4 \\
200 & 0.50 & 4.0 & 30.9 & 27.0 & 30.9 & 32.4 & 33.5 \\
200 & 0.50 & 8.0 & 83.2 & 77.1 & 83.8 & 85.9 & 87.0 \\
200 & 0.80 & 4.0 & 12.6 & 10.6 & 14.5 & 13.5 & 13.8 \\
200 & 0.80 & 8.0 & 34.2 & 27.4 & 42.1 & 40.9 & 43.9 \\
\bottomrule
\end{tabular}

\begin{minipage}{0.95\textwidth}
\footnotesize\emph{Notes:} Size-adjusted rejection frequencies (\%) under the local alternative $\mu_1-\mu_2=cT^{-1/2}$, $p=1$. Size adjustment uses simulated null critical values, so every entry corresponds to a test with exactly $5\%$ size. IM($q$) is the scalar grouped $t$-test.
\end{minipage}
\end{table}

\begin{table}[!t]
\centering
\caption{Multivariate size-adjusted power ($p=3$)}
\label{tab:mc_multivar_power_adj}
\begin{tabular}{lllrrrrr}
\toprule
$T$ & $\rho$ & $c$ & HAR-$F$ & SHAR-WB & IM(6) & IM(8) & IM(10) \\
\midrule
100 & 0.50 & 4.0 & 18.0 & 18.4 & 23.0 & 25.6 & 29.8 \\
100 & 0.50 & 8.0 & 51.4 & 53.6 & 68.3 & 75.5 & 81.3 \\
100 & 0.80 & 4.0 & 6.3 & 6.5 & 11.2 & 10.8 & 9.0 \\
100 & 0.80 & 8.0 & 10.6 & 10.6 & 28.1 & 31.3 & 29.0 \\
200 & 0.50 & 4.0 & 19.4 & 21.1 & 19.9 & 21.5 & 22.3 \\
200 & 0.50 & 8.0 & 67.5 & 66.7 & 65.7 & 72.3 & 73.6 \\
200 & 0.80 & 4.0 & 5.7 & 5.2 & 7.2 & 7.2 & 8.3 \\
200 & 0.80 & 8.0 & 10.8 & 9.9 & 22.3 & 26.8 & 28.9 \\
\bottomrule
\end{tabular}

\begin{minipage}{0.95\textwidth}
\footnotesize\emph{Notes:} Size-adjusted rejection frequencies (\%) under $\mu_1-\mu_2=cT^{-1/2}\iota_p$, unequal LRV, $p=3$, using simulated null critical values.
\end{minipage}
\end{table}

\begin{table}[!t]
\centering
\caption{WFH application: scalar two-sample tests}
\label{tab:wfh_scalar_updated}
\begin{tabular}{lrrrrrr}
\toprule
Outcome & MeanDiff & Classical $p$ & Welch $p$ & HAR-$F$ $p$ & SHAR-WB $p$ & IM(8) $p$ \\
\midrule
Log calls & 0.056 & 0.062 & 0.021 & 0.091 & 0.065 & 0.228 \\
Log calls/min & 0.033 & 0.000 & 0.000 & 0.139 & 0.145 & 0.032 \\
Log phone time & 0.020 & 0.529 & 0.426 & 0.495 & 0.492 & 0.711 \\
Overall z & 0.071 & 0.228 & 0.184 & 0.425 & 0.492 & 0.554 \\
\bottomrule
\end{tabular}

\begin{minipage}{0.95\textwidth}
\footnotesize\emph{Notes:} Two-sided $p$-values. IM(8) is the grouped $t$-test of \citet{Ibragimov_Mueller2016} with eight consecutive blocks. SHAR-WB $p$-values are from the bootstrap distribution of the scalar statistic. Because the treatment series is short, the IM comparison should be read as diagnostic.
\end{minipage}
\end{table}

\begin{table}[!t]
\centering
\caption{FRED macro application: Ljung--Box diagnostics}
\label{tab:macro_lb_updated}
\begin{tabular}{lrrrr}
\toprule
Series & $Q(10)$ pre & $p$ pre & $Q(10)$ post & $p$ post \\
\midrule
Unemployment & 5783.28 & 0.000 & 1225.17 & 0.000 \\
Inflation & 5515.53 & 0.000 & 1180.69 & 0.000 \\
\bottomrule
\end{tabular}

\begin{minipage}{0.95\textwidth}
\footnotesize\emph{Notes:} Ljung--Box $Q(10)$ statistics and $p$-values for each component in the pre- and post-break subsamples.
\end{minipage}
\end{table}

\FloatBarrier

\bibliographystyle{apalike}

\clearpage
\bibliography{two_sample_bib}

\end{document}